\newcommand{\R}{{\mathbb{R}}}
\newcommand{\Ce}{{\mathbb C}}
\newcommand{\N}{{\mathbb{N}}}
\newcommand{\SNR}{\mathsf{SNR}}
\newtheorem{lem}{Lemma}
\newtheorem{cor}{Corollary}
\newcommand{\Ga}{\Gamma}
\newcommand{\Al}{\alpha}
\newcommand{\eps}{\epsilon}
\newcommand{\pa}[1]{\left( #1 \right)}
\newcommand{\pac}[1]{\left\{ {#1} \right\}}
\newcommand{\beq}[2]{\begin{equation}\label{#1} #2 \end{equation}}
\newcommand{\bal}[2]{{\setlength\arraycolsep{2pt}\begin{eqnarray}\label{#1} #2 \end{eqnarray}}}
\newcommand{\beqn}[2]{\begin{equation*}\label{#1} {#2} \end{equation*}}
\newcommand{\baln}[2]{\begin{align*}\label{{#1}} {#2} \end{align}}
\newcommand{\nn}{\nonumber}
\newcommand{\sinewave}[4]{%
		(#1,0+#3) 
		sin  (#1+#2,#4+#3) 
		cos (#1+#2+#2,0+#3) 
		sin  (#1+#2+#2+#2,-#4+#3) 
		cos (#1+#2+#2+#2+#2,0+#3) 
		sin  (#1+#2+#2+#2+#2+#2,#4+#3) 
		cos (#1+#2+#2+#2+#2+#2+#2,0+#3) 
		sin  (#1+#2+#2+#2+#2+#2+#2+#2,-#4+#3) 
		cos (#1+#2+#2+#2+#2+#2+#2+#2+#2,0+#3) 
		sin  (#1+#2+#2+#2+#2+#2+#2+#2+#2+#2,#4+#3) 
		cos (#1+#2+#2+#2+#2+#2+#2+#2+#2+#2+#2,0+#3) 
		sin  (#1+#2+#2+#2+#2+#2+#2+#2+#2+#2+#2+#2,-#4+#3)
		cos (#1+#2+#2+#2+#2+#2+#2+#2+#2+#2+#2+#2+#2,0+#3)
		sin  (#1+#2+#2+#2+#2+#2+#2+#2+#2+#2+#2+#2+#2+#2,#4+#3)
		cos (#1+#2+#2+#2+#2+#2+#2+#2+#2+#2+#2+#2+#2+#2+#2,0+#3)
}
\newcommand{\halfsinewave}[4]{%
		(#1,0+#3) 
		sin  (#1+#2,#4+#3) 
		cos (#1+#2+#2,0+#3) 
		sin  (#1+#2+#2+#2,-#4+#3) 
		cos (#1+#2+#2+#2+#2,0+#3) 
		sin  (#1+#2+#2+#2+#2+#2,#4+#3) 
		cos (#1+#2+#2+#2+#2+#2+#2,0+#3) 
}
\newcommand{\attackwave}[5]{%
		(#1,0+#3) 
		sin  (#1+#2,                    								#3+#4) 
		cos (#1+#2+#2,              								#3+#4-#5-#5)
		cos (#1+#2+#2+#2,        								#3+#4+#5+#5)
		cos (#1+#2+#2+#2+#2,  								#3+#4-#5-#5)
		cos (#1+#2+#2+#2+#2+#2,  							#3+#4)
		cos (#1+#2+#2+#2+#2+#2+#2,  						#3)
		sin  (#1+#2+#2+#2+#2+#2+#2+#2,                    				#3-#4) 
		cos (#1+#2+#2+#2+#2+#2+#2+#2+#2,              				#3-#4+#5+#5)
		cos (#1+#2+#2+#2+#2+#2+#2+#2+#2+#2,        			#3-#4-#5-#5)
		cos (#1+#2+#2+#2+#2+#2+#2+#2+#2+#2+#2,  			#3-#4+#5+#5)
		cos (#1+#2+#2+#2+#2+#2+#2+#2+#2+#2+#2+#2,  			#3-#4)
		cos (#1+#2+#2+#2+#2+#2+#2+#2+#2+#2+#2+#2+#2,  		#3)
		sin  (#1+#2+#2+#2+#2+#2+#2+#2+#2+#2+#2+#2+#2+#2,                    								#3+#4) 
		cos (#1+#2+#2+#2+#2+#2+#2+#2+#2+#2+#2+#2+#2+#2+#2,              								#3+#4-#5-#5)
		cos (#1+#2+#2+#2+#2+#2+#2+#2+#2+#2+#2+#2+#2+#2+#2+#2,        								#3+#4+#5+#5)
		cos (#1+#2+#2+#2+#2+#2+#2+#2+#2+#2+#2+#2+#2+#2+#2+#2+#2,  								#3+#4-#5-#5)
		cos (#1+#2+#2+#2+#2+#2+#2+#2+#2+#2+#2+#2+#2+#2+#2+#2+#2+#2,  							#3+#4)
		cos (#1+#2+#2+#2+#2+#2+#2+#2+#2+#2+#2+#2+#2+#2+#2+#2+#2+#2+#2,  							#3)
}
\def\maketitle{\newpage \thispagestyle{titlepage}\par
  \begingroup \lineskip = \z@\null \vskip -13.5pt\relax 
  \parindent\z@ {\hyphenpenalty\@M
    {\titlefont \@title \par
    \global\firstfoot
    \global\runningfoot
  }}
  \global\@firstpg\the\c@page
      {\vskip 13.5pt\relax \normalsize \authorfont 
    \begingroup \addtolength{\baselineskip}{2pt}
    \@author\par \vskip -2pt 
    \endgroup }
      {
    \baselineskip 17pt\relax
    \hbox{\vrule height .2pt width \@acmWidth}
      }
      \vskip 8.5pt \footnotesize \box\@abstract \vskip 4pt\relax 
         {\def\and{\unskip\/{\rm ; }}
}\par\vskip 4pt\relax
         \ifx\@acmformat\@empty\else
             \footnotesize \hsize \@acmWidth \parindent 0pt \noindent
             \vskip 4\p@
             \noindent  {\bf ACM Reference Format:}\\[2pt]
             \@acmformat\vskip 0.5\p@
             \par\fi%
         {\baselineskip 14pt\relax
           \@abstractbottom
         }
         \vskip 23pt\relax
         \endgroup
         \let\maketitle\relax
         \gdef\@categories{}}
\newbox\@abstract
\newbox\@terms
\newbox\@keywords
\def\abstract{\global\setbox\@abstract=\vbox\bgroup \everypar{}
  \footnotesize \hsize \@acmWidth \parindent 10pt \noindent 
  \rule{0pt}{10pt}\ignorespaces}
\def\endabstract{\egroup}
\title{Attack Resilience and Recovery using Physical Challenge Response Authentication for Active Sensors Under Integrity Attacks}
\author{Yasser Shoukry
\affil{University of California at Los Angeles, USA}
Paul Martin
\affil{University of California at Los Angeles, USA}
Yair Yona
\affil{University of California at Los Angeles, USA}
Suhas Diggavi
\affil{University of California at Los Angeles, USA}
Mani Srivastava
\affil{University of California at Los Angeles, USA}
}
\begin{abstract}
Embedded sensing systems are pervasively used in life- and security-critical  systems such as those found in airplanes, automobiles, and healthcare. 
Traditional security mechanisms for these sensors focus on data encryption and other post-processing techniques, but the sensors themselves often remain vulnerable to attacks in the  physical/analog domain.  If an adversary manipulates a physical/analog signal prior to digitization, no amount of digital security mechanisms after the fact can help.  Fortunately, nature imposes fundamental constraints on how these analog signals can behave. This work presents PyCRA, a physical challenge-response authentication scheme designed to protect active sensing systems against \emph{physical} attacks occurring in the analog domain. PyCRA provides security for active sensors by continually \emph{challenging} the surrounding environment via random but deliberate physical probes.  By analyzing the responses to these probes, and by using the fact that the adversary cannot change the underlying laws of physics, we provide an authentication mechanism that not only detects malicious attacks but provides resilience against them.  We demonstrate the effectiveness of PyCRA in detecting and mitigating attacks through several case studies using two sensing systems:  (1) magnetic sensors like those found on gear and wheel speed sensors in robotics and automotive, and (2) commercial Radio Frequency Identification (RFID) tags used in many security-critical applications. Finally, we outline methods and theoretical proofs for further enhancing the resilience of PyCRA to active attacks by means of a \emph{confusion phase}---a period of low signal to noise ratio that makes it more difficult for an attacker to correctly identify and respond to PyCRA's physical challenges. In doing so, we evaluate both the robustness and the limitations of the PyCRA security scheme, concluding by outlining practical considerations as well as further applications for the proposed authentication mechanism.
\end{abstract}
\keywords{Keywords-Embedded Security; Active sensors; Challenge-response authentication; Spoofing attacks; Physical attacks}
\begin{document}

\maketitle
\section{Introduction}
\label{sec:introduction}

Recent decades have witnessed a proliferation in embedded sensors for observing a variety of physical phenomena.  Increased use of these sensors in security- and life-critical applications has been accompanied by a corresponding increase in attacks targeting sensing software, hardware, and even physical, analog signals themselves. While considerable research has explored sensor security from a system-level perspective---network redundancy, sensor fusion, ... etc---sensors themselves remain largely vulnerable to attacks targeting analog signals prior to digitization. This vulnerability can lead to catastrophic failures when a malicious third party attempts to spoof the sensor \cite{ghosttalk,Willams_SCADA,Sastry_HOTSEC08,YasserABS,AttackGryroUsenix}.

Several \emph{system-level} sensor security schemes have been proposed in the context of power grids.  For example, Dorfler et al. have explored distributed cyber-physical attack detection in the context of power networks \cite{Bullo_TAC}. Similar ideas for providing system-level security in smart grids can be found in \cite{KimPowerAttack,KosutPowerAttack,SandbergPowerAttack,LiuPowerAttack,KalleGrid}. 
Security schemes in this vein include, among others, state-space and control-theoretic approaches for detecting anomalous system behavior \cite{Hamza_TAC,Yasser_SMT,Bullo_TAC}.  One idea common to these efforts is that an inherent security mechanism and robustness can be found in the physics governing the dynamics of the \emph{system} as a whole.  For example, a mismatch between the rate of change in a vehicle's location as reported by GPS and by the odometer sensor may indicate that one of these two sensors is either faulty or under attack.

A complementary security mechanism can be found in the physics governing the \emph{sensor} itself.  If a sensor observes an analog signal that appears to violate the physics governing the sensing dynamics, the signal itself may be under attack, necessitating security mechanisms at the analog signal level.  To reduce sensor-level vulnerabilities, engineers often place sensors in secure or remote physical locations to preclude direct physical contact with the sensing hardware. Additionally, the phenomenon being sensed is often difficult to access, whether prohibitively far away or surrounded by protective material. In such scenarios, adversaries have access only to the analog signal prior to it reaching the sensor, and their attack must be carried out without direct access to any hardware in the entire sensing path, from source to sink. Even with these countermeasures in place, an adversary can still attack sensors by manipulating the physical signals before their transduction and subsequent digitization  \cite{ghosttalk,YasserABS}. Robust countermeasures for such attacks must necessarily be carried out at the physical level as well---once these signals have been sampled and digitized, no amount of post-processing can repair the compromised sensor data.

Broadly speaking, sensors can be divided into two categories: passive (those that sense pre-existing physical signals) and active (those that perform some action to evoke and measure a physical response from some measurable entity).  Examples of passive sensors include temperature, humidity, and ambient light, while active sensors include ultrasound, laser scanners (LIDAR), and radar. 
Passive sensors are largely na\"ive listening devices--they will blindly relay information to higher levels of software without regard for the integrity of that information.  Digital filtering and other post-processing techniques can be used to remove noise from passive sensors, but they remain unable to combat attacks at the physical layer in any meaningful way.  On the other hand, active sensors introduce the possibility for more advanced security measures. PyCRA is, at its core, a method of ensuring the trustworthiness of information obtained by active sensors by comparing their responses to a series of physical queries or challenges. The driving concept behind PyCRA is that, by stimulating  the environment with a randomized signal and measuring the response, we can ensure that the signal measured by the sensor is in accordance with the underlying sensing physics. The randomization in the  stimulating signal is known to the active sensor but unknown to the adversary \footnote{Note that the randomness is purely private and there is no exchange/communication of it is needed.}.
This randomized stimulation and subsequent behavioral analysis---the physical challenge-response authentication---is the main contribution of this work.  

We further extend the resilience of PyCRA against passive attacks by means of a \emph{confusion phase}---a period of low signal to noise ratio in which the ability to detect and respond to a physical challenge is made more difficult for any attacker. This additional phase leverages theoretical guarantees from the literature on point change detection, in which one party (the attacker, in our case) attempts to detect the point at which a signal randomly changes amplitudes in the presence of noise. An intelligent attacker could, upon sensing a physical challenge, attempt to respond in a timely manner by spoofing the measured, physical signal.  By increasing the time required to detect each challenge, the confusion phase provides theoretical guarantees for PyCRA's resilience to active sensor attacks. 

We demonstrate the effectiveness of PyCRA for three exemplary cases: physical attack detection for magnetic encoders, physical attack resilience for magnetic encoders, and passive eavesdropping detection for RFID readers. Magnetic encoders are used in a wide array of commercial and industrial applications and are representative of a large class of inductive active sensors. We demonstrate not only how active spoofing attacks can be detected for these inductive sensors but also how the effects of these attacks can be mitigated. Eavesdropping detection on RFID readers serves to illustrate an extension of PyCRA to enable detection of \emph{passive} attacks. We believe that the methods demonstrated in this work can be applied to a broad array of active sensors beyond those studied directly in this work, including ultrasound, optical sensors, active radar, and more.

\begin{figure*}
\begin{center}
	\includegraphics[width=0.65\textwidth]{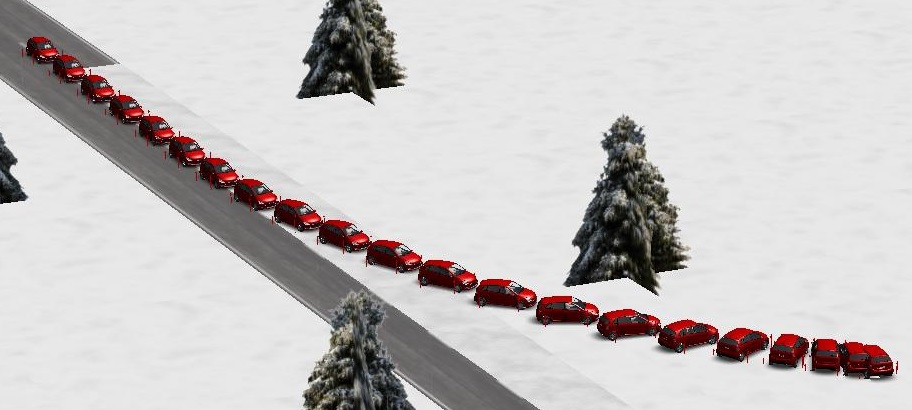}
\end{center}
\caption{The consequence of applying an ABS sensor spoofing attack while braking over ice. This simulation shows the position of the attacked car over multiple time instances.}
\label{fig:carsim}
\vspace{-3mm}
\end{figure*}

\subsection{A Motivating Example}
In order to illustrate the potential consequences of physical spoofing attacks on individual sensors, we will look at the example of the Antilock Braking System found in modern cars. The sensors used in these systems are responsible for measuring the angular rate of rotation for individual wheels in order to detect when a wheel is skidding.  When a car begins to skid, the ABS prevents the wheels from locking up and thus gives the driver improved control over the vehicle in otherwise dangerous situations.  If one or more of these sensors is compromised, the ABS can be tricked into thinking that the car is skidding when it is not or that it is operating nominally when in fact the car has entered a potentially life-threatening skid.  Using a commercial vehicle simulation program called CarSim \cite{carsim}, we can visualize the effects of compromising the ABS system of a car. Figure \ref{fig:carsim} shows the result of an attack in a series of time-lapsed images.  In this simulation, the car is proceeding in a straight line when it encounters a patch of ice and begins to apply the brakes.  The vehicle begins to skid at some point, but the driver retains control thanks to the ABS.  Shortly after the brakes are applied, the right, rear ABS sensor is spoofed such that it reports a slower speed, allowing the car to skid uncontrollably off the road. Work reported in \cite{YasserABS} has shown a physical implementation of such an attack.

In this paper we use ABS spoofing as a motivating example for the development of PyCRA, but it is important to note that securing ABS sensors with physical challenge-response authentication is merely a specific application of a broader security scheme---periodic injections of known stimuli into a physical system allow a sensor to monitor the trustworthiness of the perceived environment  by comparing observed behavior to ideal or predicted \emph{responses}.

\subsection{Contributions of PyCRA }
In summary, the contributions described in this paper are multi-fold:\vspace{-2mm}
\begin{itemize}
\item We present a generalizable physical challenge-response authentication scheme for active sensing subsystems.
\item We extend the basic concept of  physical challenge-response authentication for detecting the presence of passive attacks and providing resilience against active physical attacks.
\item We extend PyCRA with a novel security mechanism known as the \emph{confusion phase}, which limits the attacker's capability to counter-measure the physical challenge-response authentication scheme.
\item We provide rigorous mathematical results describing how the \emph{confusion phase} enhances the performance of PyCRA while adding fundamental limitations to the capabilities of active physical attacks.
\item We demonstrate the effectiveness of PyCRA, our implementation of physical challenge-response authentication, against several different attack types with three exemplary applications: (1) detection of active attacks on magnetic encoders, (2) resilience against active attacks on  magnetic encoders, and (3) detecting passive eavesdropping attacks on RFID readers.
\end{itemize}
The rest of this paper is organized as follows. Section \ref{sec:attacker_model} outlines the  attacker model. Section \ref{sec:pycra} describes basic operation of the PyCRA authentication scheme for detecting active attacks. Sections \ref{sec:ext_resilience} and~\ref{sec:sniffing} extend PyCRA to other applications, namely providing resilience against active attacks and detecting passive attacks. In order to enhance more the performance of PyCRA and increase its security, we introduce a novel design mechanism to PyCRA named the \emph{confusion phase}. Details of this novel mechanism along with theoretical analysis of security guarantees provided by this mechanism is detailed in Section~\ref{sec:theory} and Appendix~\ref{sec:proofodDetDelayDecay}.
Sections \ref{sec:absattacks}, \ref{sec:absattacks2} and \ref{sec:resultsRFID} are devoted to the results of three case studies: Section \ref{sec:absattacks} discusses attack detection for magnetic encoders; Section \ref{sec:absattacks2} describes how PyCRA authentication provides resilience against physical attacks on magnetic encoders; and Section \ref{sec:resultsRFID} shows the results of extending PyCRA to the detection of passive eavesdropping attacks on RFID readers.  Finally, we offer a discussion and  concluding thoughts in Sections \ref{sec:discussion} and \ref{sec:conclusion}.

A preliminary version of this paper appeared in~\cite{ShoukryPyCRA}, providing an explanation of only the physical challenge authentication mechanism itself. In this paper, we discuss the details of extending PyCRA to counteract \emph{passive} physical attacks and to provide \emph{resilience} against active physical attacks (Sections~\ref{sec:ext_resilience} and~\ref{sec:sniffing}) along with more experimental results demonstrating the performance of these extensions (Section~\ref{sec:absattacks2}). Finally, this paper introduces in detail the notion of a~\emph{confusion phase}, as briefly mentioned in~\cite{ShoukryPyCRA} along with the underlying mathematics shown in Section~\ref{sec:theory} and Appendix~\ref{sec:proofodDetDelayDecay}.


\section{Attacker Model}
\label{sec:attacker_model}

\begin{figure}
\centering
{
	\includegraphics[width=0.75\columnwidth]{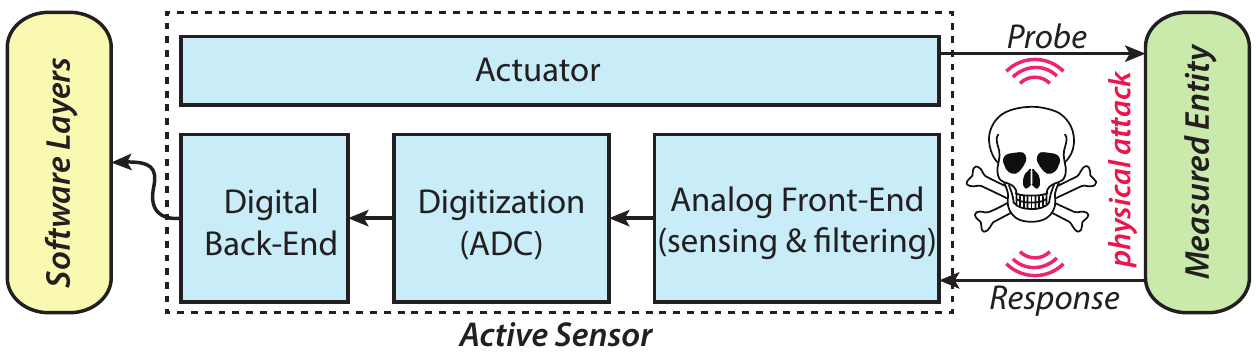}
}
\caption{\label{fig:sensor}
A typical active sensor architecture. The actuator generates an analog signal (energy) which is reflected by the measured entity back to the sensor. The received analog signal is captured and processed by the analog front-end. The signal is then converted to a digital format which is processed once more (by the digital back-end) before being sent to higher level software layers.}
\vspace{-4mm}
\end{figure}

Before describing mechanisms by which we can detect and prevent sensor attacks at the physical layer, we must differentiate between two broad categories of sensors---namely passive and active sensors---and define what we mean by a physical attack.

\subsection{Passive vs. Active Sensors}

Sensors can be broadly classified as either passive or active based on the source of energy being sensed. Passive sensors measure ambient energy. For example, temperature sensors like those found in thermostats are considered passive, because they measure heat energy in the ambient environment. By contrast, active sensors probe some physical entity with self-generated energy as shown in Figure \ref{fig:sensor}. This energy is partially reflected back to the sensor where it is measured and used to infer properties about some physical phenomenon. Examples of active sensors include ultrasonic range finders (used in robotics), optical and magnetic encoders (used in automotive vehicles, industrial plants, \& chemical refineries), radar, and even radio-frequency identification (RFID) systems. In RFID, a reader is used to generate electromagnetic waves which are then used by wireless tags to transfer back their unique identifier to the reader.

In this paper, we focus on providing security for active sensors. In particular, we leverage an active sensor's ability to emit energy in order to 1) provide detection of active attackers trying to spoof the sensor, 2) mitigate the effects of active spoofing attacks and 3) detect passive eavesdropping attacks attempting to listen to the information received by the sensor. In the following subsections, we define what we mean by physical attacks on active sensors and outline the assumed properties and limitations of a potential adversary.

\subsection{Defining Physical Attacks}

In this paper, a physical attack refers to a malicious alteration of a physical, analog signal (e.g., magnetic waves, acoustic waves, visible waves) prior to transduction and digitization by a sensor, as shown in Figure \ref{fig:sensor}.

\subsection{Adversarial Goals}
The adversary considered in this work has a number of goals related to misinforming and misleading sensors.  These goals are summarized below.

\begin{enumerate}
\item[\textbf{G1}] \emph{\underline{Concealment}: An attacker does not want the presence of his or her attack to be known.} 
\end{enumerate}
\noindent If a sensor attack can be easily detected, preventative countermeasures like hardware redundancy and resilience at the system-level can often be used to mitigate the damage done by the attack \cite{Yasser_SMT,Bullo_TAC}.  

\begin{enumerate}
\item[\textbf{G2}] \emph{\underline{Signal Injection}: An attacker will attempt to trick the sensor into thinking that a malicious, injected signal is the true physical signal. }
\end{enumerate}
\noindent The primary goal of an attack is to replace the true physical signal that a sensor aims to sense with a malicious signal.  In other words, an adversary will attempt to ``inject'' a signal into the physical medium that the sensor is measuring in order to jam or spoof the sensor.
\begin{enumerate}
\item[\textbf{G3}] \emph{\underline{Signal Masking}: An attacker will attempt to prevent the sensor from being able to detect the true physical signal.} 
\end{enumerate}
\noindent If the sensor is still capable of reliably discerning the correct signal from the malicious, injected signal, then the attack may not be successful. Thus, the adversary aims not only to inject a signal but also to mask the true signal, whether by overpowering, modifying, or negating (canceling) it.

\begin{figure}
\centering
{
	\includegraphics[width=0.4\columnwidth]{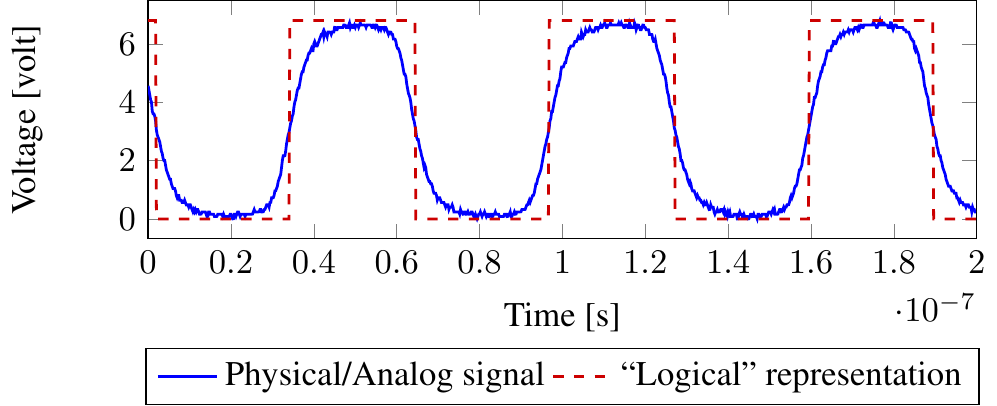}
	\includegraphics[width=0.4\columnwidth]{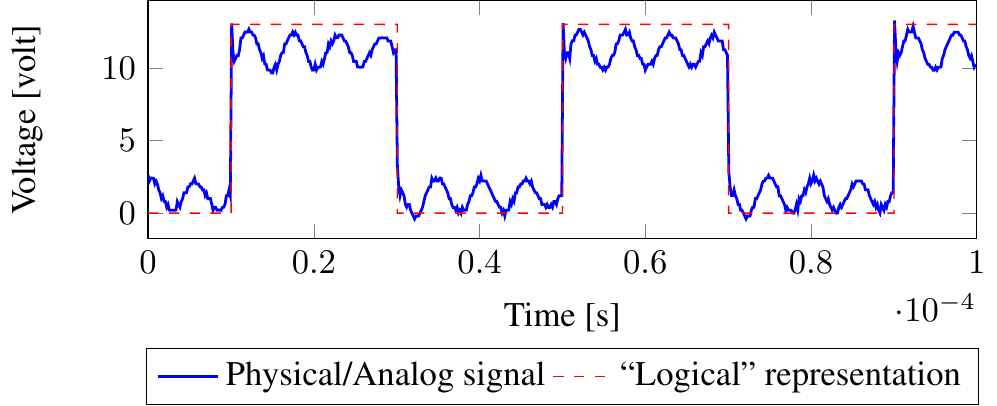}
	}
	\caption{\label{fig:delays}Examples of physical delays seen in typical sensing and actuation hardware, including optical sensors (left) and  electromagnetic coupled (e.g., RFID) sensors (right). In each case, the measured analog signal (blue solid) lags behind the ideal, ``logical'' signal (red dashed), causing delays. }
\vspace{-4mm}
\end{figure}

\subsection{Assumptions about the Adversary}

The physical attacks against sensors considered in this work operate under four main assumptions:
\begin{enumerate}
\item[\textbf{A1}]\emph{\underline{Non-invasiveness}: Attacks are of a \emph{non-invasive} nature---that is, the attacker is not allowed direct access to the sensor hardware.  Additionally, the adversary does not have access to the sensor firmware or software, whether directly or through wired or wireless networking.} 
\end{enumerate}
\noindent In most life- and safety-critical applications, engineers are careful to ensure that sensors are not physically exposed and vulnerable to direct tampering. For example:
\begin{itemize}
\item Sensors are often installed inside the body of a physically secured infrastructure (e.g., sensors inside the body of an automotive system, moving UAV drones, etc.).
\item For sensors which are physically accessible, existing techniques in the literature demonstrate ways to implement tamper-proof packaging to protect sensors from direct, physical modifications \cite{1260985,tamperProofSmartCards,AndersonTamperProof}.
\item Numerous sensor systems have methods for detecting when wires connecting their various sensors have been tampered with. For example, automotive systems are equipped with sensor failure detection systems which can detect whether all sensor subsystems are correctly connected and alert the driver if any of them fails 
\cite{klassen1993fault}.
\end{itemize}

\noindent Because of this, any attack must be carried out from a distance, without direct access to any sensor hardware.  In short, an adversary is assumed to have access only to the physical/analog medium used by the sensor---magnetic waves, optics, acoustics, etc. 

Additionally, it is important to distinguish these sensors from \emph{sensor nodes} (which appear in the literature of sensor networks); the attacks and countermeasures in this work target \emph{sensors} themselves. Sensors are simple subsystems designed to perform only one simple task; sensing the physical world. Because of this, many sensors do not support remote firmware updates 
and do not typically receive commands from a remote operator, making such attack vectors uncommon as many sensors do not have such capabilities.

\begin{figure*}
\centering
{
	\begin{tabular}{c|c|c}
	
	\subfloat[]{\label{fig:eavesdrop_attack}
	\includegraphics[width=0.30\textwidth]{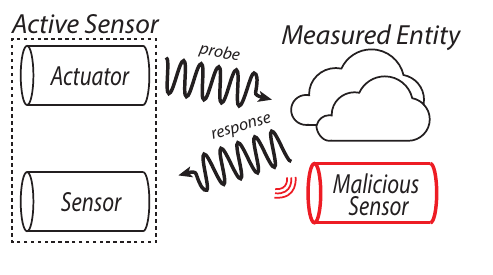}}&
	
	\subfloat[]{\label{fig:simplistic_attack}
	\includegraphics[width=0.30\textwidth]{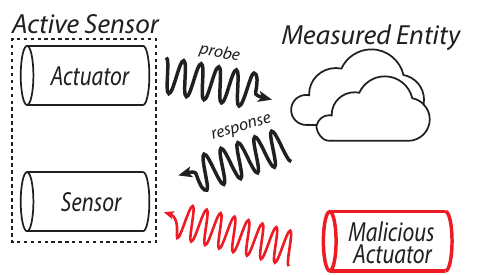}}&
	
	\subfloat[]{\label{fig:advanced_attack}
	\includegraphics[width=0.30\textwidth]{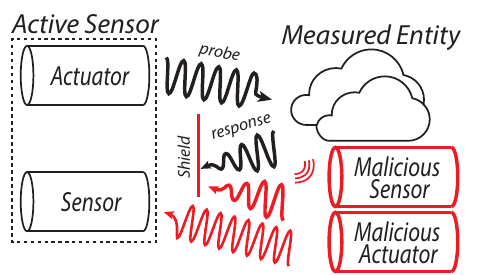}}
	
	\end{tabular}
	}
	\caption{An illustration of three physical attack types: (a) a passive eavesdropping attack, (b) a simple spoofing attack where a malicious actuator blindly injects a disruptive signal, and (c) an advanced spoofing attack where an adversary uses a sensor to measure the original signal and an actuator to actively cancel the original signal and inject a malicious one.  \label{fig:attack_types}}
\vspace{-4mm}
\end{figure*}

\begin{enumerate}
\item[\textbf{A2}]\emph{\underline{Trusted Measured Entity}
We assume that the physical entity to be measured by the sensor is trusted and incapable of being compromised.}
\end{enumerate}
\noindent Similar to the sensor hardware itself, the entity that the sensor aims to measure is typically difficult to access or alter directly while maintaining Goals G1--G3.  For example, in RFID systems the tag itself is often encased in tamper-proof packaging \cite{1260985,tamperProofSmartCards}; for ultrasonic ranging and active radar, maliciously altering the measured entity (often the entire surrounding environment) is impractical in time \& effort and undoubtedly violates Goal G1; for airplane engine speed sensors, the engines cannot easily be modified or replaced; for heart monitors, the heart cannot (we hope) be modified \cite{ghosttalk}, and so forth.

\begin{enumerate}
\item[\textbf{A3}]\emph{\underline{Physical Delays ($\tau_{attack}$)}: Adversaries require physical hardware with inherent physical delays.
This delay, though variable in duration, is fundamental to all physical actuation and sensing hardware. }
\end{enumerate}

\noindent These same analog/physical signals cannot be manipulated or even observed (i.e. sniffed) without physical hardware. That is, to tamper with magnetic waves, an attacker needs hardware that is able to generate magnetic waves, optical signals need physical hardware that generates optical signals, and so on. Furthermore, this hardware has to obey fundamental physics imposed by nature; the underlying physics dictate that the response of any physical element is governed by a dynamical model (mathematically modeled using differential/difference equations) \cite[ch. 2]{FradenBook}, \cite[chs. 8--9]{brauer2006magnetic}. This dynamical model describes the output response for each physical element in response to their inputs, e.g., the time for a voltage to drop from a certain value to zero and so on.
Although from a system point of view, we often assume that analog signals like those in Figure \ref{fig:delays} take on logical values of 0 and 1, the underlying physics is always different from this ``system'' point of view. For example, Figure \ref{fig:delays} shows how hardware that generates clock waveforms and optical pulse signals behaves quite differently from the desired, logical signals used to control them.  In general, no physical signal can arbitrarily jump from one state to another without suffering from \emph{delays} imposed by physics \cite[ch. 2]{FradenBook}.

Furthermore, these physical delays are lower bounded by a non-zero, fundamental limit. For example, the time response of an electromagnetic sensor/actuator is a multiple of physical constants like magnetic permeability \cite[chs. 8--9]{brauer2006magnetic} or permitivity and electric constants for capacitive sensors \cite[ch. 4]{FradenBook}. In general, the time response of any sensor or actuator can never be below certain fundamental thresholds controlled by physical constants. We refer to this physical delay as $\tau_{attack}$ for the remainder of this paper. 

\begin{enumerate}
\item[\textbf{A4}]\emph{\underline{Computational Delays}: 
PyCRA is designed and analyzed with a focus on \emph{physical} delays. 
We make no assumption regarding the computational power of a potential adversary. }
\end{enumerate}
\noindent We assume that an adversary has knowledge of the underlying security mechanism, attempting to conceal an attack by reacting to each physical challenge or probe from the PyCRA-secured active sensor.  In practice, such an adversary would suffer from \emph{computational delays} in addition to the physical delays addressed above.  These delays would make it even more difficult for an adversary to respond to these challenges in a timely manner.  PyCRA is designed to leverage only the physical delays addressed above, but additional computational delays would make it even easier to detect the presence of an attack.

\subsection{Physical Attack Types for Sensors}
\label{sec:attack_types}
Attacks can be classified as either passive (eavesdropping) or active (spoofing). While we consider only  physical/analog attacks in accordance with assumptions A1--A4, the passivity of an attack is decided by whether or not the attacker is manipulating (or spoofing) the physical signal or merely listening to it. Active attacks themselves can be classified once more into simple spoofing or advanced spoofing attacks. In short,
physical sensor attacks in accordance with assumptions A1--A4 can be broadly divided into three categories (\underline{T}ypes):  
\begin{enumerate}
\item[\textbf{T1}]\emph{\underline{Eavesdropping Attacks}: 
\label{sec:eavesdropping}
In an eavesdropping attack, an adversary uses a malicious sensor in order to listen to the active sensor's ``communication'' with the measured entity (Figure \ref{fig:eavesdrop_attack})}.
\item[\textbf{T2}]\emph{\underline{Simple Spoofing Attacks}:
\label{sec:naivespoofing}
In a simple spoofing attack, an adversary uses a malicious actuator to blindly inject a malicious signal in order to alter the signal observed by the sensor.  These attacks are simple in that the malicious signal is not a function of the original, true signal (Figure \ref{fig:simplistic_attack}). }

\item[\textbf{T3}]\emph{\underline{Advanced Spoofing Attacks}
\label{sec:intelligentspoofing}
In an advanced spoofing attack, an adversary uses a sensor in order to gain full knowledge of the original signal and then uses a malicious actuator to inject a malicious signal accordingly. 
This enables an attacker to suppress the original signal or otherwise alter it in addition to injecting a malicious signal (Figure \ref{fig:advanced_attack}). }
\end{enumerate}
We argue that these attack types span all possible modes of attacks that abide by Assumptions A1--A4 with those goals outlined in G1--G3. For example, jamming or Denial of service (DoS) attacks falls in category T2 where the attacker's actuator is used to blindly generate high amplitude, wide bandwidth signals to interfere with the physical signal before it reaches the sensors; replay attacks fall in either category T2 or T3 based on whether the attacker is blindly replaying a physical signal or destructing the original physical signal before inserting the replay signal; spoofing attacks like those demonstrated in \cite{ghosttalk} fall in category T2; and attacks described in \cite{YasserABS} fall within both T2 and T3.

At first glance, attacks of type T1 may not seem important especially if the sensor under attack measures a physical signal that is publicly accessible (e.g., room temperature, car speed, etc.). In such cases, an adversary can measure the same physical signal without the need to ``listen'' to the interaction between the active sensor and the environment. However, this may not always be the case.  For example, an attacker might measure magnetic waves during an exchange between an RFID reader and an RFID tag, learning potentially sensitive information about the tag. These attacks are passive, meaning that the attacker does not inject any energy into the system. Sections~\ref{sec:absattacks} describes methods for detecting attacks T2 and T3, leaving attack type T1 for later discussion in Section~\ref{sec:resultsRFID}.


\section{The PyCRA Authentication Scheme}
\label{sec:pycra}
The core concept behind PyCRA is that of physical challenge-response authentication.  In traditional challenge-response authentication schemes, one party requires another party to prove their trustworthiness by correctly answering a question or \emph{challenge}.  This challenge-response pair could be a simple password query, a random challenge to a known hash function, or other similar mechanisms. In the proposed physical challenge-response authentication, the challenge comes in the form of a \emph{physical} stimulus placed on the environment by an active sensor.  Unlike traditional schemes, the proposed \emph{physical} challenge operates in the analog domain and is designed so that an adversary cannot issue the correct response because of immutable physical constraints rather than computational or combinatorial challenges.  

We begin by modeling the problem of detecting physical sensor attacks as an authentication problem. To draw this analogy, let us consider the communication system shown in Figure \ref{fig:system_diagram_normal}. This figure shows two `parties': (1) an active sensor composed of actuation and sensing subsystems and (2) the measured entity which responds to signals emitted by the actuator contained within the active sensor.  The first party---the active sensor---is responsible for initiating the ``communication'' by generating some physical signal such as a magnetic, acoustic, or optical wave. The second party---the measured entity---responds to this ``communication'' by modulating this signal and reflecting it back to the sensing subsystem of the active sensor. With this analogy in mind, the problem of detecting physical attacks can be posed as that of ensuring that the ``message'' seen by the sensor has originated from a trusted party (the true entity to be measured). This is akin to identity authentication in the  the literature of computer security but applied to the analog domain.

\subsection{Simple PyCRA Attack Detector}
Using the communication analogy shown in Figure \ref{fig:system_diagram_normal} and recalling that we are interested only in active sensors as described in Section \ref{sec:attacker_model}.1, we notice that the measured entity, as a participating party in this communication, is strictly \emph{{passive}}, i.e. it cannot initiate communication; it responds only when the sensor generates an appropriate physical signal.

\begin{figure*}[!t]
\centering
{
	\begin{tabular}{c|c|c}
	\subfloat[ ]{\label{fig:system_diagram_normal}
	\resizebox{0.3\textwidth}{!}{
	\begin{tikzpicture}
	
	\draw[very thick,->] (-1.8,-0.2) -- (-1.1,-0.2); 
	\node at (-1.5,0) {$u(t)$};
	\draw[very thick,->] (-1.1,-1.6) -- (-1.8,-1.6); 
	\node at (-1.5,-1.3) {$y(t)$};
	
	\node at (-0.3,-0.8) {\includegraphics[width=.1\textwidth]{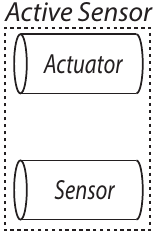}};
	
	\node at (3.8,-0.5) {\includegraphics[width=.1\textwidth]{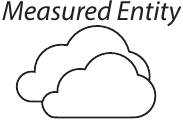}};

	\draw[very thick,->] (0.7,0) -- (2.5,-0.7); 
	\draw[->] (1,0) -- (2.5,0) node[anchor=north]{\small $t$};
	\draw[->] (1.2,-0.2) -- (1.2,0.7) node[anchor=west]{\small $\mathcal{A}(t)$}; 
	\draw[-] (1.2, 0.45) -- (2.45, 0.45); 

	\draw[very thick,->] (2.5,-1.1) -- (0.7,-1.8);
	\draw[->](2.5,-2.2) -- (1,-2.2) node[anchor=north]{\small $t$}; 
	\draw \sinewave{1.1}{0.085}{-2.2}{0.25};

	\draw[->](4.5,-2.2) -- (3,-2.2) node[anchor=north]{\small $t$}; 
	\draw[->] (4.3,-2.4) -- (4.3,-1.7) node[anchor=east]{\small $a(t)$}; 

	\end{tikzpicture}
	}
	}&
	\subfloat[ ]{\label{fig:system_diagram_noattack}
	\resizebox{0.3\textwidth}{!}{
	\begin{tikzpicture}
	
	\draw[very thick,->] (-1.8,-0.2) -- (-1.1,-0.2); 
	\node at (-1.5,0) {$u(t)$};
	\draw[very thick,->] (-1.1,-1.6) -- (-1.8,-1.6); 
	\node at (-1.5,-1.3) {$y(t)$};
	
	\node at (-0.3,-0.8) {\includegraphics[width=.1\textwidth]{activesensor}};
	
	\node at (3.8,-0.5) {\includegraphics[width=.1\textwidth]{environment}};

	\draw[very thick,->] (0.7,0) -- (2.5,-0.7); 
	\draw[->] (1,0) -- (2.5,0) node[anchor=north]{\scriptsize t};
	\draw[->] (1.2,-0.2) -- (1.2,0.7) node[anchor=west]{\small $\mathcal{B}(t)$}; 
	\draw[-] (1.2, 0.45) -- (1.7, 0.45); 
	\draw[-] (1.7, 0.45) -- (1.7, 0); 
	\draw[ultra thick,-,red!80!black] (1.7,0) -- (1.9,0);
	\draw[-] (1.9, 0.45) -- (1.9, 0); 
	\draw[-] (1.9, 0.45) -- (2.45, 0.45); 

	\draw[very thick,->] (2.5,-1.1) -- (0.7,-1.8);
	\draw[->](2.5,-2.2) -- (1,-2.2) node[anchor=north]{\small $t$}; 
	\draw \halfsinewave{1.8}{0.08}{-2.2}{0.25};
	\draw \halfsinewave{1.1}{0.08}{-2.2}{0.25};
	\draw[ultra thick,-,red!80!black] (1.59,-2.2) -- (1.79,-2.2);
		
	\draw[->](4.5,-2.2) -- (3,-2.2) node[anchor=north]{\small $t$}; 
	\draw[->] (4.3,-2.4) -- (4.3,-1.7) node[anchor=east]{\small $a(t)$}; 
	
	\end{tikzpicture}
	}
	}&
	\subfloat[ ]{\label{fig:system_diagram_attack}
	\resizebox{0.3\textwidth}{!}{
	\begin{tikzpicture}
	
	\draw[very thick,->] (-1.8,-0.2) -- (-1.1,-0.2); 
	\node at (-1.5,0) {$u(t)$};
	\draw[very thick,->] (-1.1,-1.6) -- (-1.8,-1.6); 
	\node at (-1.5,-1.3) {$y(t)$};
	
	\node at (-0.3,-0.8) {\includegraphics[width=.1\textwidth]{activesensor}};
	
	\node at (3.8,-0.5) {\includegraphics[width=.1\textwidth]{environment}};

	\draw[very thick,->] (0.7,0) -- (2.5,-0.7); 
	\draw[->] (1,0) -- (2.5,0) node[anchor=north]{\small $t$};
	\draw[->] (1.2,-0.2) -- (1.2,0.7) node[anchor=west]{\small $\mathcal{B}(t)$}; 
	\draw[-] (1.2, 0.45) -- (1.7, 0.45); 
	\draw[-] (1.7, 0.45) -- (1.7, 0); 
	\draw[ultra thick,-,red!80!black] (1.7,0) -- (1.9,0);
	\draw[-] (1.9, 0.45) -- (1.9, 0); 
	\draw[-] (1.9, 0.45) -- (2.45, 0.45); 

	\draw[very thick,->] (2.5,-1.1) -- (0.7,-1.8);
	\draw[->](2.5,-2.2) -- (1,-2.2) node[anchor=north]{\small $t$}; 
	\draw \attackwave{1.8}{0.027}{-2.2}{0.15}{0.03};
	\draw \attackwave{1.1}{0.027}{-2.2}{0.15}{0.03};
	\draw[ultra thick,-,red!80!black] (1.59,-2.2) sin (1.65,-2.30) cos (1.70,-2.2) sin (1.75,-2.10) cos (1.80,-2.2);   

	\draw[->](4.5,-2.2) -- (3,-2.2) node[anchor=north]{\small $t$}; 
	\draw[->] (4.3,-2.4) -- (4.3,-1.7) node[anchor=east]{\small $a(t)$}; 
	\draw \sinewave{3.2}{0.04}{-2.2}{0.1};
	\draw \sinewave{3.68}{0.04}{-2.2}{0.1};
	
	\end{tikzpicture}
	}
	}
	\end{tabular}
}

\caption{\label{fig:system_diagram}An illustration of the PyCRA architecture and attack detection scheme: (a) During normal operation, the active sensor generates a signal $\mathcal{A}(t)$. This signal passes through environmental dynamics and is reflected back to the sensor as $y(t)$; (b) Using the proposed PyCRA scheme, the sensor generates a modulated signal $\mathcal{B}(t)$.  If there is no attack present, the reflected signal diminishes if the active sensor's actuator is driven to zero; (c) Using the proposed PyCRA scheme while the sensor is under attack (by signal $a(t)$), a malicious signal is detected during the period when the actuator is disabled.} 
\vspace{-3mm}
\end{figure*}

PyCRA exploits this ``passivity'' in order to facilitate the  detection of attacks.  Without PyCRA, an active sensor's actuator would probe the measured entity in a normal fashion using a deterministic signal denoted by $\mathcal{A}(t)$.  We embed in this signal a physical challenge through pseudo-random binary modulation of the form:
\begin{equation}
\mathcal{B}(t) =  u(t)\mathcal{A}(t), ~~~u(t) \in \{0,1\}
\label{eq:probe}
\end{equation}
\noindent where $u(t)$ is the binary modulation term and $\mathcal{B}(t)$ is the modulated output of the actuator.  The output of the active sensor is denoted by $y(t)$ as shown in Figure \ref{fig:system_diagram}.  In the absence of an attacker and from the passivity of the measured entity, setting $u(t) = 0$ (and consequently $\mathcal{B}(t) = 0$) at time $t_{challenge}$ will cause $y(t)$ to go to zero.  

Potential attackers must actively emit a signal $a(t)$ to overpower or mask $y(t)$ (Goals G2--G3).  A na\"ive attacker might continue to emit this signal even when $\mathcal{B}(t) = 0$ as shown in Figure \ref{fig:system_diagram_attack}.  In this case, the attack can be easily detected, since any nonzero $y(t)$ while $u(t) = 0$ can be attributed to the existence of an attacker. 

More advanced attackers might attempt to conceal their attacks when they sense the absence of $\mathcal{B}(t)$ as in Goal G1. Due to Assumption A3, an attacker could drive $a(t)$ to zero only after a delay of $\tau_{attack}$, where $\tau_{attack} \ge \tau_{physical \; limit} > 0$ is the unavoidable physical delay inherent in the attacker's hardware. Therefore, the mechanism described above can still detect the presence of an attack within this unavoidable time delay.  Furthermore, an attacker cannot learn and compensate for this inherent delay preemptively due to the randomness of the modulation term $u(t)$. Again, any nonzero $y(t)$ sensed while $u(t) = 0$ can be attributed to the existence of an attacker. The simple PyCRA attack detector can be summarized as follows:\\
{\textbf{[Step 1]} Select a random time, $t_{challenge}$}\\
{\textbf{[Step 2]} Issue a physical challenge by setting $u(t_{challenge}) = 0$ }\\
{\textbf{[Step 3]} If $y(t_{challenge}) > 0$, declare an attack}\\
Note that the previous process needs to happen within small amount of time (e.g., in the order of milliseconds) such that it does not affect the normal operation of the system.


\subsection{$\chi^2$ PyCRA Attack Detector}

As with the attacker, the actuator used by the active sensor itself suffers from physical delays. This means that when PyCRA issues a physical challenge, the actuator output does not transition immediately. Apparently, if the physical delay in the active sensor is greater than $\tau_{attack}$, then an adversary can conceal his signal.  To counter this, PyCRA constructs a mathematical model for the sensor that is used---in real time---to predict and eliminate the effects of the active sensor's physics. By calculating the residual between the expected output and the measured output, PyCRA can still detect the existence of an attack. The details of this procedure along with an experimental example are the subject of this subsection.

\subsubsection{Obtaining the Sensor Model}
To compensate for the actuator dynamics, we first need to acquire an accurate model that captures the underlying physics of the active sensor. Below we model the active sensor using the generic nonlinear state update of the form:
\begin{align}
	x(t+1) &= f(x(t), u(t)) + w(t) \label{eq:sys1}\\
	y(t) &= h(x(t)) + v(t) \label{eq:sys2}
\end{align}
\noindent where $x(t) \in \R^{n}$ is the active sensor state at time $t \in \N_0$ (e.g., the electrical current and voltages inside the sensor at time $t$), $u(t) \in \R$ is the modulation input to the sensor, the function \mbox{$f:\R^{n} \times \R \rightarrow \R^{n}$} is a model describing how the physical quantities of the sensor evolve over time, and the function $h:\R^{n} \rightarrow \R$ models the sensor measurement physics. Such models can be either derived from first principles \cite{FradenBook,brauer2006magnetic,grimes2006encyclopedia} or through experimental studies \cite{ljung1998system,Landaue_Book}. 
Additionally, these models are used to design the sensors themselves and are typically known to the sensor manufacturers.
Finally, since no mathematical model can capture the true system behavior exactly, the term $w(t) \in \R^{n}$ represents the mismatch between the true sensor and the mathematical model while $v(t)$ models the noise in the sensor measurements.

\subsubsection{$\chi^2$ Detector}
\label{sec:sub:ch2detector}
We use the dynamical model of the sensor (Equations ~\eqref{eq:sys1} and~\eqref{eq:sys2}) in designing a $\chi^2$ detector to detect the existence of an attacker. $\chi^2$ detectors appear in the literature of automatic control, where they are used in designing fault tolerant systems \cite{MiroslavCDC,Mehra1971637,Willsky1976601}. The $\chi^2$ detector works as follows:\\
\textbf{[Step 1]} Select a random time, $t_{challenge}$.\\
\textbf{[Step 2]} Issue a physical challenge by entering the silent phase at time $t_{challenge}$.\\
\textbf{[Step 3] Residual Calculation:}
Here we use Equations ~\eqref{eq:sys1}~and~\eqref{eq:sys2} to calculate an estimate for the current sensor state $\hat{x}(t)$ and the predicted output $\hat{y}(t) = h(\hat{x}(t))$. This operation is initiated at $t_{challenge}$ when $u(t)$ transitions to 0---the actuator ``silence time''---and terminates once $u(t)$ transitions back to one, signaling the end of actuator ``silence.''

The model represented by Equations ~\eqref{eq:sys1} and \eqref{eq:sys2} describes the output of the sensor when the attack is equal to zero. Therefore, the residual{\footnote{The name of the Chi-squared ($\chi^2$) detector follows from the fact that, in the case of no attack, the residual $z(t)$ is a Gaussian random variable, and hence its square $g(t)$ is a $\chi^2$ distributed random variable.}} between the measured output and the predicted output, $z(t) = y(t) - \hat{y}(t)$, corresponds to both the attack signal as well as the environmental dynamics during the time interval before $u(t)$ drops to 0. 
For each segment of length $T$ where $u(t) = 0$, we calculate the norm of the residual $z(t)$ as:
\begin{equation} 
g(t) = \frac{1}{T}\sum_{\tau = t - T+1}^{t}  z^2(\tau)
\label{eq:residual}
\end{equation}
\textbf{[Step 4] Detection Alarm:} Once calculated, we compare the $\chi^2$ residual $g(t)$ against a pre-computed alarm threshold $\alpha$. This alarm threshold is chosen based on the noise $v(t)$. Whenever the condition $g(t) > \alpha$ is satisfied, the sensor declares that an attacker has been detected.

\begin{figure*}
\centering
{
	\begin{tabular}{c|c}
	\subfloat[Attack signal]{\label{fig:pycra_out}
	\includegraphics[width=0.41\columnwidth]{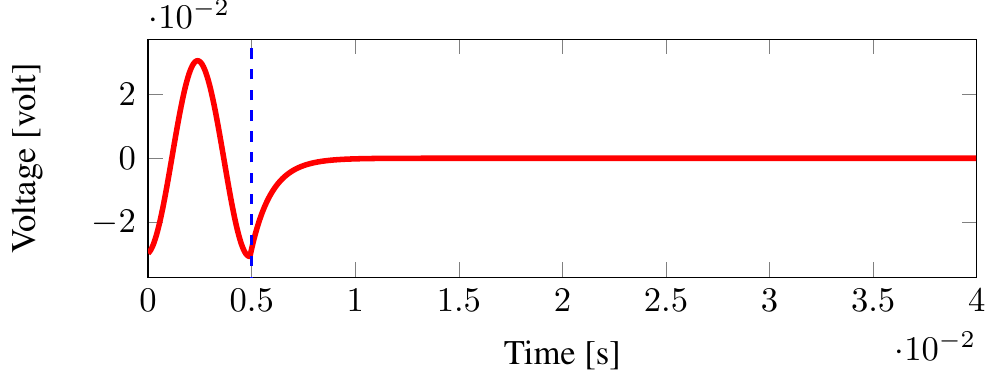}}&
	
	\subfloat[Sensor output]{\label{fig:pycra_out}
	\includegraphics[width=0.41\columnwidth]{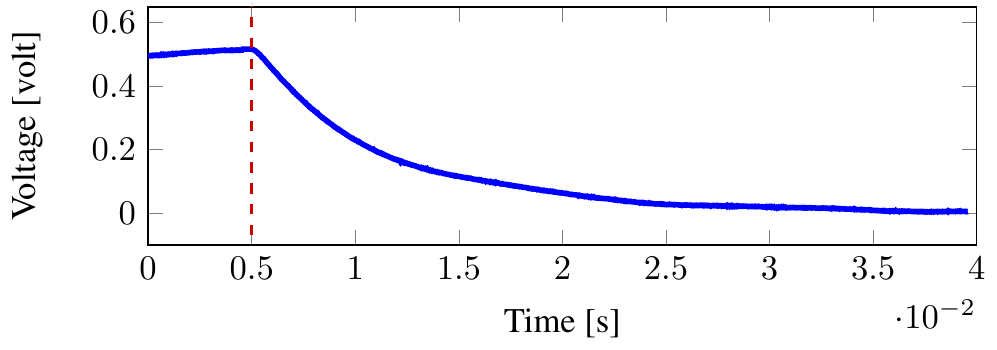}}\\\hline
	
	\subfloat[Residual = $\vert$ output - expected $\vert$]{\label{fig:pycra_residual}
	\includegraphics[width=0.41\columnwidth]{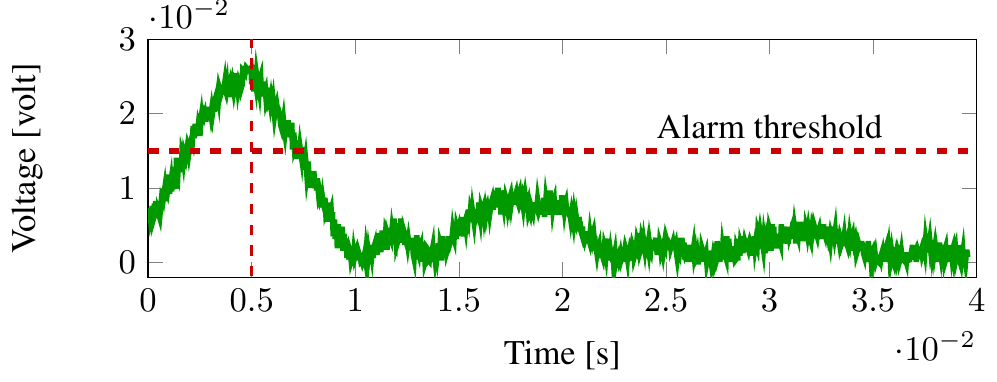}}&
	
	\subfloat[Expected sensor output per sensor model]{\label{fig:pycra_response}
	\includegraphics[width=0.41\columnwidth]{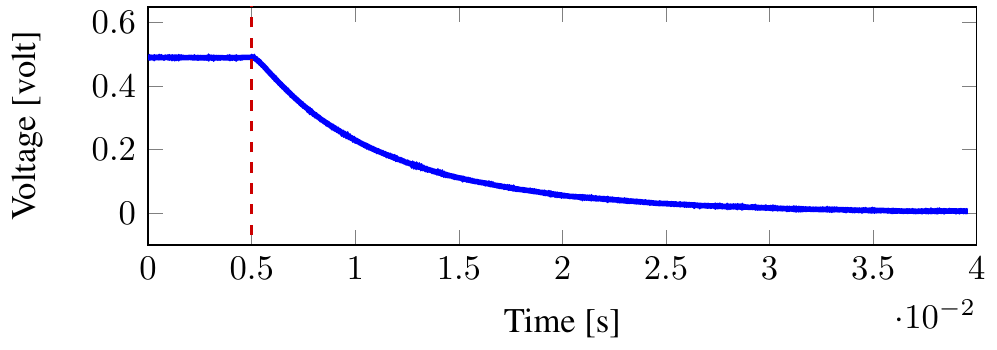}}

	\end{tabular}
	}
	\caption{\label{fig:pycra}Example showing attacker and sensor dynamics along with attack detection: (a) A smart attacker is spoofing the sensor. At time  $t = 5ms$, PyCRA issues a challenge by halting its sensing actuator ($u(t) = 0$). Accordingly, the smart attacker stops his attack signal as well. However, due to his physical delay, the attack signal takes 2 ms to reach zero. On the right, (b) shows that the sensor itself suffers from physical delays (15x slower than the attacker delay!), transitioning to zero after $35 ms$. However, due to knowledge of the sensor model shown in (d), we are able to remove the effect of the sensor's physical delay and discover that the sensor was under attack due to the high residual at time $t = 5$ ms (c). }
\vspace{-3mm}
\end{figure*}

\subsubsection{Experimental Example}

To better illustrate the operation of the proposed $\chi^2$ detector, consider the example shown in Figure \ref{fig:pycra}. In this example, an advanced attacker is spoofing a sensor. The attacker is aware of the existence of the PyCRA security scheme and tries to conceal the attacking signal. Recall that PyCRA issues physical challenges at random times (Equation \ref{eq:probe}). Therefore, the attacker is monitoring the output of the active sensor's actuator, $\mathcal{B}(t)$ (Figure \ref{fig:pycra} (b)) and once it detects that the sensor output is decaying at time = 5ms (as a consequence of switching the actuator from On to Off), the attacker immediately switches off his device. However, due to the physical delay in his device (Assumption A3), it takes him 2 ms for his signal to completely disappear (Figure \ref{fig:pycra} (a)). In this particular example, we designed the attacker's hardware to be 15x faster than the sensor itself. This can be seen by comparing the decay of the sensor signal  (Figure \ref{fig:pycra} (b)) with that of the attacker (Figure \ref{fig:pycra} (a)).

To overcome the slow dynamics of the sensor, PyCRA uses the sensor model (Equations \ref{eq:sys1} and \ref{eq:sys2}) to estimate the output of the sensor $\hat{y}$ after the challenge is issued. This estimation is shown in Figure \ref{fig:pycra} (d). At each point in time, PyCRA calculates the residual error (Equation \ref{eq:residual}) as shown in Figure \ref{fig:pycra}(c). It is apparent from  Figure \ref{fig:pycra}(c) that, because of the physical delays at the attacker, the residual exceeds the alarm threshold $\alpha$ when the physical challenge is issued at time = 5ms indicating the detection of an attack. The case studies shown in Sections \ref{sec:absattacks}-\ref{sec:resultsRFID} demonstrate in greater detail the effectiveness of this detection mechanism for several example applications and various attacks.

\section{Attack Resilience and Recovery using PyCRA} \label{sec:ext_resilience}

In the previous section, we discussed how to use physical challenge-response authentication to detect the existence of an attacker. While the proposed detection scheme can be used for a variety of sensor types, the same technique can be potentially extended to other applications as well.
In this section, we show how the PyCRA detection scheme can be applied to providing attack resilience against attack types T2 and T3. That is, our objective is not just to detect the existence of the attack, but also to recover (or estimate) the original sensor measurements.

Recall that PyCRA is based on the idea that by comparing the measured sensor signal with the predictive models described in Equations \eqref{eq:sys1} and \eqref{eq:sys2} we arrive at the residual $z(t)$.  This residual was used for detecting the existence of attacks. However, this same residual effectively provides an estimate of the attacker signal $a(t)$.  In the case of a jamming attack or similar ``dumb'' attacks, $a(t)$ may be a non-structured, noisy signal.  
However, as far as the spoofing attacks T2 and T3 (described in Section~\ref{sec:attack_types}) are concerned, the goal of the attacker is to strategically mask the true signal and/or inject a malicious signal (Goals G2-G3).  In these cases, $a(t)$ must be structured in a certain domain for short periods of time and it has to follow the same structure of the original sensor measurements.  For example, if the original sensor measurements consists of slowly varying sinusoids, then for a spoofing attack to be successful, the attack signal $a(t)$ must follow the same structure and needs to consist as well of slowly varying sinusoids (with a different frequencies in order to spoof the sensor). In this case, it may be possible to build a model for $a(t)$ from $z(t)$ in order to subtract its effects from the measured signal $y(t)$ and thus counteract the attack as we discuss in the next subsections.


\subsection{Simple PyCRA Resilient Estimator}
\label{sec:res_simple}
For sake of simplicity, we illustrate the concept of PyCRA resilient estimator to the case when the original sensor measurement and the attack as well is sinusoidal wave dominated by a single frequency component. Such structure appears in many magnetic and optical encoders used in many industrial and automotive applications to measure the rotational speed of moving objects.

Given the previous signal structure, if we consider the frequency domain representation of the measured signal over a window, we expect to see the energy of the signal concentrated at one frequency corresponding to the tone ring frequency. In other words, the magnitude of the frequency domain representation of the signal consists of only one ``peak''.

However, in the existence of an attacker, and using Fourier analysis, we can reasonably expect to observe energy concentrated at multiple frequencies and hence more than one peak (Figure \ref{fig:twopeaks}). Only one of those peaks corresponds to the frequency of the tone ring while all other energy corresponds to the attacker signal. Therefore, the sensor needs to be  able to distinguish between the correct tone ring frequency and the attacker frequency. 

We again model the resilience problem as an authentication problem, building on top of the PyCRA attack detection scheme. In this case, the multiple peaks in the frequency domain correspond to multiple parties claiming to be the tone ring. If the sensor is able to successfully authenticate the identity of these peaks, it will remain resilient to such attacks. 

Returning to the physical challenge-response authentication for resilience, we exploit once more the ``passivity'' property of the measured entity but now from the perspective of the frequency domain. It follows from the ``passivity'' of the tone ring that the energy at the frequency corresponding to the tone ring shall decay in correspondence to the physical model directly. Hence, for an attacker to be stealthy (Goal G1), he or she is obliged to control this energy such that it behaves in a manner consistent with the natural response of the tone ring.

To further explain the proposed mechanism, consider the example shown in Figure \ref{fig:timelapse}. Figure \ref{fig:timelapse} shows the frequency domain representation of the errors between the measured sensor values and the expected sensor values over a window of length $N$, immediately after a challenge is issued ($u(t)$ transitions to 0). The signal energy is concentrated at two frequencies. The first one at 50 Hz corresponds to the original signal while the other one at 120 Hz corresponds to the attack signal. The sensor does not know \emph{a priori} which frequency corresponds to the measured entity and therefore needs to authenticate the identity of these two frequencies. Since the measured entity is passive, the energy located at the frequency corresponding to the measured entity shall be zero during the silence time. Immediately after the sensor enters a silent period, the sensor starts to calculate the error between the expected natural response of the tone ring and the measured output. As shown in Figure \ref{fig:timelapse}, at the beginning of the silence time the error is equal to zero at both candidate frequencies. Even if the attacker can detect the silence time immediately and halt his attack, the dynamics of the actuator hardware used by the attacker will take some time to subside due to the un-mutable physical delay (Assumption A3). Within this time, the energy of the attack signal still exists in the measured signal (Assumption AR2). Therefore, as time continues, the error  in the energy starts to accumulate in the frequencies corresponding to the attacker signal.

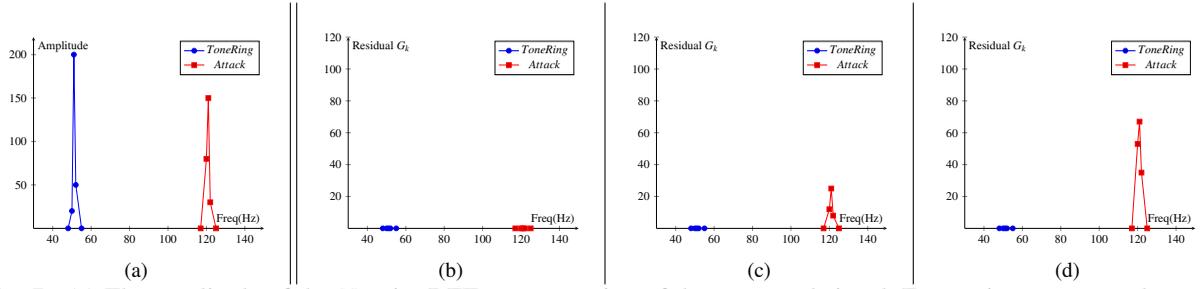
\begin{figure*}[!t]
\centering
{
	\begin{tabular}{c||c|c|c}
	\subfloat[]{\label{fig:twopeaks}
		\resizebox{0.21\textwidth}{!}{\begin{tikzpicture}
			\begin{axis}[xlabel={Freq(Hz)},ylabel={Amplitude},xmin=30, 	xmax=150, ymax = 220,axis lines={center},]
				\addplot coordinates {(48,0) (50,20)    (51,200)   (52,50) (55,0)};
				\addplot coordinates{(117,0)	(120,80)    (121,150)    (122,30) (125,0)};
				\legend{$Tone Ring$,$Attack$}
			\end{axis}
		\end{tikzpicture} }
	} &
	\subfloat[ ]{\label{fig:timelapse2}	
		\resizebox{0.21\textwidth}{!}{\begin{tikzpicture}
			\begin{axis}[xlabel={Freq(Hz)},ylabel={Residual $G_k$},xmin=30, 	xmax=150,ymax = 120,axis lines={center},]
				\addplot coordinates {(48,0) (50,0)    (51,0)   (52,0) (55,0)};
				\addplot coordinates{(117,0)	(120,0)    (121,0)    (122,0) (125,0)};
				\legend{$Tone Ring$,$Attack$}
			\end{axis}
		\end{tikzpicture} }
	}&
	\subfloat[ ]{\label{fig:timelapse2}	
		\resizebox{0.21\textwidth}{!}{\begin{tikzpicture}
			\begin{axis}[xlabel={Freq(Hz)},ylabel={Residual $G_k$},xmin=30, 	xmax=150,ymax = 120,axis lines={center},]
				\addplot coordinates {(48,0) (50,0)    (51,0)   (52,0) (55,0)};
				\addplot coordinates{(117,0)	(120,12)    (121,25)    (122,8) (125,0)};
				\legend{$Tone Ring$,$Attack$}
			\end{axis}
		\end{tikzpicture} }
	}&
	\subfloat[ ]{\label{fig:timelapse2}	
		\resizebox{0.21\textwidth}{!}{\begin{tikzpicture}
			\begin{axis}[xlabel={Freq(Hz)},ylabel={Residual $G_k$},xmin=30, 	xmax=150,ymax = 120,axis lines={center},]
				\addplot coordinates {(48,0) (50,0)    (51,0)   (52,0) (55,0)};
				\addplot coordinates{(117,0)	(120,53)    (121,67)    (122,35) (125,0)};
				\legend{$Tone Ring$,$Attack$}
			\end{axis}
		\end{tikzpicture} }
	}
	\end{tabular}
}
\caption{\label{fig:timelapse} (a) The amplitude of the $N$-point DFT representation of the measured signal. Energy is concentrated around two frequencies, one of which corresponds to the original tone ring frequency while the other corresponds to the attack signal. (b-d) An illustration of the frequency-domain error for an attacked signal evolving over time: (b) shows the errors for all frequency components at the beginning of a ``silent'' period, where the errors are initialized at 0; (c) shows the errors after some time, showing the increase in error on the attacked frequencies; (d) shows that the error of attacked frequencies continues to increase until the silence period ends or the attack ceases.  }
\vspace{-3mm}
\end{figure*}

\subsection{$\chi^2$ PyCRA Resilient Estimator}
\label{sec:sub:resilient_chi2}
We can extend this simple resilience estimator to a more formal $\chi^2$ estimator as was done for the attack detection scheme.  Because the resilience estimator operates in the frequency domain, we need to extend the model in Section \ref{sec:pycra} to the frequency domain as well. In particular, we use the recursive formulation of the Discrete Fourier Transform (DFT) which updates the previously calculated $N$-point DFT with the information of the most recently measured signal. The recursive DFT takes the following form:
\begin{align}
	Y_k(t+1) &= e^{(j2\pi k/N)} Y_k(t) + e^{(-j2\pi k(N-1)/N)} y(t) - e^{(j2\pi k/N)} y(t-N)
\end{align}
where $Y_k(t) \in \Ce^N$ is the $k$th component of the $N$-point DFT of the sensor output $y(t)$ at time $t\in \N_0$.

Similar to the detection scheme described in Section \ref{sec:sub:ch2detector}, we design a $\chi^2$ detector operating in the frequency domain in order to detect which frequency-concentrated energy peaks have been attacked. The $\chi^2$ detector uses the sensor model along with the recursive DFT formula to predict the natural response of the tone gear  (in the case of no attack) denoted $\widehat{Y}_k(t)$. We define the $\chi^2$ residual in the frequency domain as $Z_k(t) = \vert Y_k(t) \vert - \vert \widehat{Y}_k(t) \vert$. At the end of the silence time, we calculate the residual over the window of length $T$ as:
\begin{align}
G_k(t) = \frac{1}{T}\sum_{\tau=t-T+a}^{t} Z_k^2(\tau)
\end{align}

Finally, we set an alarm trigger $\beta$ against which we compare the value of $G_k(t)$ for all $k$ frequency components. Whenever the condition $G_k(t) > \beta$ is satisfied, we declare that the frequency $k$ is under attack. Again, the value of $\beta$ must be selected based on the noise information embedded in the model. 

 
\section{Detecting Eavesdropping Attacks using PyCRA}
\label{sec:sniffing}
The basic operation of any sensor (active or passive) requires transduction of energy from some medium (heat, acoustic, optical, magnetic, etc.) to an electrical signal. From the Law of Conservation of Energy, this transduction and therefore the act of sensing itself necessarily removes energy from the system, effectively modifying the very signal being measured. In the case of active sensors, this affects the energy emitted by the actuator and results in an attenuated signal observed by the sensor.  This effect is the basis for such technologies as RFID and other backscatter communication, where the attenuation is changed over time as a method for encoding data.  Therefore, if the following condition is satisfied:
\begin{itemize}
\item[\textbf{AD1}] The interaction between the eavesdropping sensor and the measured signal is significant enough to cause the measured signal to deviate from the model.
\end{itemize}
then, PyCRA can be used as a detection mechanism for passive sniffing attacks, even though there is no malicious \emph{signal}---i.e., no external source of energy in the system as shown in Section \ref{sec:resultsRFID}.


\section{The Confusion Phase: Another Fundamental Limitation}
\label{sec:theory}

Every physical signal is subject to random perturbations i.e., noise. A fundamental characteristic of this noise is the \emph{signal to noise ratio} (SNR). This SNR determines the ability of any sensor to distinguish between changes in a signal of interest and the random noise. As with the physical delay $\tau_{attack}$, this SNR is fundamental, and it is never equal to zero. As a result, if a signal is within the noise floor (less than the noise amplitude), it is fundamentally impossible to detect any change in the physical signal~\cite{VeerValliGQPD2005}. The purpose of this section is to show how PyCRA can use this fundamental and immutable constraint on how quickly an attacker can detect changes in the physical challenge in order to  introduce additional delay on the capability of the attacker to respond to physical attacks and hence
enhance both the detection as well as the resilience  performance.
We refer to this strategy as the \emph{confusion phase}.

\subsection{Confusion Phase}

We begin by presenting the sequence of actions performed by PyCRA, as depicted by Figure \ref{fig:raw_signal}.\\
\textbf{[Step 1]} In steady state PyCRA actuates a constant signal of amplitude $A$.\\
\textbf{[Step 2]} At a random time $t_{challenge}$
PyCRA issues a physical challenge by entering the silent phase.\\
\textbf{[Step 3]} PyCRA draws at random the silent period length $\Ga$, according to a probability function with a certain decay rate.\\
\textbf{[Step 4]} At the end of the silent period PyCRA turns the signal back on; however, this time the amplitude of the constant magnetic field is set to $\frac{A}{\beta}$, where $\beta > 1$.\\
\textbf{[Step 5]} After a random time $t_{confusion}$ PyCRA increases the amplitude of the magnetic field back to $A$.

Recall that one of the attacker's goals is to remain stealthy (Goal G1). If the attacker is unable to instantaneously detect the changes in the physical challenge, he or she will reveal themselves. Due to the existence of noise, no attacker---whether using software or hardware to counter the physical challenges issued by PyCRA---can instantaneously detect the change in the physical challenge. That is, there always exists a non-zero probability of the attacker missing the changes in the physical challenge. In this section, we detail a theoretical result that explains the relationship between the amplitude of the physical challenge within the confusion phase and the probability that the attacker will fail to detect changes in the physical challenge.

In the remainder of this section, we analyze---from the attacker's point of view---how this sequence of actions guarantees with a very high probability a finite delay on the attacker side. This delay is independent of the hardware or software employed by the attacker. Hence, the length of the delay can be adjusted to fit the response time of PyCRA such that it can detect the attack with a very high probability. Furthermore, introducing a delay can actually improve the system's ability to recover the system from the attack as discussed in Section~\ref{sec:ext_resilience}.

\begin{figure}[!t]
\centering{
	\includegraphics[width=0.65\columnwidth]{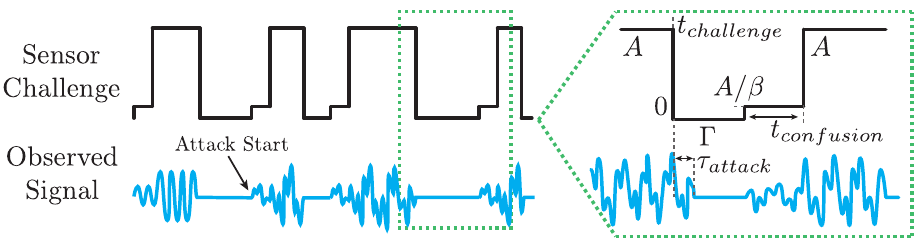}
}
\caption{\label{fig:raw_signal}Sensor actuator output (top) with confusion and silence phases and the corresponding raw signal (bottom) with an attack.} 
\vspace{-4mm}
\end{figure}

\subsection{Bayesian Quickest Change Point Detection: Basic Definitions and Results}\label{subsec:QCPDBasicDefinitionResults}
In this subsection we investigate from the attacker's point of view the fundamental limitations on its ability to detect when the physical challenge has been abandoned so that he can start re-attacking the sensor. Towards this end, we consider the Bayesian quickest change detection problem~\cite{VeerValliGQPD2005} which shows that SNR introduces immutable constraints on the capability of the attacker to perform such detections. 

In the Bayesian quickest change detection problem~\cite{VeerValliGQPD2005} a sequence of  random variables
$X_{1},X_{2},\dots$ which are i.i.d are observed, until a change occurs at an unknown point in time $\Ga\in\pac{1,2,\dots}$ after which the observations are again i.i.d but according to a different distribution. In other words, we consider the case when $X_{1},\dots,X_{\Ga-1}$ are drawn at random according to the Probability Distribution Function (PDF) $f_{0}\pa{x}$, whereas $X_{\Ga},X_{\Ga+1},\dots$ are drawn according to the PDF $f_{1}\pa{x}$.

The goal is to detect the change with minimal average delay based on the observed data and the change point distribution, subject to a constraint of false alarm probability. Therefore, a procedure decides whether a change has occurred at point $\tau$ based on the observations $X_{1},\dots,X_{\tau}$. For simplicity we denote by:
{\small\beq{eq:BD_taudefinition}{\tau\pa{X_{1},\dots,X_{\tau}}}}
the function determining whether a change occurred at time $\tau$ based on observing the data $X_{1},\dots,X_{\tau}$. A false alarm event occurs if one falsely detects a change and hence reports  a time $\tau$ that is less than $\Ga$ (i.e., $\tau < \Ga$). Recall that  the attacker's goal is to ensure that its attack signal is concealed.

Therefore, a false alarm (for the attacker) happens when the attacker erroneously concludes that the sensor is no longer in the silent or confusion phase. In this case the attacker could reveal itself by injecting its signal. On the other hand if the attacker detects the change with a delay, then for this period  the active sensor obtains uncorrupted measurements.
These uncorrupted measurements can be used then to enhance the detection and resilience of PyCRA as we show in Section~\ref{sec:usingDelay}. For the remainder of this section,
 we use the following notation to denote the detection delay at the attacker side:
{\small\beq{eq:BDdelaydef}{\pa{\tau-\Ga}^{+}}}
where $\pa{x}^{+}$ equals $x$ for $x>0$ and zero otherwise. Our objective is to study this detection delay. In particular, we will show that because of the SNR this delay always exists and never equals to zero. Moreover, we show how PyCRA can increase this detection delay to enhance its performance.

Mathematically, the quickest change detection problem is formalized as the following minimization problem:
{\small\beq{eq:BDBQCDProblem}{\min_{\tau\in\Delta\pa{\Al}}E\pac{\pa{\tau-\Ga}^{+}}}}
subject to:
{\small\beq{eq:BDFAProblem}{P_{FA}=Pr\pa{\tau<\Ga}\le\Al\quad \forall\tau\in\Delta\pa{\Al}}}
In other words, the Bayesian quickest change detection problem finds the minimal expectation of the detection delay under the constraint that the false alarm probability $P_{FA}$ is equal to or smaller than $\Al$. 
\subsection{Characterizing the Asymptotic Delay of the Attacker}\label{subec:AsymptoticDelayAttacker}
In this subsection, we argue that if the attacker would like to have a small false alarm rate, then it will suffer from high detection delay. Intuitively, if the attacker wants to reduce probability of detection (and hence selects a small false alarm probability), then the attacker needs a longer delay to average over the ambient random noise. This intuition is captured by the following  result.
\begin{theorem}\label{th:DelayGuarantee}
\footnote{Although we present the proof for the case of AWGN, the equality holds for any sequence of random variables that are drawn i.i.d according to $f_{0}\pa{x}$, $f_{1}\pa{x}$ for which there exist $g_{1}\pa{\Al}$, $g_{2}\pa{\Al}$ such that:  
(1) $\lim_{\alpha\to 0}\ln\pa{f_{1}\pa{g_{1}\pa{\alpha}}}/\ln\pa{1/\alpha}= 1$; (2) $\lim_{\alpha\to 0}\ln\pa{\frac{f_{1}\pa{g_{1}\pa{\alpha}}}{f_{0}\pa{g_{1}\pa{\alpha}}}}/\ln\pa{1/\alpha}=0$; (3) $\lim_{\alpha\to 0}\ln\pa{g_{2}\pa{\alpha}}/\ln\pa{1/\alpha}=0$; (4)  $\lim_{\alpha\to 0}\ln\pa{Pr\pa{\Gamma=g_{2}\pa{\alpha}}}/\ln\pa{1/\alpha}\ge 1$.
In a nutshell $g_{2}\pa{\Al}$ allows to consider a finite number of elements, that scales with $\Al$, whereas $g_{1}\pa{\Al}$ enables to show that the elements converge uniformly over as $\Al$ decreases.} 
Consider an attacker attempting to detect changes in a physical challenge signal, subject to a false alarm probability $\Al$.
For any strategy the attacker chooses, and because of the signal to noise ratio (SNR) which exists for any sensor, the probability of the attacker having a constant detection delay $\tau > 0$ is bounded away from zero. Moreover, when $\Al\ll 1$ the probability of a delay smaller than any constant $K$ fulfills:
\beq{eq:ProbDelaySmallAlpha}{Pr\pa{\Ga\le\tau\le \Ga + K}\dot{=}\Al}
for any $\tau\in\Delta\pa{\Al}$ (where $g\pa{\Al}\dot{=}f\pa{\Al}$ when $\lim_{\Al\to 0}\frac{\ln\pa{g\pa{\Al}}}{\ln\pa{\Al}}=\frac{\ln\pa{f\pa{\Al}}}{\ln\pa{\Al}}$).
\end{theorem}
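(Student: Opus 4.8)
The plan is to bound the fast-detection probability by conditioning on the change point, transporting the post-change ``declare within $K$ steps'' event back to the pre-change measure, and then invoking the false-alarm constraint there. First I would write, via the prior on the change time, $Pr\pa{\Ga \le \tau \le \Ga + K} = \sum_{\gamma} Pr\pa{\Ga = \gamma}\, Pr\pa{\gamma \le \tau \le \gamma + K \mid \Ga = \gamma}$, and observe that the event $\pac{\gamma \le \tau \le \gamma+K}$ is measurable with respect to $X_1,\dots,X_{\gamma+K}$, of which at most $K+1$ observations (namely $X_\gamma,\dots,X_{\gamma+K}$) are governed by $f_1$ rather than $f_0$. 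Thus fast detection hinges on evidence gathered from only a \emph{fixed} number of post-change samples, which is the structural reason a small false-alarm budget forces the probability down.

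The core step is a change of measure. Writing $L_i = f_1\pa{X_i}/f_0\pa{X_i}$ and letting $E_0$ denote expectation under the law in which no change ever occurs (all $X_i \sim f_0$), I would use the identity $Pr\pa{\gamma \le \tau \le \gamma+K \mid \Ga = \gamma} = E_0\bigl[\mathbf{1}\pac{\gamma \le \tau \le \gamma+K}\prod_{i=\gamma}^{\tau} L_i\bigr]$, where the product carries at most $K+1$ factors on the relevant event. The all-$f_0$ probability of declaring inside the window is, up to the constant shift $K$ in the prior, exactly a false-alarm probability, hence bounded through $P_{FA}\le\Al$; since $K$ is fixed, the accumulated log-likelihood ratio over the window does not grow with $1/\Al$, and condition~(2) of the footnote forces the likelihood ratio at the threshold to be sub-polynomial, contributing only a factor $\Al^{o(1)}$.

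The delicate part is that $L_i$ is unbounded (for AWGN, $L_i \to \infty$ as $X_i \to \infty$), so the change of measure cannot be applied naively. I would therefore split each observation at the threshold $g_1\pa{\Al}$: on the typical event $\pac{X_i \le g_1\pa{\Al}}$ the product $\prod L_i$ is controlled by condition~(2), and combined with the false-alarm bound this regime contributes $\Al^{1-o(1)}$; on the atypical event $\pac{X_i > g_1\pa{\Al}}$ I would instead bound the contribution directly by the tail mass of $f_1$, which by condition~(1) is itself $\Al^{1-o(1)}$. To make these estimates uniform I would truncate the prior at the slowly growing horizon $g_2\pa{\Al}$ of condition~(3) and discard the tail $\gamma > g_2\pa{\Al}$ using condition~(4), at cost $\Al^{1+o(1)}$. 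Summing the two regimes over the $\Al^{-o(1)}$ admissible change points yields the upper bound $Pr\pa{\Ga \le \tau \le \Ga + K} \le \Al^{1-o(1)}$; the matching lower bound $\Al^{1+o(1)}$, and hence the equality~\eqref{eq:ProbDelaySmallAlpha}, follows by exhibiting the optimal (threshold / Shiryaev-type) detector that meets the false-alarm constraint with equality and computing its window-detection probability. The first assertion---that for fixed $\Al$ the delay stays positive with probability bounded away from zero---is separate and elementary: because $f_0$ and $f_1$ overlap, with positive probability the first post-change sample is indistinguishable from noise, forcing $\tau > \Ga$.

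The main obstacle I anticipate is reconciling the two regimes \emph{uniformly} over the growing set of candidate change points. The unbounded likelihood ratio means the ``typical'' bound from the change of measure and the ``atypical'' bound from the tail of $f_1$ must be balanced precisely at $g_1\pa{\Al}$ so that both evaluate to $\Al^{1-o(1)}$, and this balance has to hold simultaneously for every $\gamma \le g_2\pa{\Al}$. This is exactly the division of labor the footnote assigns to $g_1$ (uniform convergence of the per-sample terms) and $g_2$ (reduction to finitely many change points), and the crux is verifying that a product of $K+1$ such controlled factors still leaves only an $\Al^{o(1)}$ slack, so that neither the horizon truncation nor the per-sample thresholding degrades the exponent away from $1$.
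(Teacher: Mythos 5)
Your proposal is, at its core, the same argument as the paper's: both hinge on expressing the fast-detection probability as a false-alarm-type quantity multiplied by likelihood-ratio factors over the (at most) $K+1$ window samples, showing those factors cost only $\Al^{-o(1)}$ via a threshold growing like $g_{1}\pa{\Al}\sim\sqrt{\ln\pa{1/\Al}}$, and truncating the change-point sum at a horizon whose prior tail is $\dot{=}\,\Al$. The difference is one of packaging. The paper works under the joint (mixture) law and uses the exact posterior identity $Pr\pa{\Ga=k|x_{1},\dots,x_{k+1}}=Pr\pa{\Ga>k+1|x_{1},\dots,x_{k+1}}\cdot\frac{e^{-1}}{\pa{1-e^{-1}}^{2}}L\pa{x_{k}}L\pa{x_{k+1}}$ (its Lemma \ref{lem:EachElementAsymptiticEq}), controlling the $L$ factors by the substitution $x_{i}=a_{i}\sqrt{\ln\pa{1/\Al}}$; you instead condition on $\Ga=\gamma$ and transport to the pure-$f_{0}$ measure $E_{0}$, then relate $\sum_{\gamma}Pr\pa{\Ga=\gamma}E_{0}\left[\mathbbm{1}\{\gamma\le\tau\le\gamma+K\}\right]$ to $P_{FA}$ through the memorylessness of the geometric prior ($Pr\pa{j-K\le\Ga\le j}\le C_{K}\,Pr\pa{\Ga>j}$ with $C_{K}$ independent of $j$). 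These are equivalent mechanisms, and your version is arguably cleaner on one point: your per-$\gamma$ bounds (window of $\le K+1$ samples, $\gamma$-independent thresholds) are automatically uniform in $\gamma$, so you never truly need the $g_{2}$ truncation, whereas the paper's head-sum bound of the form $k_{0}\cdot(\text{max element})$ with $k_{0}=\Theta\pa{\ln\pa{1/\Al}}$ is exactly what forces it to argue uniform convergence over $k$.

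The one substantive divergence is the lower bound. You obtain $Pr\pa{\Ga\le\tau\le\Ga+K}\ge\Al^{1+o(1)}$ by exhibiting the optimal Shiryaev-type detector, but that establishes the asymptotic equality only for that single strategy, whereas the theorem asserts $\dot{=}$ for every $\tau\in\Delta\pa{\Al}$. The paper's route keeps an element-wise asymptotic \emph{equality} (not just an upper bound) between the delay decomposition and the false-alarm decomposition, which is what lets both directions propagate to an arbitrary strategy whose false-alarm probability equals $\Al$ (its Lemma \ref{lem:Convinfsumtofalsealarm} writes $Pr\pa{\Ga>\tau}=\Al$, an equality, not $\le\Al$). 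Note that under a literal reading neither argument can do better: for a never-declaring attacker $P_{FA}=0\le\Al$ while the fast-detection probability is $0$, so the two-sided claim requires the implicit restriction $P_{FA}=\Al$, which your detector-specific lower bound does not capture. A second, minor omission: the paper first reduces the continuous observation model to the discrete i.i.d.\ change-point setting via a matched-filter projection lemma; you start directly in the discrete setting, which is harmless but should be stated.
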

\begin{proof}
The full proof is in Appendix \ref{sec:proofodDetDelayDecay}.
\end{proof}
In other words, if an attacker wants to detect the end of the physical challenge within one time step (e.g., $K = 1$) while the false alarm probability is small (e.g, $\Al = 10^{-9}$), then Theorem~\ref{th:DelayGuarantee} guarantees that the probability of the attacker achieving this objective is equal to $10^{-9}$. The higher false alarm probability the attacker chooses, the higher the probability of it to being able to detect the end of the physical challenge period. Therefore, Theorem~\ref{th:DelayGuarantee} gives the attacker tradeoff between false alarm and detection delay.

Next, we show that due to the fact that PyCRA performs multiple physical-challenges over time, the attacker has to set his  false alarm probability to a small number  in order to avoid being detected over time.
\begin{cor}\label{cor:multipleChRe}
Assume the attacker has a false alarm probability $\Al$ and that PyCRA performs $K$ physical-challenges over time. The probability of detecting the attack is larger than or equal to $1-\pa{1-\Al}^{K}$.
\end{cor}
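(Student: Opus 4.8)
The plan is to reduce the statement to a standard ``at least one success in $K$ trials'' computation, where from PyCRA's standpoint a ``success'' on a single physical challenge is a \emph{false alarm} on the attacker's side. The crucial first step is to make precise the link, already anticipated in the setup of Section~\ref{subsec:QCPDBasicDefinitionResults}, between a false alarm and an actual detection. A false alarm means the attacker erroneously concludes that the silent/confusion phase has ended and therefore re-injects its signal $a(t)$ while the actuator is still silent (i.e.\ while $u(t)=0$). But any nonzero $a(t)$ present during $u(t)=0$ is exactly what the $\chi^2$ detector of Section~\ref{sec:sub:ch2detector} flags, since it produces a residual $g(t)$ exceeding the detection threshold. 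Hence, on a single challenge, the event ``attacker raises a false alarm'' is contained in the event ``PyCRA detects the attack,'' so the per-challenge detection probability is at least the per-challenge false alarm probability $\Al$.

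Next I would address the $K$ challenges jointly. Because PyCRA draws each challenge time $t_{challenge}$ and each silent-period length $\Ga$ independently at random (Steps~1--5 of the confusion phase), and because the ambient noise samples driving the attacker's change-detection procedure are i.i.d., the fresh randomness governing challenge $i$ is independent of everything observed during challenges $1,\dots,i-1$. Consequently, conditioned on any history of earlier challenges, the attacker still faces a false alarm with probability $\Al$ on challenge $i$, so the conditional probability of \emph{no} false alarm on challenge $i$ equals $1-\Al$ regardless of the past.

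Combining the two steps gives the bound directly. By the chain rule applied to the conditional probabilities from the second step, the probability that the attacker avoids a false alarm on every one of the $K$ challenges is $\prod_{i=1}^{K}\pa{1-\Al}=\pa{1-\Al}^{K}$, so the probability that it raises at least one false alarm is $1-\pa{1-\Al}^{K}$. Since by the first step every false alarm forces a detection, the probability that PyCRA detects the attack is at least $1-\pa{1-\Al}^{K}$, which is exactly the claimed inequality. The inequality (rather than equality) is the natural direction here, because on challenges where the attacker does \emph{not} false-alarm PyCRA may still detect the attack through the residual detection delay guaranteed by Theorem~\ref{th:DelayGuarantee}.

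The hard part will not be the counting, but justifying rigorously that the attacker cannot correlate its behavior across challenges so as to push the aggregate miss probability above $\pa{1-\Al}^{K}$ (which would violate the lower bound). This is precisely why the conditional-rate argument of the second step matters: it rests on the privacy of the modulation randomness (the footnote in Section~\ref{sec:introduction}) and on the i.i.d.\ structure of the observation noise, which together guarantee that the attacker gains no usable information from past challenges to suppress its false-alarm rate below $\Al$ on future ones. I would take care to state this fixed-conditional-rate assumption explicitly, since it is the hypothesis that turns the heuristic independence of the $K$ challenges into the exact factor $\pa{1-\Al}^{K}$.
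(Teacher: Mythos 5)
Your proposal is correct and follows essentially the same route as the paper's own proof: both rest on the observation that an attacker false alarm during a silent phase forces a detection (so the per-challenge detection probability is at least $\Al$), and both then count over the $K$ challenges---the paper by summing the geometric series $\sum_{i=0}^{K-1}\Al\pa{1-\Al}^{i}$, you by taking the complement $1-\pa{1-\Al}^{K}$ via the chain rule, which is the same computation. Your explicit justification of independence across challenges (fresh private randomness, i.i.d.\ noise) is a point the paper's terse proof glosses over, but it is a refinement of the same argument rather than a different one.
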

\begin{proof}
The probability of detecting an attack at each instance is lower bounded by the misdetection probability. Therefore, the probability of detecting an attack at the $l$th physical challenge is lower bounded by the geometrical distribution $\pa{1-\Al}^{l-1}\cdot \Al$. Hence, the probability of detecting an attack after $K$ physical-challenges is lower bounded by
$\sum_{i=0}^{K-1}\Al\cdot \pa{1-\Al}^{i}=1-\pa{1-\Al}^{K}.$
\end{proof}

From Corollary \ref{cor:multipleChRe} we get for $\Al\ll 1$ and $K\cdot\Al\ll 1$ that the probability of detecting an attack after $K$ physical-challenges is larger than approximately $K\cdot \Al$. 
Therefore, in order to maintain a small detection probability over time, the misdetection probability has to fulfill $\Al\ll 1$.

\subsection{Using Delay for Detection and Estimation} \label{sec:usingDelay}
In this subsection, we present a method to increase the detection delay induced by the confusion phase, as well as the interplay between increasing the detection delay and the capability of PyCRA to perform the resilience estimation discussed in Section~\ref{sec:ext_resilience}. We begin by proving a theorem showing that by decreasing the
actuated amplitude (in the confusion phase) by a factor of $\beta>1$, the detection delay on the attacker side $K$ can be increased by a factor of $\beta^{2}$, with the same probabilistic guarantee for the delay. Then we show that if PyCRA uses a Maximum Likelihood (ML) estimation procedure to perform the resilient estimation, then increasing the detection delay on the attacker side assists in enhancing the resilience of PyCRA.

We start by proving the following theorem showing the relation between decreasing the amplitude of the actuated signal and the probability of a constant delay.
\begin{theorem}\label{th:ExtendedDelay}
Consider an attacker attempting to detect changes in a physical challenge signal with misdetection probability $\alpha$. 
For any strategy the attacker chooses, and because of the SNR exists at any sensor, the probability of the attacker having a constant detection delay $\tau > 0$ is bounded away from zero,
i.e., with high probability the attacker will detect a change and turn off his signal only after time $T$ after the beginning of the confusion period.
In addition, decreasing the amplitude of the signal emitted by the active sensor during the confusion period by a factor of $\beta >1$ increases the delay $\tau$ by a factor of $\beta^{2}$.

Mathematically, the following equality holds:
{\small\beqn{}{Pr\pa{\Ga\le\tau\le \Ga + \beta^{2}\cdot K|\frac{A}{\beta},\beta^{2}T}=Pr\pa{\Ga\le\tau\le \Ga + K|A,T}}}
where $Pr\pa{\Ga\le\tau\le \Ga + K|A,T}$ is the probability of a delay of length smaller than or equal to $K$ when actuating with an amplitude $A$ and drawing the delay over a grid with a period time $T$, and $Pr\pa{\Ga\le\tau\le \Ga + \beta^{2}\cdot K|\frac{A}{\beta},\beta^{2}T}$ is the probability of a delay of length smaller than or equal to $\beta^{2} K$ when actuating with an amplitude $\frac{A}{\beta}$ and drawing the delay over a grid with a period time $\beta^{2} T$.
\end{theorem}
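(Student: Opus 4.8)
The plan is to reduce Theorem~\ref{th:ExtendedDelay} to Theorem~\ref{th:DelayGuarantee} by exhibiting an SNR-preserving time-rescaling that maps the low-amplitude confusion problem \emph{exactly} onto the original detection problem. The first assertion---that the delay is bounded away from zero---is immediate from Theorem~\ref{th:DelayGuarantee} applied with amplitude $A/\beta$, since the confusion phase is itself an instance of the Bayesian quickest-change-detection problem: the attacker must detect the transition from the silent level to the reduced level $A/\beta$ against the AWGN floor of standard deviation $\sigma$. The substance of the theorem is therefore the scaling equality, which I would establish by an exact equivalence of the two detection problems.

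First I would fix the AWGN model: before the change point $\Ga$ the attacker observes $X_i\sim\mathcal{N}(0,\sigma^{2})$, and after it $X_i\sim\mathcal{N}(c,\sigma^{2})$, where the mean shift $c$ equals the actuated amplitude ($c=A$ in the reference problem, $c=A/\beta$ in the confusion problem). The key structural fact is that for mean-shift detection in AWGN the only quantity that governs the procedure is the sequence of log-likelihood ratios $\ln\pa{f_{1}(X_i)/f_{0}(X_i)}=cX_i/\sigma^{2}-c^{2}/(2\sigma^{2})$; every detection probability of the form $Pr\pa{\Ga\le\tau\le\Ga+K}$ is a functional of this sequence alone. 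Consequently, two problems whose log-likelihood-ratio sequences have identical law are statistically indistinguishable.

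Next I would introduce the blocking map. Because $\Ga$ is drawn on a grid of period $\beta^{2}T$ in the confusion problem, every admissible change point is aligned with the boundaries of blocks of $\beta^{2}$ consecutive samples; hence within each block the observations are i.i.d.\ and the block average $\bar X=\beta^{-2}\sum_{i}X_i$ is a sufficient statistic for that block. Define the rescaled super-sample $Y=\beta\bar X$. A direct computation shows $Y\sim\mathcal{N}(0,\sigma^{2})$ before the change and $Y\sim\mathcal{N}(A,\sigma^{2})$ after it, and---crucially---that the per-block log-likelihood ratio equals $AY/\sigma^{2}-A^{2}/(2\sigma^{2})$, i.e.\ precisely the per-sample log-likelihood ratio of the reference problem at amplitude $A$. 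Thus the confusion problem, read at block resolution, reproduces the reference problem term by term: its grid of period $\beta^{2}T$ samples becomes a grid of period $T$ super-samples, and a delay window of $\beta^{2}K$ samples becomes a delay window of $K$ super-samples. Matching the change-point prior and the false-alarm constraint $\alpha$ across the two descriptions then yields
{\small\beqn{}{Pr\pa{\Ga\le\tau\le \Ga + \beta^{2}\cdot K|\frac{A}{\beta},\beta^{2}T}=Pr\pa{\Ga\le\tau\le \Ga + K|A,T}.}}

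I expect the main obstacle to be the step that upgrades this equivalence from the Bayes-optimal (Shiryaev) rule to \emph{any} strategy at the prescribed false-alarm level. The blocking reduction is lossless only if the grid alignment of $\Ga$ forces the relevant fine-resolution stopping rules to be measurable with respect to the coarse block filtration generated by the $\{Y\}$; one must verify, via sufficiency of the block averages together with the fact that no change point ever falls strictly inside a block, that restricting to block-adapted rules neither enlarges the attainable false-alarm region nor shrinks the detection probability. Carrying this out rigorously---so that the map between feasible detection rules of the two problems is a genuine false-alarm-preserving bijection leaving the delay distribution invariant---is the technical heart of the argument; once it is in place, the displayed equality and hence the $\beta^{2}$ inflation of the delay follow.
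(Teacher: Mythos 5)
Your proposal is correct and is essentially the paper's own argument: the paper's one-sentence proof states precisely that actuating at amplitude $A/\beta$ over a period $\beta^{2}T$, projecting the received random process over that period, and normalizing the projection by $\beta$ yields the same pdf as actuating at amplitude $A$ over period $T$ --- your block-average-and-rescale map $Y=\beta\bar{X}$ is just the discrete form of that projection-and-normalization, with the same SNR bookkeeping. The sufficiency issue you flag as the technical heart (reducing arbitrary fine-resolution strategies to block-adapted ones without changing the attainable false-alarm/delay region) is a genuine point, but the paper passes over it silently, relying implicitly on the matched-filter sufficiency argument from its discretization lemma.
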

\begin{proof}
Actuating with an amplitude $\frac{A}{\beta}$, changing the period time to $\beta^{2} T$, and projecting the received random process over this period time and then normalizing the projection by a factor of $\beta$ leads to the same pdf as the one that corresponds to actuating with an amplitude $A$, having a period time $T$, and projecting the received signal over this period time.
\end{proof}

Theorem \ref{th:ExtendedDelay} shows that the response time of the attacker and the delay can be decoupled. Furthermore, the delay can be adjusted to suit the response time of PyCRA.

We now discuss how increasing the delay affects the ML estimation and hence the resilience performance of PyCRA. As discussed in Section~\ref{sec:ext_resilience}, we consider the case when both the true sensor measurements as well as the attack signal have the same structure. Without loss of generality,  we can parameterize this signal structure by a parameter $\theta$ and therefore the signal space can be written as $\phi_{\theta}\pa{t}$. For instance, if the signal structure is dominated by sinusoidal functions, then $\phi_{\theta}\pa{t}$ can be written as$\phi_{\theta}\pa{t}=\sin \pa{\theta t}$. Similarly, if the signal structure could be a rectangular function with a period time $\frac{1}{\theta}$ and so on. In general, these functions are asymptotically uncorrelated and hence:
{\small\beq{eq:CorrletaionofwheelReflection}{
\lim_{T\to \infty}\frac{1}{T}\int_{0}^{T}\phi_{\theta}\pa{t}\cdot \phi_{\theta^{'}}\pa{t} dt=\left\{\begin{array}{cc}1 & \theta=\theta^{'}\\
0 & \theta\neq\theta^{'}
\end{array}\right.}}
where it can be assumed for $\theta\neq\theta^{'}$ that the correlation decreases monotonically.
Due to the AWGN, the signal from which PyCRA estimates the true signal measurement:
{\small\beqn{}{
y\pa{t}=\phi_{\bar{\theta}}\pa{t}+n\pa{t}\qquad 0\le t\le T_{Delay}
}}
where $n\pa{t}$ is the AWGN. In this case the ML estimation of $\bar{\theta}$ is:
{\small\beq{}{
\widehat{\theta}=\arg\max_{\theta} Re\pa{\int_{0}^{T_{Delay}} y^{\dagger}\pa{t}\cdot \phi_{\theta}\pa{t}dt}-\int_{0}^{T_{Delay}}|\phi_{\theta}\pa{t}|^{2}dt.
}}
Considering the projection of $y\pa{t}$ on $\phi_{\bar{\theta}}\pa{t}$ leads to the same $\SNR$ when increasing the delay by a factor of $\beta^{2}$ while decreasing the actuated amplitude by a factor of $\beta$ (assuming $\int_{0}^{T_{Delay}}|\phi_{\theta}\pa{t}|^{2}dt$ increases by a factor of $\beta^{2}$ with the delay, which is a reasonable assumption). On the other hand, based on (\ref{eq:CorrletaionofwheelReflection}), we get that the bias resulting from the projection of $y\pa{t}$ on $\phi_{\theta}\pa{t}$ for $\theta\neq\bar{\theta}$ decreases as the delay increases, which is desired for estimation.

In general, increasing the delay might affect the performance of PyCRA while improving the estimation, and decreasing it may lead to a less accurate estimate of the true sensor measurement. Therefore, the delay should be chosen based on the set of reflected functions as well as on the requirements of the system .

Incorporating a few PyCRA challenges together can significantly improve the latter estimation quality, if the signal reflected from the measured entity remains constant across these tests. The later condition can be easily achieved since PyCRA can issue challenges over several orders of magnitude faster compared to the change in the physical signal as discussed in Section~\ref{sec:discussion}.
The next corollary quantifies the improvement as a function of the number of tests.
\begin{cor}
Assume that $N$ PyCRA challenges are incorporated together, and further assume that the signal reflected from the measured entity remains constant across these tests. Also, assume that the period of time between tests is drawn uniformly, and that the maximal period of time between tests is $L\cdot T_{Delay}$. In this case the SNR of the hypothesis incorporating the true sensor measurement , $\bar{\theta}$, increases by a factor of $N$. In addition, when $N \gg 1$ the effect of the bias for $\bar{\theta}\neq \theta$ decreases to:
{\small\beqn{}
{
\int_{0}^{L\cdot T_{Delay}} y^{\dagger}\pa{t}\cdot \phi_{\theta}\pa{t}dt.
}}
\end{cor}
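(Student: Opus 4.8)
The plan is to treat the combination of $N$ challenges as a \emph{coherent} summation of the per-challenge ML correlation statistics, and to track separately how the true-signal term and the cross-correlation (bias) term scale with $N$. First I would fix notation: write the $i$-th uncorrupted observation window as $[s_i,s_i+T_{Delay}]$, where the start times $s_i$ differ by the random, uniformly drawn inter-test gaps, and set $y_i\pa{t}=\phi_{\bar{\theta}}\pa{t}+n_i\pa{t}$ with the $n_i$ independent AWGN realizations. The pooled decision statistic for a candidate $\theta$ is then $S_\theta=\sum_{i=1}^{N}\mrm{Re}\pa{\int y_i^{\dagger}\pa{t}\,\phi_\theta\pa{t}\,dt}$; the quadratic normalizer $\int|\phi_\theta|^2$ from the ML expression in Section~\ref{sec:usingDelay} is common to all windows, so I would track only the correlation term.

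For the true hypothesis $\theta=\bar{\theta}$: because the reflected signal is assumed constant across tests, the signal part of each summand equals $\int|\phi_{\bar{\theta}}\pa{t}|^2 dt=E_s$ independently of $i$, so the $N$ signal contributions add coherently to $N\,E_s$. The noise parts $\mrm{Re}\pa{\int n_i^{\dagger}\phi_{\bar{\theta}}}$ are zero-mean, mutually independent across the independent windows, and each has variance proportional to $E_s$; hence their sum has variance proportional to $N\,E_s$. Squaring the mean signal and dividing by the noise variance gives $\SNR\propto\pa{N E_s}^2/\pa{N E_s}=N\cd\SNR_1$, i.e.\ the claimed factor-$N$ gain. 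This settles the first assertion by a routine second-moment computation.

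For a wrong hypothesis $\theta\neq\bar{\theta}$ (the bias): the signal part of the $i$-th summand is the windowed cross-correlation $\int_{s_i}^{s_i+T_{Delay}}\phi_{\bar{\theta}}\pa{t}\phi_\theta\pa{t}dt$, whose value depends on the random offset $s_i$. I would take the expectation over the uniform placement of the windows inside $[0,L\cd T_{Delay}]$: since each $s_i$ is uniform, the expected windowed integral is proportional to $\int_{0}^{L\cd T_{Delay}}\phi_{\bar{\theta}}\pa{t}\phi_\theta\pa{t}dt$, which by the asymptotic-uncorrelatedness property~(\ref{eq:CorrletaionofwheelReflection}) is small and decreasing in $L\cd T_{Delay}$. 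For $N\gg1$ the empirical sum concentrates on this expectation (law of large numbers), so the pooled bias collapses from the comparatively large single-window cross-correlation to the small full-interval projection $\int_{0}^{L\cd T_{Delay}}y^{\dagger}\pa{t}\phi_\theta\pa{t}dt$, which is precisely the claimed expression once $\phi_{\bar{\theta}}$ is identified with the observed $y$.

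The main obstacle I anticipate is making the ``coherent for $\bar{\theta}$, incoherent for $\theta\neq\bar{\theta}$'' dichotomy rigorous under the random-gap model: one must argue that re-aligning the windows keeps the $\bar{\theta}$ term phase-coherent so that it genuinely accumulates as $N\,E_s$, while simultaneously the random start times $s_i$ randomize the cross-term so that its pooled value concentrates on the small full-interval average rather than reinforcing. Justifying the interchange of summation and expectation, establishing the finite-$N$ concentration, and checking that the normalizer $\int|\phi_\theta|^2$ does not reintroduce a comparable bias are the details requiring the most care; once~(\ref{eq:CorrletaionofwheelReflection}) is invoked, the remaining steps are standard.
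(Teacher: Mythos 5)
Your proposal is correct and follows essentially the same route as the paper's (very terse) proof: coherent accumulation of the $N$ aligned signal terms versus independent noise gives the factor-$N$ SNR gain, and the uniform spreading of the window offsets makes the pooled cross-correlation concentrate on the full-interval projection over $[0, L\cdot T_{Delay}]$. You have simply filled in the second-moment computation and the law-of-large-numbers step that the paper leaves implicit.
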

\begin{proof}
The increase in the SNR of $\bar{\theta}$ is straightforward. When $N\gg 1$ the offset between different correlations spreads uniformly and therefore, effectively, we get correlation over a period of $L\cdot T_{Delay}$ instead of $T_{Delay}$.
\end{proof}

The following sections describe three case studies of the PyCRA security scheme.  The first is an active attack detector for magnetic encoder sensors widely used in automobiles and industrial machineries.  The second outlines an algorithm for adding attack resilience in magnetic encoder sensors and the properties that make this resilience feasible. The final case study explores a method for detecting \emph{passive} eavesdropping attacks carried out against RFID systems.


\section{Case Study (1): Detecting Active Spoofing Attacks for Magnetic Encoders}
\label{sec:absattacks}

\tikzstyle{block} = [draw, fill=blue!20, rectangle, minimum height=3em, minimum width=6em]
\tikzstyle{sum} = [draw, fill=blue!20, circle, node distance=1cm]
\tikzstyle{input} = [coordinate]
\tikzstyle{output} = [coordinate]
\tikzstyle{pinstyle} = [pin edge={to-,thin,black}]

Magnetic encoders are active sensors used in a wide array of industrial, robotics, aerospace, and automotive applications.  The goal of an encoder is to measure the angular velocity or position of a gear or wheel in order to provide feedback to a motor controller.  
The operation of these systems depends heavily on the accuracy and timeliness of the individual encoders.  This section describes the basic operation of magnetic encoders in particular and the types of attacks that can be mounted against them.

\begin{figure}
\centering
\includegraphics[width=0.45\columnwidth]{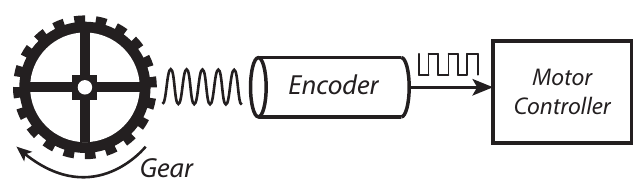}
\caption{\label{fig:abscircuit}Flow diagram for a typical magnetic encoder:  The signal begins as a reflected magnetic wave from a gear. This signal is captured by a pick-up coil or Hall Effect sensor, conditioned into a clean square wave, and finally translated into an angular velocity.}
\vspace{-3mm}
\end{figure}

\subsection{Magnetic Encoders}
\label{sec:sub:speedsensors}

Magnetic encoders rely on magnetic variations to measure the angular velocity of a gear or wheel and are often  designed to handle dust, mud, rain, and extreme temperatures without failing.  
The goal of each encoder is to provide a signal whose frequency corresponds to the speed of a gear. These signals are conditioned and passed to a motor controller unit which detects if any corrective actions need to be taken.   

Typical magnetic encoders operate by generating a magnetic field in the presence of a rotating ferromagnetic \emph{tone ring} or \emph{tone wheel}.  This ring has a number of teeth on its edge so that the reflected magnetic wave as observed by the encoder varies over time as a (noisy) sinusoidal wave. By measuring the frequency of this reflected signal over time, each sensor and consequently the motor controller is able to infer the angular velocity of any given gear, wheel, or motor as illustrated in Figure \ref{fig:abscircuit}. 

Attacks on magnetic encoders have been studied in \cite{YasserABS} in the context of Anti-lock Braking Systems in automotive vehicles. Both simple spoofing [T2] and advanced spoofing [T3] attacks are shown to influence the vehicle stability. In this case study, we show how PyCRA can detect the existence of such attacks.

\subsection{Constructing the PyCRA-secured Magnetic Encoder}
\label{sec:hardware}

\begin{figure}[!t]
\centering
{
	\begin{tabular}{c|c}
	\subfloat[ ]{\label{fig:secureSensor}
		\includegraphics[width=0.3\columnwidth]{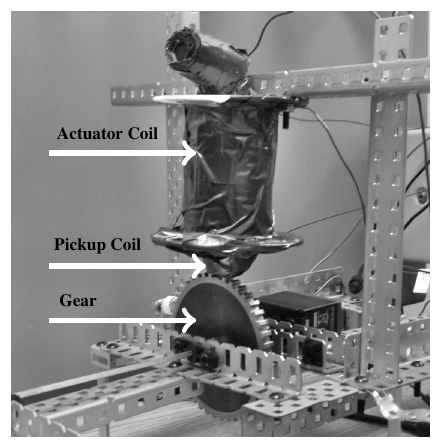}
	}
	&
	\subfloat[ ]{\label{fig:systemId}	
		\includegraphics[width=0.445\columnwidth]{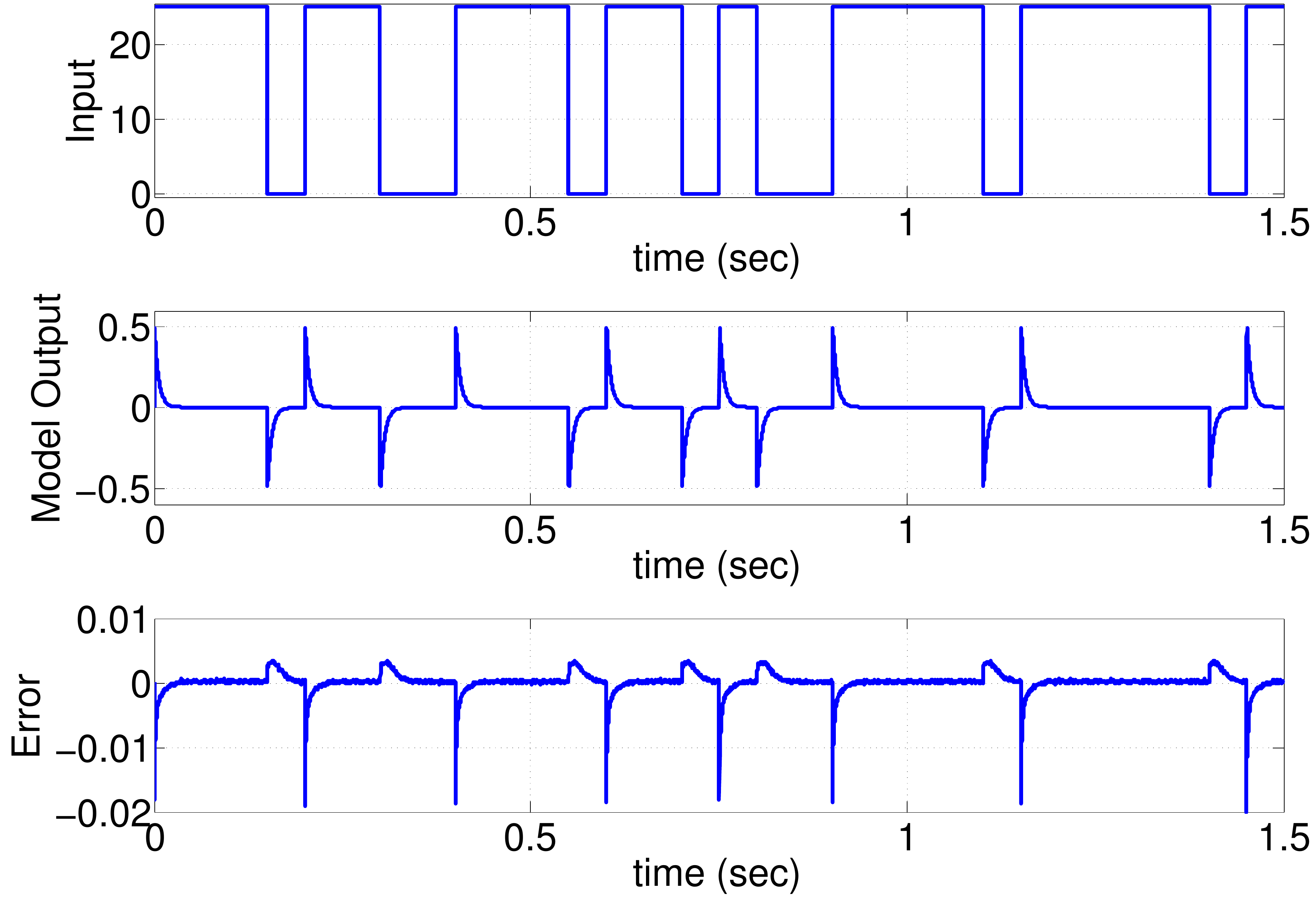}
	}
	\end{tabular}
}
\caption{\label{fig:mag_sensor}(a) PyCRA encoder actuator coil, sensor, and gear setup, (b) Validation of the physical model obtained for the secure sensor. The top figure shows the input fed to the validation phase. The middle figure shows the response, and the bottom figure shows the corresponding error.}
\vspace{-3mm}
\end{figure}

Physically, the proposed secure magnetic encoder sensor consists of two main parts: (i) the front-end   containing the actuator and pickup coils responsible for both probing the rotating tone ring and measuring the response, and (ii) the processing backend. Figure \ref{fig:secureSensor} shows the front-end of the sensor used in our evaluation. The actuator coil depicted is much larger than would be required in a commercial product, because it consists of a magnetic core and a hand-wound high-gauge wire. The following is an overview of the main blocks of the sensor.

\subsubsection{Actuator Coil}
The main component required for the secure sensor is the actuator coil. In this work, we use an insulated copper wire wrapped around a ferromagnetic core and driven using a power amplifier.

\subsubsection{Pickup and  Filtering}
The pickup (measurement) coil is wrapped around the same ferromagnetic core used for the actuator coil. In order to reduce the effect of noise from other EMI sources within the vehicle body, the output of the pickup coil is connected to a differential amplifier with high common-mode rejection. The output of this differential amplifier is connected to the digital processing backend.

Another security concern of the magnetic encoder is the wires connecting the coils to the digital backend. These wires pose a potential vulnerability, as an attacker can cut them and connect his attack module directly. However, such attacks are already accounted for in many systems as addressed in Assumption A1. 

\subsubsection{Processing Elements}
The secure sensor requires enough processing power to perform the necessary computations in real-time. The DSP calculations take place on a high power ARM Cortex (M4 STM32F407) processor, which has ample floating point support. We do not consider any power consumption issues in our design.


\subsection{Obtaining the Sensor Model}
\label{sec:systemID}
The dynamics of the sensor (including the actuator, high gain current amplifier, sensors, and the signal conditioning circuit) are identified using standard system identification methods \cite{ljung1998system}. That is, we applied four different pseudo random binary sequences (PRBS) to the system, collected the output, and then applied subspace system identification techniques in order to build models of increasing complexity \cite{ljung1998system}. Finally we used both whiteness tests and correlation tests to assess the quality of the obtained model \cite{Landaue_Book}.  

In order to validate the model, we generated a random sequence similar to those used in the real implementation of the sensor. We fed the same input to both the sensor and the model and recorded the error. Figure \ref{fig:systemId} shows the results of this experiment. The results show that the model accuracy depends on whether the input changes from one to zero or from zero to one. This difference in the response reflects the behavior of the high gain current amplifier which has different spikes depending on the changes of the input. 

In our case, we are interested only in the case when the sensor enters the ``silence'' time, i.e. $u(t)$ transitions from 1 to 0. In such cases, Figure \ref{fig:systemId} shows that the model is reasonably accurate with an error in the range of $0.005$ Volts.

\subsection{Testbed}
In order to test the PyCRA-secured magnetic encoder, we constructed a testbed consisting of the proposed secure sensor attached to a Mazda Rx7 tone ring. The tone ring is attached to a DC motor which simulates a rotating wheel. An additional coil is added to simulate the effect of an attacker. The attacker coil is also controlled by a high gain amplifier controlled through a real-time xPC Target system connected to MATLAB.

A Mazda RX7 magnetic encoder sensor is also attached to the same tone ring in order to provide ground truth. The output of this sensor is connected to a MAX9926U evaluation kit which includes an interface capable of converting the raw sinusoidal wave into the encoded square wave as shown in Figure \ref{fig:abscircuit}. The output of the proposed secure sensor as well as the output of the MAX9926U is monitored by the same real-time xPC Target for comparison.

\subsection{Calibration against natural variations}
Sensor modeling is usually done in a controlled environment. However, once the sensor is placed in a testbed, multiple natural variations, mechanical asymmetries, and other environmental factors degrade the accuracy of such models. 
To account for these variations, we use a simple learning mechanism to estimate the noise level in the measurements and the deviation between the expected outputs (as per the model) and the actual outputs. Once these parameters are learned, we can set the alarm threshold accordingly.
Results can be further improved by considering online identification-and-calibration of the sensor model.

\subsection{Attack Detection Results for Magnetic Encoders}
We begin with a simple spoofing attack [T2] in which an attacker injects a sinusoidal wave of varying frequency.  Spoofing attacks of this nature attempt to overpower the true frequency of the system and force the sensor to track the false frequency (mirroring the simplistic spoofing attack in \cite{humphreys:gps}). In this experiment, the original tone ring frequency is fixed at 71 Hz and the frequency of the attacking coil increases linearly from 60 Hz to just over 400 Hz. 

As per our attacker model in Section \ref{sec:attacker_model}, we assume that the attacker attempts to conceal his or her presence (Adversarial goal [G1]).  This means that the adversary will be able to detect when the actuator coil is turned off and will, after some time $\tau_{attack}$, temporarily halt the attack. 

The stealthiness of the attacker necessitates that the PyCRA detection scheme have high accuracy even when the attacker is quick to react.  Figure \ref{fig:raw_signal} shows an example of the random PyCRA challenges and the corresponding observed signal both before and after an attack is present.  In this case, the adversary quickly disables the attack after the actuator coil transitions from 1 to 0, lagging only by the small delay $\tau_{attack}$, imperceptible in the figure. We evaluated the PyCRA detection scheme across a range of $\tau_{attack}$ values, $\chi^2$ detection thresholds ($\alpha$), and sampling frequencies ($F_s$).  Note that in order to simulate an attacker with 0 ms physical delays (which is physically impossible), we gave the attacker access to the random signal generated by PyCRA so that the attacker can start shutting down his actuators \emph{before} PyCRA generates the physical challenge.

In total, we conducted over 30 experiments on our experimental testbed to validate the robustness of the proposed security scheme. The resulting accuracy with $F_s = 10$ kHz is depicted by the ROC{\footnote{A Receiver Operating Characteristic (ROC) is a visual aid for evaluating the accuracy of binary classifiers in terms of both true and false positive rates.}} curves in Figure \ref{fig:fixed_roc} for a range of $\alpha$.  From this figure it is clear that between $\tau_{attack} = 500$ and 700 $\mu$s is all that is necessary for PyCRA to accurately distinguish attacked signals from normal signals, if $\alpha$ is chosen appropriately.  With $\alpha$ set to a predetermined value, we can vary $F_s$ as shown in Figure \ref{fig:detection_all}\footnote{The $F_1$ score is a statistical measure of a binary classifier that measures the classifier accuracy in terms of precision and recall.}.  These results show that increasing $F_s$ from 10 kHz to 30 kHz reduces required time for detection to between $\tau_{attack} = 100$ and 200 $\mu$s.  
\begin{figure}[!t]
\centering
{
	\begin{tabular}{c|c}
	\subfloat[ ]{\label{fig:fixed_roc}
		\includegraphics[width=0.4\columnwidth]{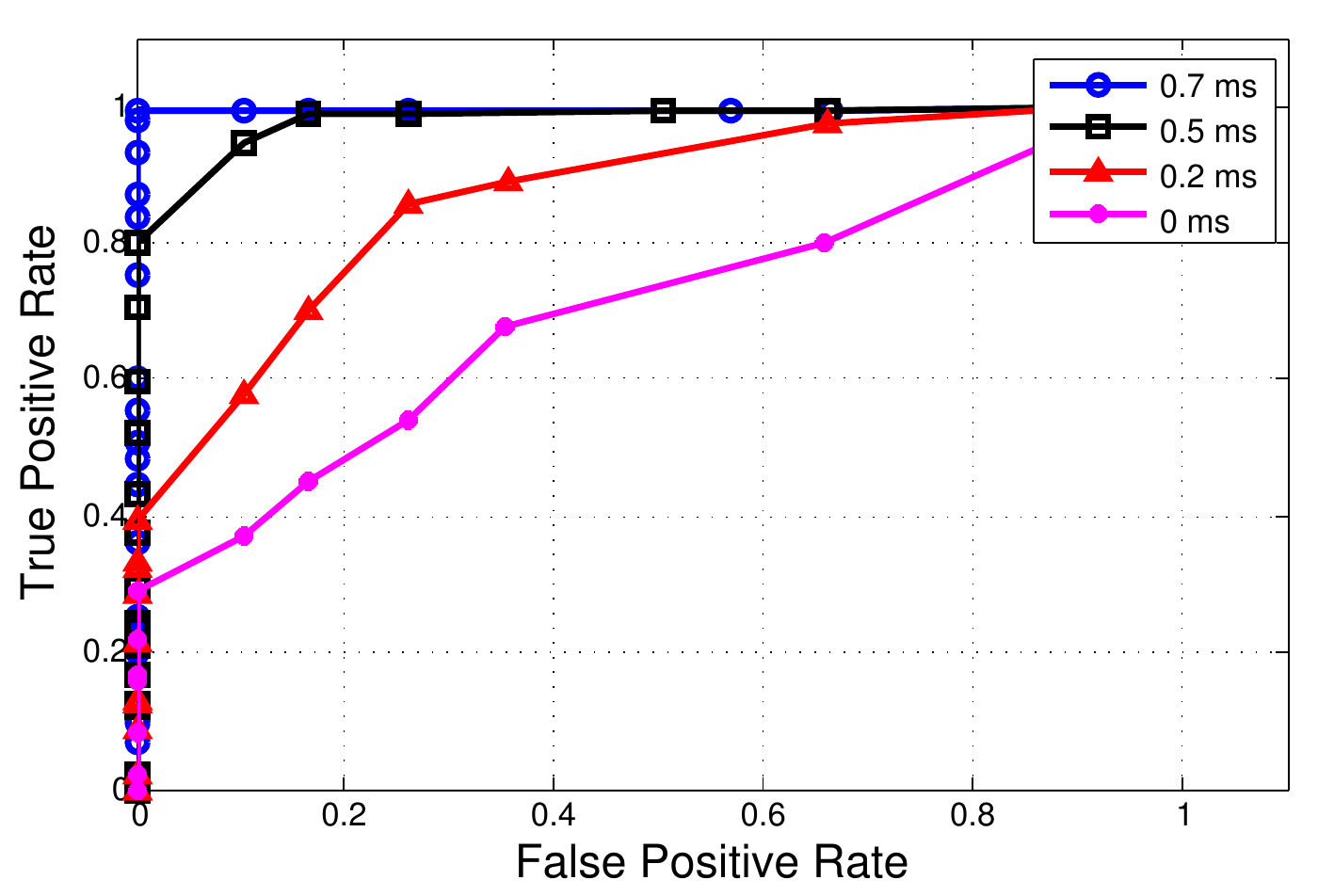}
	}
	&
	\subfloat[ ]{\label{fig:detection_all}	
		\includegraphics[width=0.4\columnwidth]{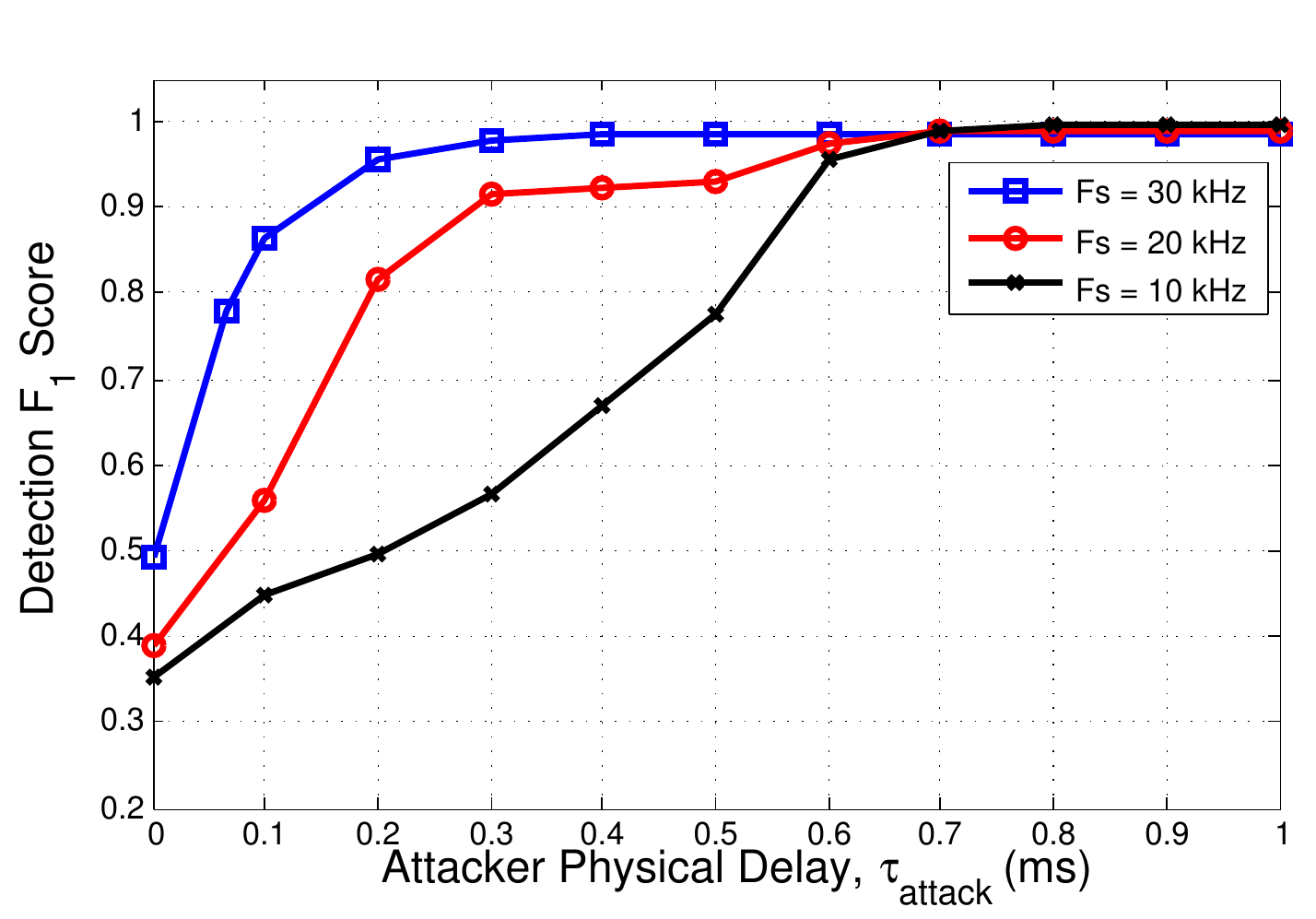}
	}
	\end{tabular}
}
\caption{\label{fig:detection}(a) Accuracy of attack detection for a simple spoofing attack with sampling rate $F_s = 10 $ kHz and a range of $\tau_{attack}$, (b) Attack detection accuracy as a function of $\tau_{attack}$ for several sampling rates, $F_s$.}
\vspace{-4mm}
\end{figure}
Repeating these experiments for the \emph{advanced spoofing attack} [T3] yields similar results.  In fact, there is no fundamental difference between the two in terms of attack detection; this is governed by the dynamics of the attacker's actuator rather than the nature of the attack itself. 

It is important to evaluate this detection accuracy (which is our security guarantee) in terms of the physical delay property $\tau_{attack}$ of the attacker model. 
In practice, the state-of-the-art in low-dimension, high Q-factor inductive coils that provide enough power to carry out a spoofing attack will have $\tau \gg 200 \mu$s{\footnote{These values were obtained by surveying a range of state-of-the-art, commercially available components.}}. From Figure \ref{fig:detection_all} it is apparent that PyCRA has good performance for this range of practical physical delays.

Moreover, the results we have shown thus far use a relatively low sampling frequency (high end micro controllers can operate in the range of 200 kHz).  As illustrated by Figure \ref{fig:detection_all}, higher sampling rates result in reduced attack detection times. However, using low sampling frequencies in our case study serves to illustrate the efficiency of the proposed detection mechanism.

\section{Case Study (2): Resilience to Active Spoofing Attacks for Magnetic Encoders}
\label{sec:absattacks2}

\begin{figure*}
\centering{
\resizebox{0.6\textwidth}{!}{
	\includegraphics{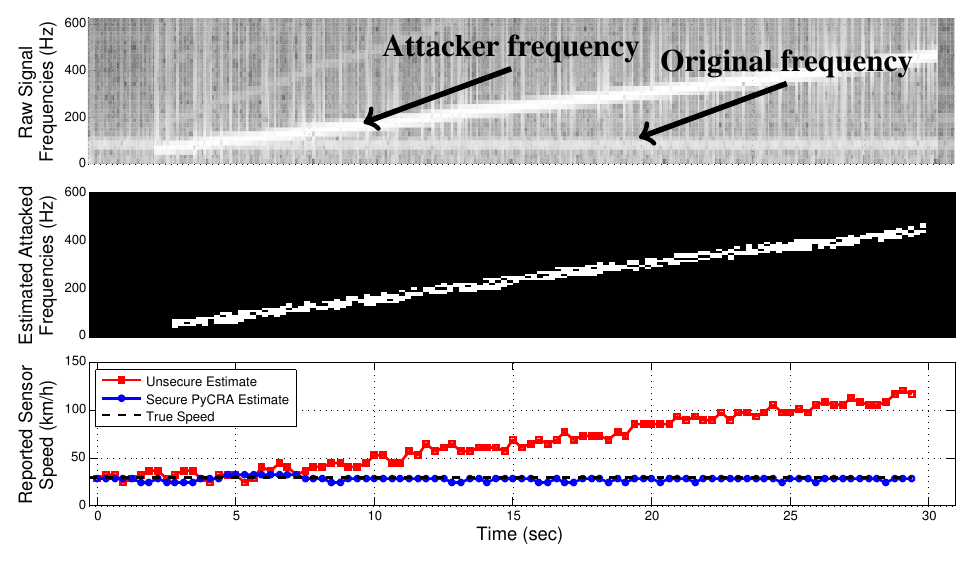}
}
\caption{\label{fig:resilience} A visual tour of the PyCRA resilience scheme against the swept frequency attack.  The spectra of the raw signal (top) is compared against those frequencies estimated to be under attack (middle), resulting in the wheel speed estimates shown (bottom). 
}
}
\vspace{-6mm} 
\end{figure*}

\begin{figure*}[!t]
\centering
{
	\begin{tabular}{c|c}
	\subfloat[ ]{\label{fig:resilience_roc}	
		\includegraphics[width=0.37\textwidth]{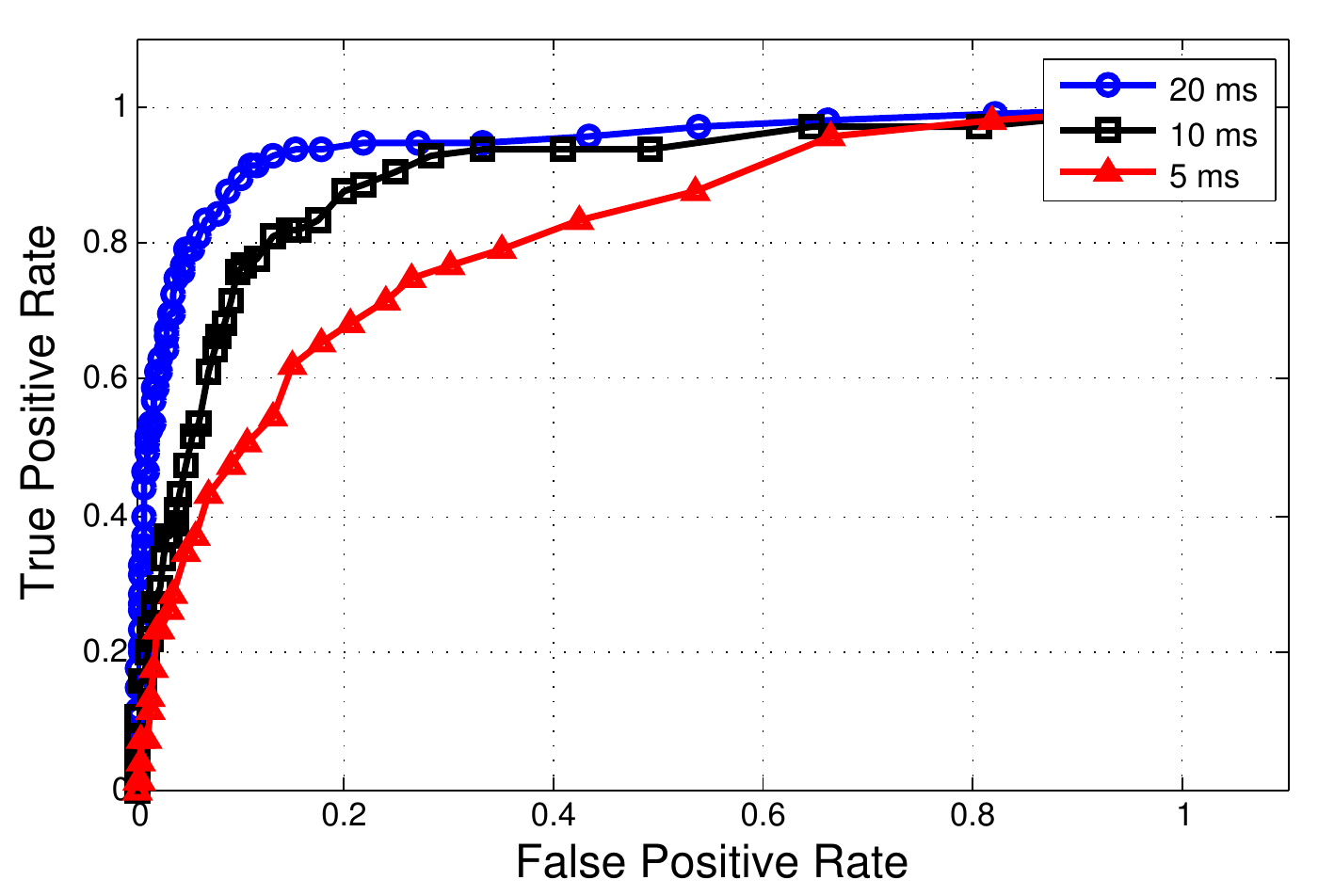}		
	}&
	\subfloat[ ]{\label{fig:resilience_roc30k}	
		\includegraphics[width=0.37\textwidth]{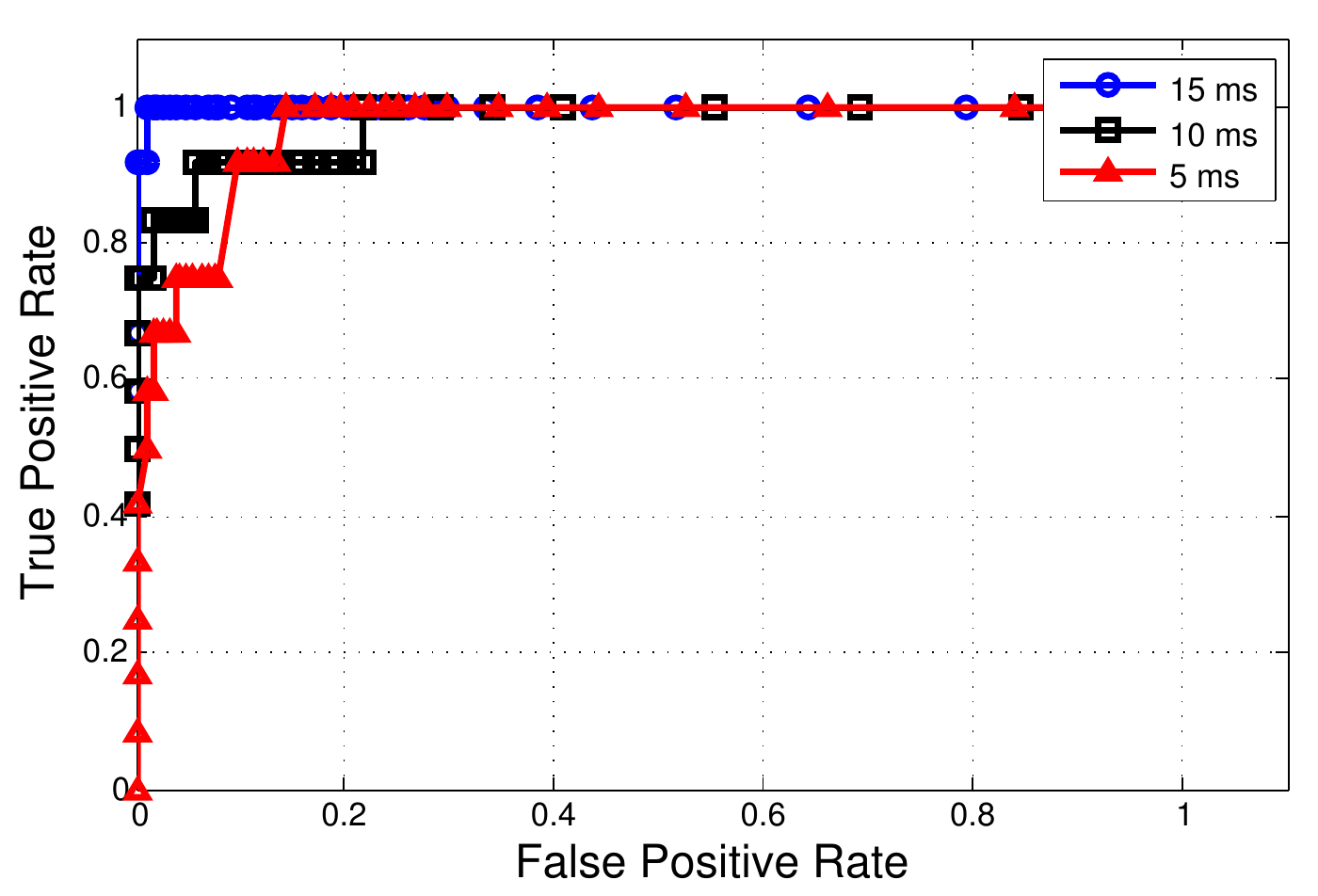}	
	}
	\end{tabular}
}
\caption{\label{fig:roc}Accuracy of predicting the attacked frequencies for a swept frequency attack with a range of $\tau_{attack}$ values for different sampling rates (a) $F_s = 10$ kHz, (b) $F_s = 30$ kHz.}
\vspace{-4mm}
\end{figure*}

In order to demonstrate the PyCRA Resilient Estimator, we implement the scheme described in Section \ref{sec:sub:resilient_chi2} and test against both simple spoofing attacks (consisting of a varying sine wave) and more advanced spoofing attacks that attempt to suppress the true tone ring signal. 
\subsection{Attack Resilience for Simple Spoofing Attacks} We begin by testing the attack resilient sensor against a simple attack consisting of a single sinusoidal waveform swept across multiple frequencies. We recursively compute a series of complex DFT coefficients from the raw input signal shown in Figure \ref{fig:raw_signal}. The magnitude of these DFT coefficients over time is shown in Figure \ref{fig:resilience} (top).  This clearly shows both the true tone ring frequency as well as the swept frequency attack. Second, the errors between the predicted DFT recursions and the measured DFT recursions following every $u(t) = 1\rightarrow 0$ transition are computed in an attempt to capture the spectra of the attacker.  Those frequency coefficients with a residual error greater than $\beta$ are said to be under attack, as shown by the white points in Figure \ref{fig:resilience} (middle).  Finally, at each point in time the candidate frequencies (obtained by peak detection on the spectra of the raw signal) are compared against the set of frequencies estimated to be under attack to arrive at the final secure frequency estimate, shown in Figure \ref{fig:resilience} (bottom).  Note that a traditional encoder sensor would report an erroneous speed as soon as the attack was initiated.  The PyCRA sensor, on the other hand, remains stable throughout the entirety of the attack. 

As with the attack detection scheme, we must evaluate the ability of PyCRA to correctly identify attacked frequencies as a function of $\tau_{attack}$.   In order to do this, we run more than 90 experiments and for each one we compared the known attacker frequencies to the estimated attacker frequencies.  The accuracy of this detection is again depicted in the ROC curves shown in Figure \ref{fig:roc}.  As expected, a considerably longer time is required to accurately identify the attacked frequencies, with 10 to 15 ms sufficing for most cases.

\begin{figure*}
\centering{
\resizebox{0.65\textwidth}{!}{
\includegraphics[width=0.75\textwidth]{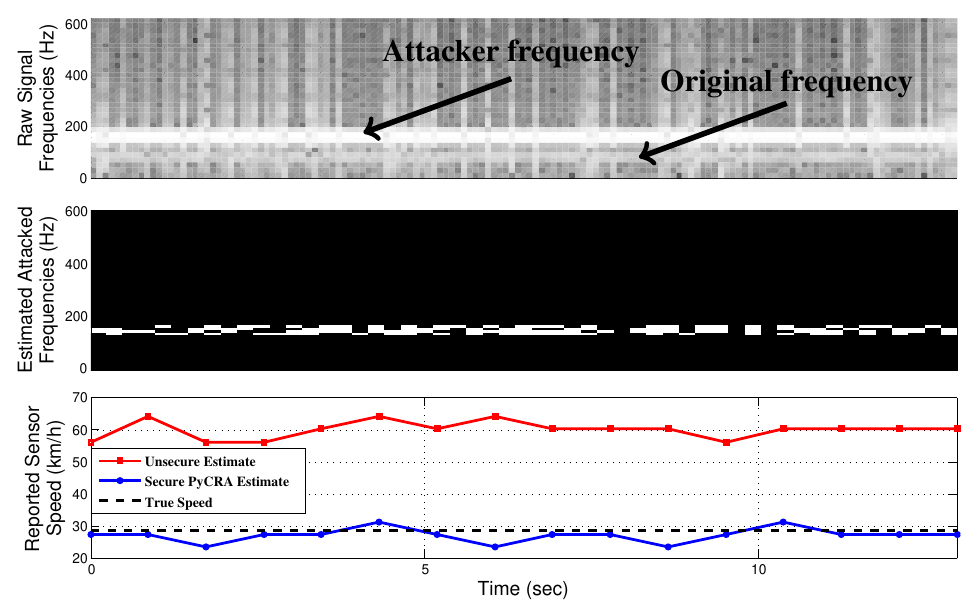}}
}
\captionof{figure}{\label{fig:resilience2} A visual tour of the PyCRA resilience scheme against the spoofing attack.  The spectra of the raw signal (top) is compared against those frequencies estimated to be under attack (middle), resulting in the wheel speed estimates shown (bottom). } 
\vspace{-4mm}
\end{figure*}

\subsection{Attack Resilience against Advanced Spoofing Attacks}
In this section we explore how PyCRA performs in the face of a more advanced spoofing attack---the attacker now attempts to actively suppress the true tone-ring signal by broadcasting a phase-shifted version of the signal and superimposing a spoofed signal of a different frequency.  

Again we set the true tone ring frequency to 71 Hz, but now the attacker attempts to spoof a fixed frequency as shown by the spectra in Figure \ref{fig:resilience2} (top). As before, PyCRA successfully identifies the attacked frequencies (Figure \ref{fig:resilience2} middle) and thus identifies the correct tone ring frequency (Figure \ref{fig:resilience2} bottom).  Because of the physical delay of the attacker, $\tau_{attack}$, the true tone ring frequency appears when the actuator coil transitions from 0 to 1 with increasing magnitude for larger $\tau_{attack}$. This can be seen in Figure \ref{fig:resilience2} (top) as the low amplitude energy centered at 71 Hz. 

We forgo the previous analysis of detection and prediction accuracy as a function of attacker cutoff delay, arguing that the results shown in Figure \ref{fig:roc} apply to advanced spoofing attacks as well.


\section{Case Study (3): Detection of Passive Eavesdropping on RFID}
\label{sec:resultsRFID}

In this section, we discuss  the performance of PyCRA when used for detection of passive eavesdropping attacks on active sensors. In this scenario, an adversary \emph{listens} or \emph{records} the same physical signals captured by the sensor. Indeed this type of attack satisfies assumptions A1-A3 described in Section \ref{sec:pycra} and hence it will be useful to extend PyCRA to such cases.

\subsection{Passive Eavesdropping on RFID}

In this section, we use radio-frequency identification (RFID) as an example where successful passive attacks can have severe consequences.
RFID systems are commonly used to control access to physical places and to track the location of certain objects.  An RFID system consists of two parts: a reader and a tag. The RFID reader continuously sends a magnetic signal that probes for existing tags in the near proximity. Once a tag enters the proximity of the reader, the tag starts to send its ``unique'' identifier to the reader by modulating the magnetic probe signal.

RFID tags can be classified as either passive or active based on the source of their energy. While passive tags rely on the energy transmitted by an RFID reader in order to power their electronics, active tags have their own power supplies. As a result, active tags can be equipped with computational platforms that run cryptographic and security protocols to mitigate cloning attacks \cite{rfid_secret}. On the other hand, passive tags do not enjoy these luxuries and therefore are more prone to cloning attacks. 

Cloning of passive RFID tags can take place in one of two forms. In the first, the attacker uses a \emph{mobile} RFID reader and attempts to place it near the RFID tag. The tag innocently responds to the communication issued by the adversarial reader and sends its unique identifier. The other form of attack carried out against RFID systems is to eavesdrop on the communication between the tag and a legitimate reader.  
RFID protective shields and blockers \cite{rfid_overview,rfid_security} are commonly used as countermeasure to the first form of cloning attacks discussed above.  Unfortunately, these shields are of no use against the second type of attacks, because the user is obliged to remove the protective shield before using the tag with the \emph{legitimate} RFID reader, at which time an adversary can successfully eavesdrop.

To carry out an eavesdropping attack, a sniffing device must be placed in close proximity to the RFID reader so that it can measure the electromagnetic waves transmitted by the reader and reflected by the tag. In the following results, we show how PyCRA is able to detect the existence of such an attack, allowing the reader to disable communication with the tag before revealing private information.

\begin{figure}
  \begin{minipage}[t]{0.4\textwidth}
  \mbox{}\\[-\baselineskip]
    \begin{tabular}{c|c}
	\subfloat[]{\label{fig:rfid_schematic}
		\includegraphics[width=0.22\textwidth]{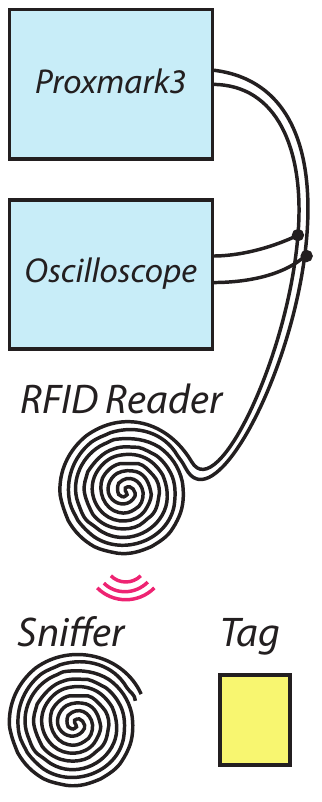}
	}&
	\subfloat[]{\label{fig:rfid_system}
	\includegraphics[width=0.41\textwidth]{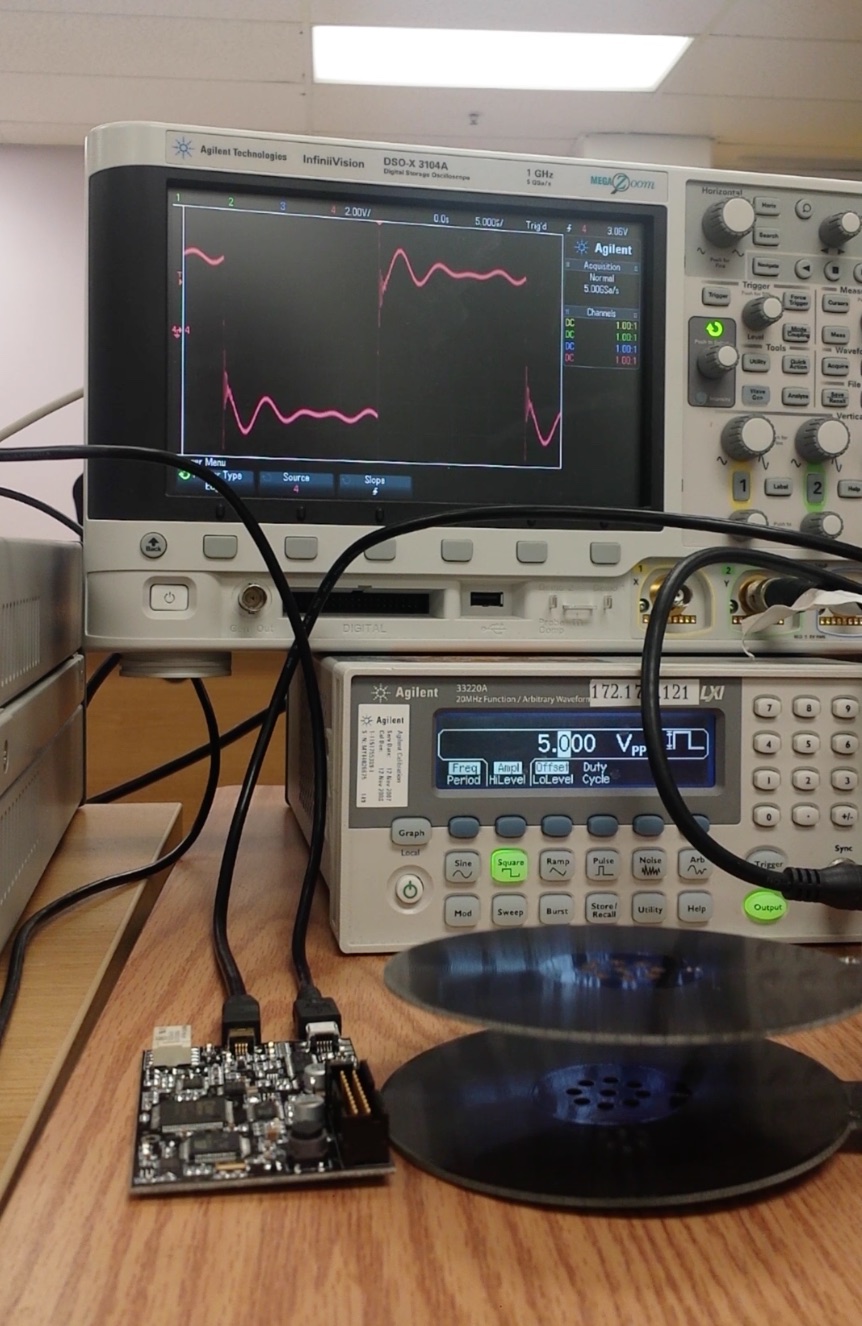}
	}
	\end{tabular}
  \end{minipage}\hfill
  \begin{minipage}[t]{0.5\textwidth}
  \mbox{}\\[-\baselineskip]
    \caption{
       The schematic used in the RFID eavesdropping case study (a) and corresponding hardware setup (b). The setup contains two low frequency antennas (one for the RFID reader and one for the eavesdropper) along with a Proxmark3 board running the PyCRA detection algorithm. The analog signal is also captured by a digital oscilloscope for visualization
    } \label{fig:rfid_system}
  \end{minipage}
\vspace{-3mm}  
\end{figure}

\subsection{Using PyCRA to Detect Eavesdroppers}
Recall from the physics of electromagnetic waves that antennas placed within close proximity will always affect each other's electrical signals to some degree. This is a fundamental law in physics known as \emph{magnetic coupling} \cite{brauer2006magnetic} and is used in the design of RFID. Therefore this attack satisfies Assumption AD1 described in Section \ref{sec:sniffing} and consequently PyCRA can be used in such cases.
Note that, similar to the physical delays, the magnetic coupling assumption is a fundamental limitation that the attacker cannot overcome. Hence, we can use the PyCRA detection algorithm outlined in Section \ref{sec:pycra} by computing the residual between the model (which assumes magnetic coupling only with the RFID tag) and the sensor output which suffers from magnetic coupling with the eavesdropping antenna.  
This is shown in the experimental results in the next subsection.


\subsection{Hardware Setup and Real-time Results}
Figure \ref{fig:rfid_system} shows the hardware setup used to carry out this case-study. In this setup, two identical low frequency RFID antennas are used. The first RFID antenna is used as the \emph{legitimate} RFID reader while the second is used to eavesdrop. We built a PyCRA-enabled RFID reader on top of the open source RFID Proxmark3 board, adding random modulation to the probing signal and constructing the appropriate models as outlined in Section \ref{sec:pycra}. 

Figure \ref{fig:RFID_response_normal} (top) shows the received electromagnetic wave of an RFID reader operating in the classical operational mode. In this mode, the RFID reader generates the standard 125KHz sinusoidal wave that is used to communicate with the RFID tag. Figure \ref{fig:RFID_response_normal} (middle) shows the resulting electromagnetic wave when an eavesdropper uses an identical antenna to listen. In this case it is hard to distinguish between the two waves and hence hard to detect the existence of an eavesdropper. This is better illustrated by the residual between the two waves as shown by the residual in Figure \ref{fig:RFID_response_normal} (bottom).

On the other hand, Figures \ref{fig:RFID_response_pycra_noattack} and \ref{fig:RFID_response_pycra} show the result of the same case study when PyCRA is used with and without an eavesdropper present, respectively. In this mode, the PyCRA algorithm occasionally halts the normal operation of the RFID reader and switches to detection mode. In this mode, PyCRA selects randomized periods of time to issue the physical challenges by switching the RFID antenna from on to off and from off to on.

In order to select an appropriate alarm threshold, we first study the effect of the magnetic coupling between the reader and the tag in the absence of an eavesdropper. This is shown in Figure \ref{fig:RFID_response_pycra_noattack} where the alarm threshold is chosen such that no alarm is triggered when the effect of the magnetic coupling---the residual  between the ``no tag'' response (top) and the response with a tag (middle)---is within the acceptable range. This acceptable residual range allows for coupling induced by the tag only. Any increase on top of this allowable threshold is then attributed to the existence of an excessive magnetic coupling due to an eavesdropper.

Figure \ref{fig:RFID_response_pycra} (middle) shows the response to the same set of physical challenges when the attacker places an eavesdropping antenna in the proximity of the RFID reader while the tag is also present. Thanks to the physical challenge produced by PyCRA, the magnetic coupling produced by the eavesdropper antenna is now easily detected. This can be shown in Figure \ref{fig:RFID_response_pycra} (bottom) which shows the residuals between the expected output and the measured signal exceeding the alarm threshold.

We recorded over 1 hour of RFID measurements with varying distances of the malicious eavesdropping antenna.  Of those experiments where the attacker was close enough to observe the RFID communication, PyCRA successfully detected the existence of an attacker with 100\% accuracy. Intuitively, if the attacker successfully measures the RFID signal, he has consequently removed enough energy from the channel to trigger an alarm.

\begin{figure*}
\hspace{-7.5mm}
\centering
{
	\begin{tabular}{c|c|c}
	\subfloat[ ]{\label{fig:RFID_response_normal}
	\begin{tabular}{c}
		\includegraphics[width=0.274\textwidth]{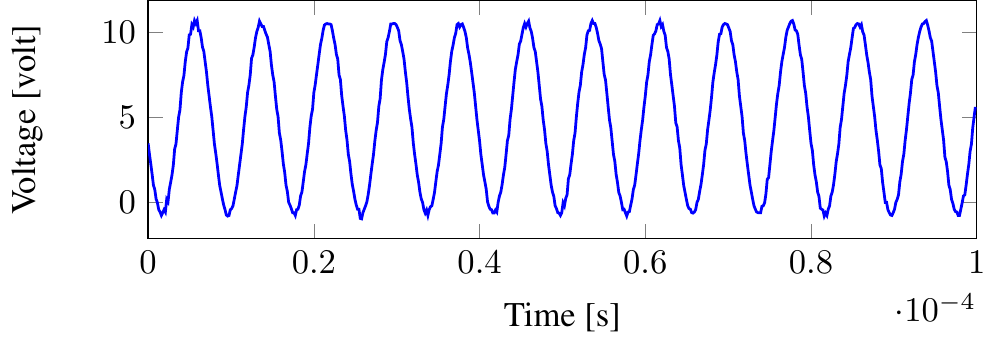}\\
		\includegraphics[width=0.274\textwidth]{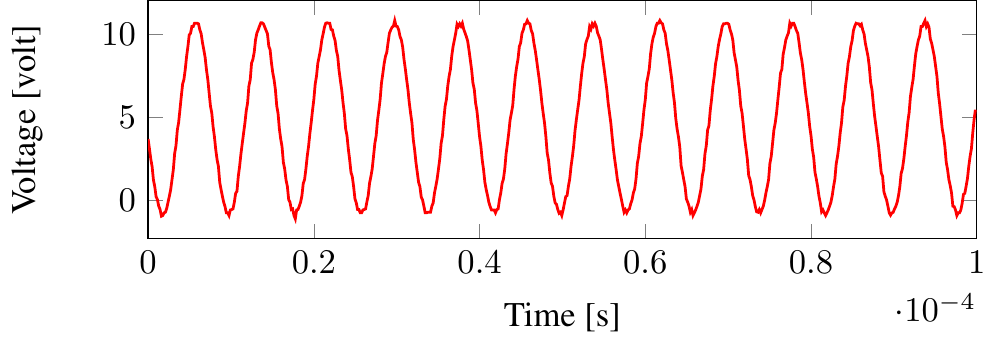}\\
		\includegraphics[width=0.274\textwidth]{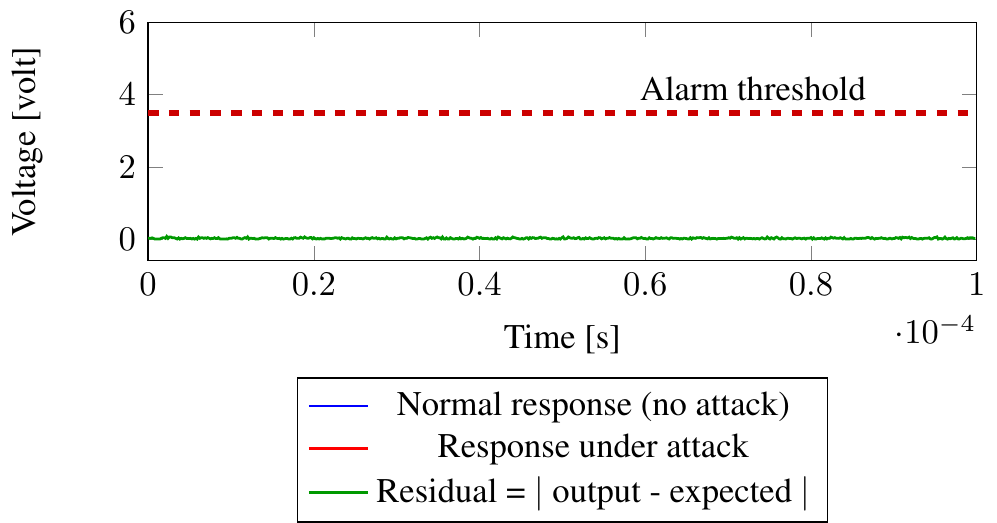}\\
	\end{tabular}
	}&
	\subfloat[ ]{\label{fig:RFID_response_pycra_noattack}
	\begin{tabular}{c}
		\includegraphics[width=0.274\textwidth]{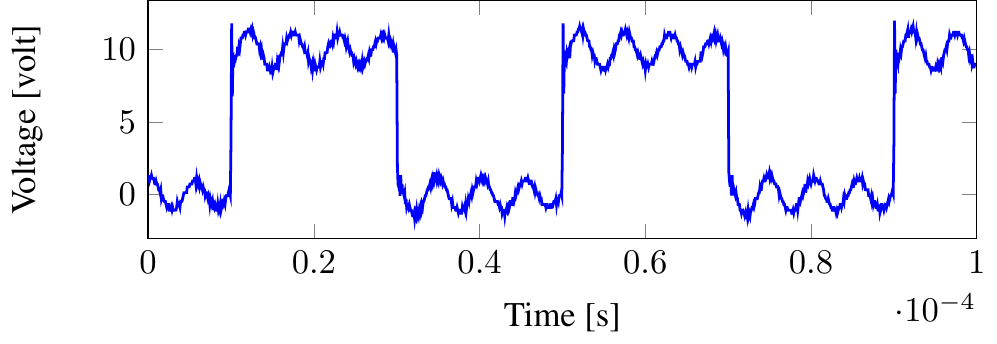}\\
		\includegraphics[width=0.274\textwidth]{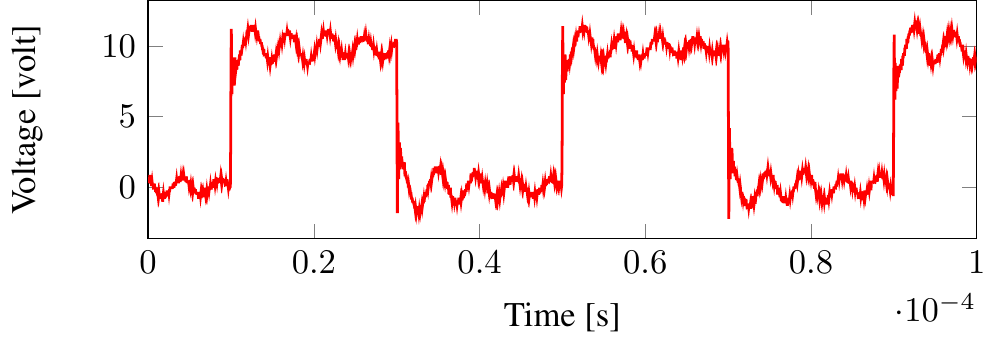}\\
		\includegraphics[width=0.274\textwidth]{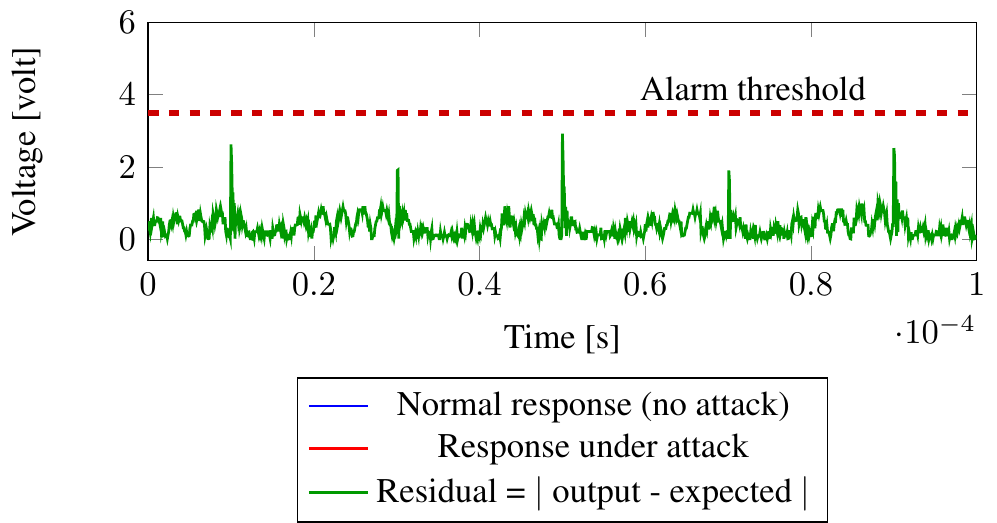}\\
	\end{tabular}
	}&
	\subfloat[ ]{\label{fig:RFID_response_pycra}
	\begin{tabular}{c}
		\includegraphics[width=0.274\textwidth]{Figure_RFID_noTag_noAttack.pdf}\\
		\includegraphics[width=0.274\textwidth]{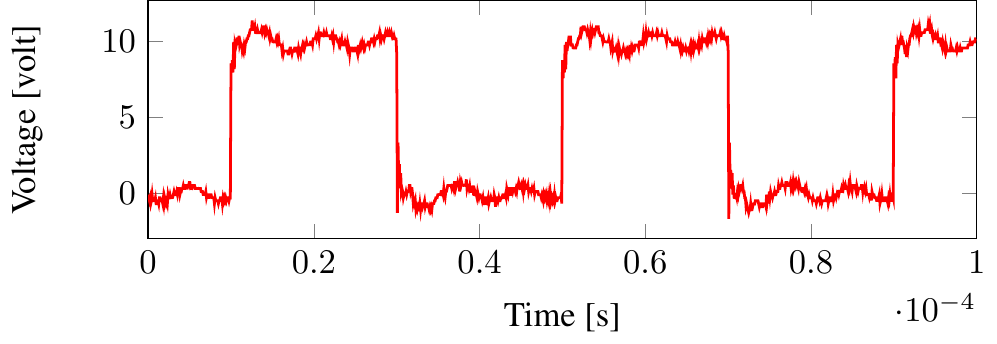}\\
		\includegraphics[width=0.274\textwidth]{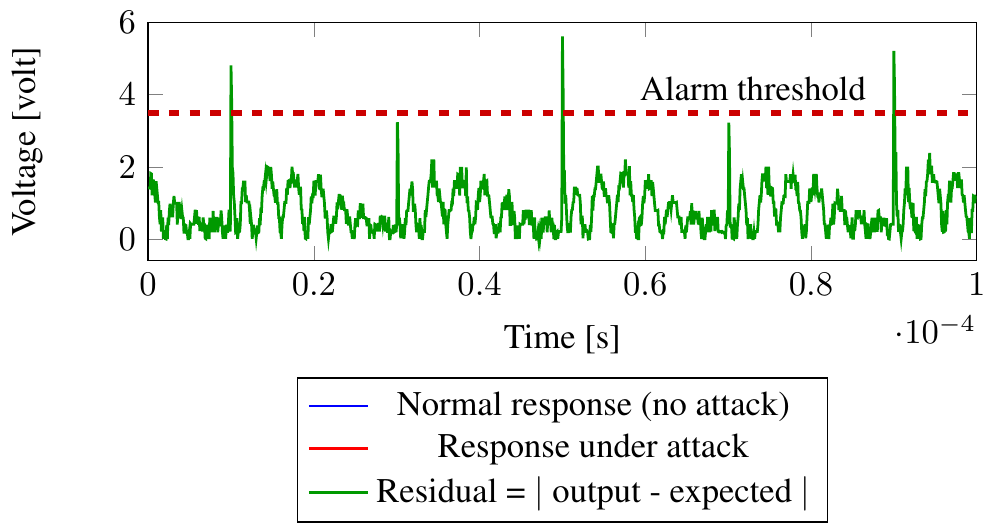}\\
	\end{tabular}
	}
	\end{tabular}
	}
	\caption{\label{fig:RFID_results} Results of applying PyCRA to detect the existence of an eavesdropper in the proximity of an RFID reader. (a) Results of using standard 125KHz signal for detection. (b) Results of using PyCRA when only an RFID tag is present in the proximity of the 
	PyCRA-enabled 
	reader, and (c) Results of using PyCRA in detecting eavesdropping when both an RFID tag and an eavesdropper antenna are present in the proximity of the 
	PyCRA-enabled 
	reader.  Top figure shows the response to the physical challenges when no eavesdropper is placed in the proximity of the RFID reader. The middle figure shows the response to the physical challenges when (a) an eavesdropper antenna (b) passive tag only  (c) passive tag +  eavesdropper antenna are placed in the proximity of the reader. Finally, the bottom figure shows the value of the residuals calculated by PyCRA along with the alarm threshold.}
\vspace{-4mm}
\end{figure*}

\section{Physics, Models, Randomness, \& Security} 
\label{sec:discussion}

The underlying ideas behind PyCRA utilize accurate mathematical models that capture the physics and dynamics of active sensors. With these models in hand, PyCRA is able to isolate the response of the attacker from the response of the environment when challenged by the proposed secure sensor. The success of the system presented in this paper lends credence to the notion of physical challenge-response authentication as a whole.  Section \ref{sec:introduction} made brief mention of several additional sensing modalities where PyCRA can  provide improved security.  In fact, adaptation of the methods described in this work to these sensors is not difficult---one need only revisit the physical models involved in order to derive a method of probing the physical environment and measuring the environmental response.  

The results presented in the previous sections demonstrate the key strengths of PyCRA---namely that it uses fundamental properties of physics (physical delays) along with mathematical models to provided the following security aids (i) timely and accurate detection of external, potentially malicious sources of signal interference, (ii) providing resilience via accurate estimation of malicious spectra, and (iii) accurate detection of passive eavesdropping attacks.

The security guarantees provided by PyCRA depend on two major factors: the underlying physics and the mathematical models. On one hand, physics enforces a fundamental bound on the dynamics of the attacker capabilities and hence enhances the security guarantees (recall results from Figure \ref{fig:detection} showing success rate versus physical delay). As technology evolves, attackers will be able to build better sensors and actuators leading to degraded detection rates, but he or she will never be able to go beyond the limits imposed by physics.
On the other hand, errors in the accuracy of the mathematical model degrade the security guarantees. That is, as the gap between the mathematical model and the underlying sensor physics increases, the attacker gains more room to conceal attacks. For example, scenarios where the physics of the sensor is affected by temperature variations, aging, and so on can increase the gap between the mathematical model and the physics of the sensor. Fortunately, with
 advances in sensor technologies (especially in Micro Electrical Micro Mechanical or MEMS based sensors), better models are developed (at design time) in order to increase the performance of these sensors. Moreover, new technologies are continually developed to allow sensors to adapt over temperature and aging variations. For example, new MEMS technologies in sensors equip the sensor with One-Time-Programmable (OTP) memory used to store calibration parameters versus a wide range of temperature variations \cite{yurish2005smart}. These same calibration parameters can be used by PyCRA in order to adapt its model to overcome temperature variations.

Another major factor in the security provided by PyCRA is the amount of randomness used in generating the physical challenges. The relationship between randomness and security guarantees is a classical relationship that appears in most cryptographic security protocols. However, a fundamental difference in this scheme is that PyCRA relies only on private randomness compared to shared randomness in classical cryptographic authentication schemes. This is a consequence of the ``passivity'' property of the measured entity exploited by PyCRA. This in turn eliminates the classical problem of random data distribution (e.g. key distribution) and thus increases the security provided by the proposed system.


\section{Conclusion}
\label{sec:conclusion}

We have presented PyCRA, a physical challenge-response authentication method for active sensors.  The driving concept  behind PyCRA is that, through random physical stimulation and subsequent behavior analyses, we are able to determine whether or not a sensor is under attack and, in some cases, remain resilient to attacks.  This work further describes ways in which the PyCRA scheme can be applied to (1) passive eavesdropping attacks, (2) simple spoofing attacks, and (3) more advanced spoofing attacks.  We have demonstrated the effectiveness of PyCRA for three case studies: magnetic encoder attack detection, magnetic encoder attack resilience, and RFID eavesdropping detection.  Our results from these case studies indicate that physical challenge-response authentication can accurately and reliably detect and mitigate malicious attacks at the analog sensor level. 

Finally, we believe the results discussed in this work lend credence to the notion of physical challenge-response systems in general, advocating its adoption for active sensors in general where secure operation is a critical component.  More broadly, PyCRA offers security to a wide array of systems (not just sensors) where inputs are mapped to outputs by well-understood models.

\section*{Acknowledgments}
This material is based upon work supported by the NSF under award CNS-1136174. The U.S. Government is authorized to reproduce and distribute reprints for Governmental purposes notwithstanding any copyright notation thereon. The views and conclusions contained herein are those of the authors and should not be interpreted as necessarily representing the official policies or endorsements, either expressed or implied, of NSF or the U.S. Government.

\appendix
\section{Appendix: Proof Of Theorem \ref{th:DelayGuarantee}}\label{sec:proofodDetDelayDecay}
We begin by presenting a discretization argument for applying the change point detection setting to our problem, followed by the main theorem.
\begin{lem}
Assume that PyCRA draws the silent period at random over time according to a probability function over a discrete set of values; for example, assume that $\Pr\pa{\Ga=n\cdot T}=\pa{1-\rho}^{n-1}\cdot \rho$, where $T$ is the length of each interval. In addition, assume that the attacker knows this probability function.
In this case, the problem of finding the probability of a delay which is a multiple of $T$ can be translated to the discrete problem depicted in subsection \ref{subsec:QCPDBasicDefinitionResults} for the Gaussian case. Therefore, the probability of a finite delay which is a multiple of $T$ can be calculated by considering a projection of the observed (continuous) signal onto a discrete set of measurements.
\end{lem}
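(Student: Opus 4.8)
The plan is to exhibit an explicit map from the continuous-time detection problem faced by the attacker to the discrete Bayesian quickest change detection problem of Subsection~\ref{subsec:QCPDBasicDefinitionResults}, and then to argue that this map preserves both the change-point statistics and the information relevant to detection, so that the probability of any delay which is a multiple of $T$ agrees in the two formulations.

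First I would partition the time axis into consecutive blocks $I_{n}=[\pa{n-1}T,\,nT)$ of length $T$, indexed by $n\in\N$. Since PyCRA actuates a piecewise-constant amplitude that, by construction, changes only at the block endpoints (the silent/confusion transition occurs at $\Ga=nT$), each block sees a constant mean signal. Within block $I_{n}$ I would define the attacker's observation as the matched-filter projection of the received continuous waveform onto the normalized indicator of $I_{n}$, namely $X_{n}=\frac{1}{\sqrt{T}}\int_{\pa{n-1}T}^{nT} y\pa{t}\,dt$, which for a block-constant signal is the natural correlation statistic. This produces a discrete sequence $\pac{X_{n}}$.

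Next I would verify that $\pac{X_{n}}$ has exactly the i.i.d.-before / i.i.d.-after structure required by the change-detection model. Because the disjoint normalized indicators are orthonormal and the ambient noise is additive white Gaussian, the noise contributions to distinct $X_{n}$ are independent Gaussians of equal variance; this is the step that invokes the AWGN assumption of the theorem. For $n<\Ga/T$ the mean of $X_{n}$ is the projection of the amplitude-$A$ signal, a fixed Gaussian $f_{0}$; for $n\ge\Ga/T$ the mean is the projection of the post-transition amplitude ($0$ in the silent phase, or $A/\beta$ in the confusion phase), a fixed Gaussian $f_{1}\neq f_{0}$. Thus $\pac{X_{n}}$ is i.i.d.\ $f_{0}$ up to the change index and i.i.d.\ $f_{1}$ afterward, matching Subsection~\ref{subsec:QCPDBasicDefinitionResults} exactly, while the geometric prior $\Pr\pa{\Ga=nT}=\pa{1-\rho}^{n-1}\rho$ translates verbatim into a geometric prior on the discrete change index $n$.

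The crux, and the step I expect to be the main obstacle, is showing that this projection loses no information relevant to detecting a change constrained to the grid $\pac{T,2T,\dots}$, so that the attacker gains nothing from finer-grained sampling and the continuous delay distribution coincides with the discrete one. I would settle this by a sufficiency argument: for a mean-shift test in AWGN the log-likelihood ratio between the two hypotheses on a block depends on the received waveform only through its projection onto the one-dimensional subspace spanned by the block-constant signal, while the orthogonal complement is pure noise independent of the change point. Hence $\pac{X_{n}}$ is a sufficient statistic for any change occurring at a multiple of $T$, and any attacker strategy can be realized on $\pac{X_{n}}$ without loss. Combining sufficiency with the established i.i.d.\ structure and the matched geometric prior yields that $\Pr\pa{\pa{\tau-\Ga}^{+}=kT}$ in the continuous problem equals the corresponding delay probability in the discrete Bayesian quickest change detection problem, which is precisely the claimed translation via projection onto a discrete set of measurements.
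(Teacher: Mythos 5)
Your proposal is correct and follows essentially the same route as the paper's proof: project the continuous observation in each length-$T$ interval onto the known (matched-filter) signal, note that whiteness of the Gaussian noise makes these projections i.i.d.\ Gaussian before and after the change, and invoke sufficiency of the matched-filter statistic so that no information relevant to a grid-aligned change point is lost. The paper states this more tersely (citing sufficiency of the matched filter directly), while you spell out the orthonormality and likelihood-ratio arguments, but the underlying idea is identical.
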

\begin{proof}
The attacker either receives a random process of AWGN, or the signal reflected from the measured entity (which he knows) along with AWGN. Since the attacker knows the signal, it is statistically sufficient to project the continuous observation (over time) onto the (possibly) reflected signal in each time interval, i.e., projecting in each time interval the observed signal onto a matched filter \cite{ProakisBook2004}. The projection of an AWGN process leads to AWGN observations, and so essentially this is the same problem as the one depicted in subsection \ref{subsec:QCPDBasicDefinitionResults} for the Gaussian case.
\end{proof}

In case $T\ll 1$ it can be assumed that the attacker detects the change over the grid; otherwise, the delay is lower bounded by the highest point in the grid for which the probability of delay is still small enough to fulfill the requirements of PyCRA.
The probability of a delay which is smaller than or equal to $K$ can be written as:
{\small
\beq{}
{
Pr\pa{\Ga\le \tau\le \Ga+K}=\sum_{i=0}^{K}Pr\pa{\tau=\Ga+i}.
}}
For simplicity let us focus on the case where the delay is of length $1$. In this case we are interested in upper bounding the probability 
$Pr\pa{\tau=\Ga+1}$
for any sequential change detector that operates under the constraint
$Pr\pa{\tau<\Ga}\le\Al\ll 1$.
In other words, we are interested in upper bounding the asymptotic decay of $Pr\pa{\tau=\Ga+1}$ as a function of $\Al$, for any sequence of strategies  $\tau\pa{\cdot}\in\Delta\pa{\Al}$.
We begin by taking a closer look at $Pr\pa{\tau=\Ga+1}$:
{\small
\bal{eq:AsDCLDelay1_1}{
Pr \!\!\pa{\tau\!=\!\Ga\!\!+\!\!1}\!=\!\sum_{k=1}^{\infty}Pr\pa{\Ga\!\!=\!\!k,\tau\!=\!k\!+\!1}
\!=\!\sum_{k=1}^{\infty}\!\int\! Pr\!\!\pa{\Ga\!=\!k,\tau\!=\!k\!+\!1|x_{1},\!\dots\!,x_{k\!+\!1}} \!f\!\pa{x_{1},\!\dots\!,x_{k\!+\!1}}dx_{1}\dots dx_{k\!+\!1}
}}
where:
{\small
\bal{eq:SecASGeneralPDF}{
f\pa{x_{1},\dots,x_{k+1}}=\sum_{i=1}^{k+1}\prod_{j=1}^{i-1}f_{0}\pa{x_{j}}
\cdot\prod_{j^{'}=i}^{k+1}f_{1}\pa{x_{j^{'}}}\cdot Pr\pa{\Ga= i}+\prod_{j=1}^{k+1}f_{0}\pa{x_{j}}\cdot \cdot Pr\pa{\Ga> k+1}.
}}
Since $\tau\pa{x_{1},\dots,x_{k+1}}=k+1$ is a function of $x_{1},\dots,x_{k+1}$ we get:
{\small
\bal{eq:AsDCLDelay1_2}{
Pr\pa{\Ga=k,\tau=k+1|x_{1},\dots,x_{k+1}}=&Pr&\pa{\tau=k+1|\Ga=k,x_{1},\dots,x_{k+1}}\cdot
Pr\pa{\Ga=k|x_{1},\dots,x_{k+1}}
\nn\\
&=&\mathbbm{1}\pa{\tau=k+1|x_{1},\dots,x_{k+1}}\cdot Pr\pa{\Ga=k|x_{1},\dots,x_{k+1}}
}}
where $\mathbbm{1}\pa{\cdot}$ is the indicator function.
By assigning back to \eqref{eq:AsDCLDelay1_1} we get:
{\small
\bal{eq:AsDCLDelay1_3}{
Pr\pa{\tau=\Ga+1} \!=\!\! \sum_{k=1}^{\infty}\int Pr\pa{\Ga=k|x_{1},\dots,x_{k+1}} 
\times\mathbbm{1}\pa{\tau=k+1|x_{1},\dots,x_{k+1}}
f\pa{x_{1},\dots,x_{k+1}}dx_{1}\dots dx_{k+1}.
}}
We now wish to establish the relation between $Pr\pa{\Ga=k|x_{1},\dots,x_{k+1}}$ and
$Pr\pa{\Ga>k+1|x_{1},\dots,x_{k+1}}$. We will later use this relation to relate the probability of delay of length one, and the false alarm probability. It can be easily shown that:
{\small
\bal{}{
Pr\pa{\Ga=k|x_{1},\dots,x_{k+1}}=\frac{\pa{1-e^{-1}}^{k-1}e^{-1}}{f\pa{x_{1},\dots,x_{k+1}}}\cdot
\prod_{i=1}^{k-1}f_{0}\pa{x_{i}}\cdot f_{1}\pa{x_{k}}\cdot f_{1}\pa{x_{k+1}}.
}}
whereas $Pr\pa{\Ga>k+1|x_{1},\dots,x_{k+1}}=\frac{\pa{1-e^{-1}}^{k+1}}{f\pa{x_{1},\dots,x_{k+1}}}\cdot
\prod_{i=1}^{k+1}f_{0}\pa{x_{i}}$.
Therefore, we get the relation:
{\small
\bal{eq:AsDCLDelay1FAR}
{
Pr\pa{\Ga=k|x_{1},\dots,x_{k+1}}=Pr\pa{\Ga>k+1|x_{1},\dots,x_{k+1}}\frac{e^{-1}}{\pa{1-e^{-1}}^{2}}\cdot L\pa{x_{k}}\cdot L\pa{x_{k+1}},
}}
where $L\pa{x_{i}}=\frac{f_{1}\pa{x_{i}}}{f_{0}\pa{x_{i}}}=e^{\frac{A}{\sigma^{2}}\pa{x_{i}-A/2}}$.
Now, we look more closely at the probability of false alarm:
{\small
\bal{eq:AsDFADer}{
\alpha=P_{FA}=Pr\pa{\Ga>\tau}=\sum_{i=1}^{\infty}Pr\pa{\Ga>i,\tau=i}
=\sum_{i=1}^{\infty}\int_{x_{1},\dots,x_{i}}\!\!\!\!\!\!\!\!\!\!\!\!\!\!\!Pr\pa{\Ga>i,\tau=i|x_{1},\dots,x_{i}}
f\pa{x_{1},\dots,x_{i}}dx_{1}\dots dx_{i}
}}
where:
{\small
\bal{eq:AsDFADer2}{
Pr\pa{\Ga>i,\tau=i|x_{1},\dots,x_{i}}&=&Pr\pa{\tau=i|\Ga>i,x_{1},\dots,x_{i}}\cdot
Pr\pa{\Ga>i|x_{1},\dots,x_{i}}
\nn\\
&=&\mathbbm{1}\pa{\tau=i|x_{1},\dots,x_{i}}\cdot Pr\pa{\Ga>i|x_{1},\dots,x_{i}}.
}}
By assigning back to \eqref{eq:AsDFADer} we get:
{\small
\bal{eq:AsDFADer3}{
Pr\pa{\Ga>\tau} = \sum_{i=1}^{\infty}\int Pr\pa{\Ga>i|x_{1},\dots,x_{i}}\cdot \mathbbm{1}\pa{\tau=i|x_{1},\dots,x_{i}}\cdot
f\pa{x_{1},\dots,x_{i}}dx_{1}\dots dx_{i}.
}}
The next lemma proves the relation between element $k$ in \eqref{eq:AsDCLDelay1_3} and element $k+1$ in \eqref{eq:AsDFADer3}.
\begin{lem}\label{lem:EachElementAsymptiticEq}
{\small
\bal{eq:AsDLemEAFAREL}
{
&\int Pr\pa{\Ga=k|x_{1},\dots,x_{k+1}}\cdot
\mathbbm{1}\pa{\tau=k+1|x_{1},\dots,x_{k+1}}\cdot f\pa{x_{1},\!\dots\!,x_{k+1}}dx_{1}\dots dx_{k+1}=
\nn\\
&\int\!\! Pr\pa{\Ga\!>\!k\!+\!1|x_{1},\!\dots\!,x_{k\!+\!1}}\!\!\cdot\!\! \mathbbm{1}\pa{\!\tau\!=\!k\!+\!1|x_{1},\dots,x_{k+1}}\!\!\times\!\!
\frac{e^{-1}}{\pa{1-e^{-1}}^{2}} \!\!\cdot\!\! L\pa{x_{k}}
L\pa{x_{k\!+\!1}}\!\cdot\! f\pa{x_{1},\dots,x_{k+1}}dx_{1}\!\dots\! dx_{k\!+\!1}
}}
Further, for $\Al\ll 1$ we get:
{\small
\bal{eq:AsDLemEAFAREL2}{
\int Pr\pa{\Ga=k|x_{1},\dots,x_{k+1}}\cdot
\mathbbm{1}\pa{\tau=k+1|x_{1},\dots,x_{k+1}}\cdot f\pa{x_{1},\dots,x_{k+1}}dx_{1}\dots dx_{k+1}\dot{=}
\nn\\
\int Pr\pa{\Ga>k+1|x_{1},\dots,x_{k+1}}\cdot\mathbbm{1}\pa{\tau=k+1|x_{1},\!\dots\!,x_{k+1}}\cdot f\pa{x_{1},\!\dots\!,x_{k+1}}dx_{1}\dots dx_{k+1}.
}}
\end{lem}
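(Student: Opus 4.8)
The plan is to dispatch the exact identity \eqref{eq:AsDLemEAFAREL} by pure substitution and to concentrate the effort on the asymptotic claim \eqref{eq:AsDLemEAFAREL2}. For \eqref{eq:AsDLemEAFAREL} I would insert the pointwise relation \eqref{eq:AsDCLDelay1FAR} directly into the left-hand integrand: since $Pr\pa{\Ga=k|x_{1},\dots,x_{k+1}}$ and $Pr\pa{\Ga>k+1|x_{1},\dots,x_{k+1}}$ differ only by the factor $\frac{e^{-1}}{\pa{1-e^{-1}}^{2}}L\pa{x_{k}}L\pa{x_{k+1}}$, which does not depend on the detector $\tau\pa{\cdot}$, it pulls through the integral verbatim. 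This step is bookkeeping and requires no estimate.

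For \eqref{eq:AsDLemEAFAREL2} I would first read both sides as joint probabilities. Using Bayes' rule together with the mixture structure of $f\pa{x_{1},\dots,x_{k+1}}$ in \eqref{eq:SecASGeneralPDF}, the left-hand integral equals $Pr\pa{\Ga=k,\tau=k+1}$ and the right-hand integral equals $Pr\pa{\Ga>k+1,\tau=k+1}$. Conditioning on the change point, the former is $Pr\pa{\Ga=k}$ times the probability of declaring $\tau=k+1$ when $x_{1},\dots,x_{k-1}\sim f_{0}$ but $x_{k},x_{k+1}\sim f_{1}$, and the latter is $Pr\pa{\Ga>k+1}$ times the probability of declaring $\tau=k+1$ when $x_{1},\dots,x_{k+1}\sim f_{0}$. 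Because the prior is geometric with $\rho=e^{-1}$, the ratio $Pr\pa{\Ga=k}/Pr\pa{\Ga>k+1}=\frac{e^{-1}}{\pa{1-e^{-1}}^{2}}$ is constant in $\Al$ and therefore invisible to $\dot{=}$. The claim thus collapses to showing that switching only the last two coordinates from $f_{0}$ to $f_{1}$ distorts the detection probability by at most a subpolynomial factor in $1/\Al$; equivalently, that $E_{0}\pac{L\pa{x_{k}}L\pa{x_{k+1}}\mathbbm{1}\pa{\tau=k+1}}\dot{=}Pr_{0}\pa{\tau=k+1}$, where $Pr_{0}$ and $E_{0}$ denote the all-$f_{0}$ law.

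The hard part is exactly this last estimate, because the event $\set{\tau=k+1}$ is correlated with $x_{k},x_{k+1}$ --- the detector reads those very samples --- so the clean cancellation $E_{0}\pac{L}=1$ is not available term by term. I would control the change-of-measure factor by splitting $\set{\tau=k+1}$ at the subpolynomial level $\theta\pa{\Al}=L\pa{g_{1}\pa{\Al}}^{2}$. On the part where $L\pa{x_{k}}L\pa{x_{k+1}}\le\theta\pa{\Al}$ the distortion is at most $\theta\pa{\Al}$, which is subpolynomial by condition (2) of the footnote to Theorem~\ref{th:DelayGuarantee} (for AWGN $L\pa{x}=e^{\frac{A}{\sigma^{2}}\pa{x-A/2}}$ and $g_{1}\pa{\Al}=O\pa{\sigma\sqrt{\ln\pa{1/\Al}}}$, so $\ln L\pa{g_{1}\pa{\Al}}=O\pa{\sqrt{\ln\pa{1/\Al}}}=o\pa{\ln\pa{1/\Al}}$). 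On the complementary part, a change of measure turns the contribution into a tail probability $Pr\pa{L\pa{x_{k}}L\pa{x_{k+1}}>\theta\pa{\Al}}$ computed under $f_{1}$, which decays at a rate comparable to $\Al$ and so cannot change the $\Al$-exponent of the right-hand side (whose dominant terms are themselves of order $\Al$). In either case the logarithm of the ratio of the two sides is $o\pa{\ln\pa{1/\Al}}$, uniformly over $\tau\in\Delta\pa{\Al}$, which is precisely \eqref{eq:AsDLemEAFAREL2}.

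Finally, I would emphasize that this per-$k$ equivalence is exactly what the remainder of the proof of Theorem~\ref{th:DelayGuarantee} consumes: matching \eqref{eq:AsDCLDelay1_3} against \eqref{eq:AsDFADer3} through \eqref{eq:AsDLemEAFAREL2}, and summing over the $O\pa{g_{2}\pa{\Al}}$ dominant terms permitted by conditions (3)--(4) of the footnote, yields $Pr\pa{\tau=\Ga+1}\dot{=}\Al$ and hence \eqref{eq:ProbDelaySmallAlpha}. The main obstacle remains the uniform control of the likelihood-ratio weight $L\pa{x_{k}}L\pa{x_{k+1}}$ on the strategy-dependent detection region; everything else is change-of-measure bookkeeping and summation.
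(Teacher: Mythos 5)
Your handling of \eqref{eq:AsDLemEAFAREL} (substituting \eqref{eq:AsDCLDelay1FAR} into \eqref{eq:AsDCLDelay1_3} and pulling the constant through) is precisely the paper's step, and your reduction of \eqref{eq:AsDLemEAFAREL2} to the statement $E_{0}\pac{L\pa{x_{k}}L\pa{x_{k+1}}\mathbbm{1}\pa{\tau=k+1}}\dot{=}Pr_{0}\pa{\tau=k+1}$ (the prior ratio $e^{-1}/\pa{1-e^{-1}}^{2}$ being constant in $\Al$) matches in substance what the paper achieves through the rescaling $x_{i}=a_{i}\sqrt{\ln\pa{1/\Al}}$. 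The gap is in your final estimate, in two places. First, truncating at $\theta\pa{\Al}=L\pa{g_{1}\pa{\Al}}^{2}$ leaves, after the change of measure, an additive tail term of \emph{polynomial} size (with your choice of $\theta$ it is $\dot{=}\Al^{2}$ rather than $\Al$, since under $f_{1}\times f_{1}$ the event $L\pa{x_{k}}L\pa{x_{k+1}}>\theta$ means $x_{k}+x_{k+1}>2g_{1}\pa{\Al}$), and you dismiss it on the grounds that the right-hand side is ``of order $\Al$.'' But the lemma is a per-$k$ statement for an arbitrary $\tau\in\Delta\pa{\Al}$: the false-alarm constraint controls only the \emph{sum} over $k$ of the right-hand integrals, so an individual term may have any exponent larger than one --- a detector may assign probability $\Al^{3}$, or even a superpolynomially small probability, to stopping at time $k+1$ under the all-$f_{0}$ law. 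For such $k$ your argument yields only an upper bound of order $\Al^{2}$ on the left side, which does not pin its exponent to that of the right side. The repair is to truncate at a slightly super-typical level, e.g.\ $|x_{k}|,|x_{k+1}|\le \ln^{1/2+\eps}\pa{1/\Al}$ --- this is exactly the role of the footnote partition $|a|\le\ln^{1/2-\eps}\pa{1/\Al}$ versus $|a|>\ln^{1/2-\eps}\pa{1/\Al}$ in the paper's rescaled variables: on the inner region $L$ is still $e^{O\pa{\ln^{1/2+\eps}\pa{1/\Al}}}$, hence subpolynomial, while the outer region now contributes a \emph{superpolynomially} small amount under either measure. The conclusion then follows from a dichotomy: if the right side is polynomially small, the subpolynomially distorted inner term dominates and carries the same exponent; if it is superpolynomially small, so is the left side, and both exponents diverge together.

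Second, $\dot{=}$ demands control in both directions, and your truncation argument bounds the $L$-weighted integral only from \emph{above}. On the region $L\pa{x_{k}}L\pa{x_{k+1}}\le\theta$ the weight can also be exponentially \emph{small} (the detector might fire precisely when $x_{k},x_{k+1}$ are very negative), so nothing you wrote excludes the left side from being polynomially smaller than the right side, which would equally violate \eqref{eq:AsDLemEAFAREL2}; this direction is moreover needed downstream, since Lemma \ref{lem:Convinfsumtofalsealarm} and Theorem \ref{th:DelayGuarantee} assert equalities of exponents, not just upper bounds. The same super-typical truncation fixes this: on $|x_{i}|\le\ln^{1/2+\eps}\pa{1/\Al}$ one has the matching lower bound $L\ge e^{-O\pa{\ln^{1/2+\eps}\pa{1/\Al}}}$, again subpolynomial, and the excluded event has superpolynomially small probability under $f_{0}$ as well. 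With these two repairs your change-of-measure argument becomes a correct --- and arguably more explicit --- rendering of the paper's proof.
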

\begin{proof}
the equality in \eqref{eq:AsDLemEAFAREL} is obtained by assigning \eqref{eq:AsDCLDelay1FAR} to \eqref{eq:AsDCLDelay1_3}.
We prove the asymptotic equality in \eqref{eq:AsDLemEAFAREL2} by considering the behavior as $\Al\to 0$. Let us consider the following representation for the random variables:
{\small
\beq{}{
x_{i}=a_{i}\cdot \sqrt{\ln\pa{1/\alpha}}\qquad 1\le i\le k+1.
}}
When assigning this representation to $L\pa{x_{i}}$, we get $L\pa{a_{i}}=e^{\frac{A}{\sigma^{2}}\pa{a_{i}\sqrt{\ln\pa{1/\alpha}}-A}},$
whereas assigning this representation to the PDFs yields $f_{0}\pa{a_{i}}=\frac{1}{\sqrt{2\cdot \sigma^{2}}}\cdot e^{-a_{i}^{2}\ln\pa{\frac{1}{\Al}}/{2\sigma^{2}}}$ and $f_{1}\pa{a_{i}}=\frac{1}{\sqrt{2\cdot \sigma^{2}}}\cdot e^{-\pa{a_{i}\sqrt{\ln\pa{1/\Al}}-A}^{2}/{2\sigma^{2}}}$.
Note that $L\pa{a_{i}}\cdot f_{0}\pa{a_{i}}\dot{=} f_{0}\pa{a_{i}}$ and $L\pa{a_{i}}\cdot f_{1}\pa{a_{i}}\dot{=} f_{1}\pa{a_{i}}$. Therefore, based on the definition of $f\pa{x_{1},\dots,x_{k}}$ in equation \eqref{eq:SecASGeneralPDF}, we get:
{\small
\beq{eq:ExpEqBetweenPDFs}
{
L\pa{a_{k}}\times\cdots\times L\pa{a_{k+K}}\cdot f\pa{a_{1},\dots,a_{k+K}}\dot{=}f\pa{a_{1},\dots,a_{k+K}}\quad\forall k\ge 0.
}}
We now prove the asymptotic equality in \eqref{eq:AsDLemEAFAREL2}:
{\small
\bal{}
{
&\ln^{\frac{k+1}{2}}\pa{{\frac{1}{\Al}}}\int Pr\pa{\Ga=k|a_{1},\dots,a_{k+1}}\cdot
\mathbbm{1}\pa{\tau=k+1|a_{1},\dots,a_{k+1}}\cdot f\pa{a_{1},\dots,a_{k+1}}da_{1}\dots da_{k+1}=
\nn\\
&\ln^{\frac{k+1}{2}}\pa{{\frac{1}{\Al}}}\!\!\int\!\! Pr\pa{\Ga>k+1|a_{1},\dots,a_{k+1}}\!\!\cdot\!\!\mathbbm{1}\pa{\tau=k+1|a_{1},\dots,a_{k+1}} \!\!\times\!\!
L\pa{a_{k}}\!\!\cdot\!\! L\pa{a_{k+1}}\!\!\cdot\!\! f\pa{a_{1},\dots,a_{k+1}}da_{1}\dots da_{k+1}. \nn
}}
From \eqref{eq:ExpEqBetweenPDFs} we get:
{\small
\bal{}{
&\ln^{\frac{k+1}{2}}\pa{{\frac{1}{\Al}}}\!\!\int\!\! Pr\pa{\Ga\!>\!k\!+\!1|a_{1},\dots,a_{k\!+\!1}}\cdot\mathbbm{1}\pa{\tau\!=\!k\!+\!1|a_{1},\dots,a_{k+1}} \!\times\!
L\pa{a_{k}}\cdot L\pa{a_{k+1}}\cdot f\pa{a_{1},\dots,a_{k+1}}da_{1}\dots da_{k+1}\dot{=}
\nn\\
&\ln^{\frac{k+1}{2}}\pa{{\frac{1}{\Al}}}\int Pr\pa{\Ga>k+1|a_{1},\dots,a_{k+1}} \cdot \mathbbm{1}\pa{\tau=k+1|a_{1},\dots,a_{k+1}}
\times f\pa{a_{1},\dots,a_{k+1}}da_{1}\dots da_{k+1} \nn
}}
which proves the equality in \eqref{eq:AsDLemEAFAREL2}\footnote{A more accurate argument for the asymptotic equality can be obtained by partitioning the range of $a_{k}, a_{k+1}$ into two sets: $|a|\le\ln^{1/2-\epsilon}\pa{1/\Al}$ and $|a|>\ln^{1/2-\epsilon}\pa{1/\Al}$; for the former we consider the integrand that has the minimal exponent in the range, whereas for the latter we consider only $L\pa{a_{k}}\cdot L\pa{a_{k+1}}\cdot f\pa{a_{k},a_{k+1}}$.}.
\end{proof}

Now, based on the result from Lemma \ref{lem:EachElementAsymptiticEq} we show that $Pr\pa{\tau=\Ga+1}\dot{=}\Pr\pa{\Ga>\tau}=\Al$ for any $\tau\in\Delta\pa{\Al}$
\begin{lem}\label{lem:Convinfsumtofalsealarm}
For any sequence of strategies (as a function of $\Al\ll 1$) $\tau\pa{\cdot}\in\Delta\pa{\Al}$
{\small
\beq{}{
Pr\pa{\tau=\Ga+1}\dot{=}\Pr\pa{\Ga>\tau}=\Al
}}
\end{lem}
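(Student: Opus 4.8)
The plan is to sum the term-by-term asymptotic identity supplied by Lemma \ref{lem:EachElementAsymptiticEq} and to identify the resulting \emph{shifted} series with the false-alarm probability. Writing $D_{k}$ for the $k$-th integral term of $Pr\pa{\tau=\Ga+1}$ in \eqref{eq:AsDCLDelay1_3} and $F_{i}$ for the $i$-th integral term of $Pr\pa{\Ga>\tau}$ in \eqref{eq:AsDFADer3}, Lemma \ref{lem:EachElementAsymptiticEq} asserts precisely that $D_{k}\dot{=}F_{k+1}$ for every fixed $k$. Since $\sum_{k\ge 1}F_{k+1}=\sum_{i\ge 2}F_{i}=Pr\pa{\Ga>\tau}-F_{1}$, the content of the lemma reduces to two claims: that the per-term equivalence $D_{k}\dot{=}F_{k+1}$ can be carried through the infinite summation, giving $Pr\pa{\tau=\Ga+1}\dot{=}\sum_{i\ge 2}F_{i}$; and that deleting the single leading term $F_{1}$ leaves the logarithmic order unchanged, so that $\sum_{i\ge 2}F_{i}\dot{=}\sum_{i\ge 1}F_{i}=Pr\pa{\Ga>\tau}=\Al$.

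First I would handle the interchange of the limit implicit in $\dot{=}$ with the infinite sum, which is the heart of the matter, since $D_{k}\dot{=}F_{k+1}$ is asserted only for each \emph{fixed} $k$ as $\Al\to 0$ while there are infinitely many terms. To control this I would invoke the two auxiliary functions introduced in the footnote to Theorem \ref{th:DelayGuarantee}. Using the geometric change-point prior $Pr\pa{\Ga=k}=\pa{1-\rho}^{k-1}\rho$, the defining property of $g_{2}\pa{\Al}$ gives a truncation index at which the prior has already decayed to order $\Al$; the tails $\sum_{k>g_{2}\pa{\Al}}D_{k}$ and $\sum_{i>g_{2}\pa{\Al}}F_{i}$ are then of order at most $\Al$ and contribute nothing to the logarithmic asymptotics. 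The surviving head is a sum of only $g_{2}\pa{\Al}$ terms, and the condition $\ln\pa{g_{2}\pa{\Al}}/\ln\pa{1/\Al}\to 0$ ensures that this count is subexponential in $1/\Al$, so that a prefactor of $g_{2}\pa{\Al}$ cannot alter an $\dot{=}\Al$ statement.

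Within the truncated head $1\le k\le g_{2}\pa{\Al}$ I would upgrade the pointwise equivalence to a uniform one through the scaling $x_{i}=a_{i}\sqrt{\ln\pa{1/\Al}}$ already used in the proof of Lemma \ref{lem:EachElementAsymptiticEq}, where $g_{1}\pa{\Al}$ together with conditions (1) and (2) makes the likelihood factors $L\pa{a_{k}}L\pa{a_{k+1}}$ approach unity uniformly over the dominant integration region, so that the relations $L\pa{a_{i}}f_{0}\pa{a_{i}}\dot{=}f_{0}\pa{a_{i}}$ and $L\pa{a_{i}}f_{1}\pa{a_{i}}\dot{=}f_{1}\pa{a_{i}}$ hold simultaneously over the whole head. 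This yields $\sum_{k=1}^{g_{2}\pa{\Al}}D_{k}\dot{=}\sum_{k=1}^{g_{2}\pa{\Al}}F_{k+1}$, and combining with the negligible tails gives $Pr\pa{\tau=\Ga+1}\dot{=}\sum_{i\ge 2}F_{i}$. The same truncation shows the false-alarm mass is spread over a growing range of indices rather than concentrated in $F_{1}$, so that dropping $F_{1}$ is harmless and $\sum_{i\ge 2}F_{i}\dot{=}\sum_{i\ge 1}F_{i}=\Al$, completing the chain $Pr\pa{\tau=\Ga+1}\dot{=}Pr\pa{\Ga>\tau}=\Al$.

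I expect the genuinely delicate step to be this uniform passage from the term-by-term asymptotic equality to the equality of the infinite series: the $\dot{=}$ relation is insensitive to subexponential prefactors but not to an uncontrolled number of summands, so the argument hinges on choosing the truncation $g_{2}\pa{\Al}$ large enough that the discarded tail sits below $\Al$ yet small enough that the term count stays subexponential, all while using $g_{1}\pa{\Al}$ to keep the head convergence uniform. Making these three requirements compatible is exactly the role the footnote's functions $g_{1}\pa{\Al}$ and $g_{2}\pa{\Al}$ are designed to play.
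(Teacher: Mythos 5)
Your proposal is correct and follows essentially the same route as the paper's proof: truncation of both series at an index growing only logarithmically in $1/\Al$ (the paper's $k_{0}=\ln\pa{1/\Al}/\ln\pa{1-\rho}$, your $g_{2}\pa{\Al}$), the tail killed by the geometric prior on $\Ga$, and the term-wise equivalence of Lemma \ref{lem:EachElementAsymptiticEq} pushed through the head by the same uniformity-in-$k$ argument, the subexponential term count being absorbed by $\dot{=}$ (the paper bounds the head by the count times the maximal element, each false-alarm element being at most $\Al$, rather than summing the equivalences, but this is cosmetic). The one substantive deviation---your claim that dropping $F_{1}$ is harmless because the false-alarm mass is spread over many indices---is not justified for arbitrary strategies (a data-independent rule can concentrate all its false-alarm mass in $F_{1}$), but that step is only needed for the lower-bound half of $\dot{=}$, which the paper's own proof likewise never establishes, so both arguments deliver the same upper-bound content $Pr\pa{\tau=\Ga+1}\lesssim\Al$.
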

\begin{proof}
Element $k$ in \eqref{eq:AsDCLDelay1_3} is lower bounded by $Pr\pa{\Ga=k}$. Hence, for $k_{0}=\ln\pa{\frac{1}{\Al}}/\ln\pa{1-\rho}$, we get:
{\small
\bal{}
{
\sum_{k=k_{0}}^{\infty}\!\!\!\int \!\!\!Pr\pa{\Ga=k|x_{1},\dots,x_{k+1}} \cdot \mathbbm{1}\pa{\tau=k+1|x_{1},\dots,x_{k+1}}
\times f\pa{x_{1},\dots,x_{k+1}}dx_{1}\dots dx_{k+1} \le  \sum_{k=k_{0}}^{\infty}\!\!Pr\pa{\Ga=k}=\Al. \nn
}}
In addition:
{\small
\bal{}
{
\sum_{k=1}^{k_{0}-1}\int Pr\pa{\Ga=k|x_{1},\dots,x_{k+1}}\cdot\mathbbm{1}\pa{\tau=k+1|x_{1},\dots,x_{k+1}}\cdot f\pa{x_{1},\dots,x_{k+1}}dx_{1}\dots dx_{k+1}\le
\nn\\
\frac{\ln\pa{1/\Al}}{\ln\pa{1-\rho}}\max_{1\le k\le k_{0}-1}\int Pr\pa{\Ga=k|x_{1},\dots,x_{k+1}}\cdot\mathbbm{1}\pa{\tau=k+1|x_{1},\dots,x_{k+1}}
\times f\pa{x_{1},\dots,x_{k+1}}dx_{1}\dots dx_{k+1} \nn
}}
Furthermore, From Lemma \ref{lem:EachElementAsymptiticEq} we get that:
{\small
\bal{}
{
\max_{1\le k< k_{0}}
\int Pr\pa{\Ga=k|x_{1},\dots,x_{k+1}}\cdot\mathbbm{1}\pa{\tau=k+1|x_{1},\dots,x_{k+1}}\cdot f\pa{x_{1},\dots,x_{k+1}}dx_{1}\dots dx_{k+1}\dot{=}
\nn\\
\max_{1\le k< k_{0}} \int Pr\pa{\Ga>k+1|x_{1},\dots,x_{k+1}}\cdot\mathbbm{1}\pa{\tau=k+1|x_{1},\dots,x_{k+1}}\times
f\pa{x_{1},\dots,x_{k+1}}dx_{1}\dots dx_{k+1}\le \Al \nn
}}
where the first equality means that the maximal integrands have the same decay rate as a function of $\Al$. The equality for the maximal integrands holds even though $k_{0}$ increases with $\Al$; this is due to the fact that the elements in \eqref{eq:AsDLemEAFAREL2} converge to equality uniformly over $k$. They converge uniformly since the integrands are affected by the same functions for any $k$, namely, $L\pa{\cdot}$, $f_{0}\pa{\cdot}$ and $f_{1}\pa{\cdot}$, and also because the range in which the maximal integrand resides is the same for any $k$, i.e., $|a|\le\ln^{1/2-\eps}\pa{1/\Al}$.

Combining these two arguments concludes the proof.
\end{proof}

\begin{cor}
{\small
\beqn{}
{
Pr\pa{\Ga\le \tau\le\Ga+K}\dot{=}\Al
}}
\end{cor}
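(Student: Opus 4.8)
The plan is to reduce the general-$K$ statement to the delay-one case already settled in Lemma~\ref{lem:Convinfsumtofalsealarm}, by writing $Pr\pa{\Ga\le\tau\le\Ga+K}=\sum_{i=0}^{K}Pr\pa{\tau=\Ga+i}$ and showing that each of the finitely many summands has the same asymptotic decay rate as the false alarm probability $\Al$. The lower bound on the decay is immediate: $Pr\pa{\Ga\le\tau\le\Ga+K}\ge Pr\pa{\tau=\Ga+1}$, which by Lemma~\ref{lem:Convinfsumtofalsealarm} is $\dot{=}\Al$, and since $\ln\Al<0$ this already forces $\limsup_{\Al\to0}\ln Pr\pa{\Ga\le\tau\le\Ga+K}/\ln\Al\le 1$. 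So the real work is the matching upper bound.

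For the upper bound I would generalize the per-element relation \eqref{eq:AsDCLDelay1FAR} to an arbitrary fixed delay $i$. Repeating the Bayes computation with the geometric prior $Pr\pa{\Ga=k}=\pa{1-e^{-1}}^{k-1}e^{-1}$ gives, for $\tau=k+i$, the identity $Pr\pa{\Ga=k|x_{1},\dots,x_{k+i}}=Pr\pa{\Ga>k+i|x_{1},\dots,x_{k+i}}\cdot\frac{e^{-1}}{\pa{1-e^{-1}}^{i+1}}\cdot\prod_{j=k}^{k+i}L\pa{x_{j}}$, the only change from \eqref{eq:AsDCLDelay1FAR} being that a product of $i+1$ likelihood ratios replaces the two factors $L\pa{x_{k}}L\pa{x_{k+1}}$ (for $i=0$ this is the single-factor version $\frac{e^{-1}}{1-e^{-1}}L\pa{x_{k}}$). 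Because \eqref{eq:ExpEqBetweenPDFs} is already stated for a product of $K+1$ likelihood ratios and all $k\ge 0$, the argument of Lemma~\ref{lem:EachElementAsymptiticEq} applies verbatim, yielding the per-element asymptotic equality between element $k$ of $Pr\pa{\tau=\Ga+i}$ and element $k+i$ of the false-alarm sum $Pr\pa{\Ga>\tau}$ of \eqref{eq:AsDFADer3}.

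Then I would run the splitting argument of Lemma~\ref{lem:Convinfsumtofalsealarm}: the tail $k\ge k_{0}=\ln\pa{1/\Al}/\ln\pa{1-e^{-1}}$ is bounded by $\sum_{k\ge k_{0}}Pr\pa{\Ga=k}\dot{=}\Al$, while the head $k<k_{0}$ is bounded by $k_{0}$ times the maximal element, which by the generalized Lemma~\ref{lem:EachElementAsymptiticEq} is $\dot{=}$ a false-alarm element and hence $\le\Al$. Since $k_{0}=\Theta\pa{\ln\pa{1/\Al}}$ is subexponential it does not alter the decay rate, so $Pr\pa{\tau=\Ga+i}\dot{=}\Al$ for each fixed $i\in\pac{0,\dots,K}$. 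Summing the $K+1$ terms and using $\ln\pa{\pa{K+1}\max_{i}a_{i}}/\ln\Al\to\ln\pa{\max_{i}a_{i}}/\ln\Al$ as $\Al\to0$ finally yields $Pr\pa{\Ga\le\tau\le\Ga+K}\dot{=}\Al$, as claimed in Theorem~\ref{th:DelayGuarantee}.

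The main obstacle I expect is the same delicate point flagged inside Lemma~\ref{lem:Convinfsumtofalsealarm}: because the cut-off $k_{0}$ grows with $\Al$, the per-element asymptotic equalities must converge to equality \emph{uniformly} in $k$, or the head sum could in principle accumulate error. This uniformity is preserved here because replacing the two $L$-factors by $i+1$ of them (with $i\le K$ fixed) leaves the dominant integration region $|a|\le\ln^{1/2-\eps}\pa{1/\Al}$ and the governing functions $L,f_{0},f_{1}$ unchanged for every $k$, and the prefactor $\frac{e^{-1}}{\pa{1-e^{-1}}^{i+1}}$ is a subexponential constant that is irrelevant to the $\dot{=}$ relation. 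Everything else is bookkeeping over a finite sum, so once this uniform convergence is checked the corollary follows.
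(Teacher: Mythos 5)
Your proposal is correct and follows essentially the same route as the paper's own proof, which simply decomposes $Pr\pa{\Ga\le\tau\le\Ga+K}=\sum_{i=0}^{K}Pr\pa{\tau=\Ga+i}$ and asserts that each term is $\dot{=}\Al$ ``along the same lines as'' Lemma~\ref{lem:EachElementAsymptiticEq} and Lemma~\ref{lem:Convinfsumtofalsealarm}. In fact you supply exactly the details the paper leaves implicit---the generalized Bayes identity with $i+1$ likelihood-ratio factors, the observation that \eqref{eq:ExpEqBetweenPDFs} already covers such products, and the uniformity-in-$k$ check---so your write-up is a faithful (and more complete) rendering of the intended argument.
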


\begin{proof}
When $K$ is a constant that does not depend on $\Al$, the proof for $Pr\pa{\tau=\Ga+i}\dot{=}\Al$ for $0\le i\le K$, follows along the same lines as Lemma \ref{lem:EachElementAsymptiticEq} and Lemma\ref{lem:Convinfsumtofalsealarm}. Therefore:
{\small
\beqn{}
{
Pr\pa{\Ga\le \tau\le\Ga+K}=\sum_{i=0}^{K} Pr\pa{\tau=\Ga+i}\dot{=}\Al
}}
which concludes the proof.
\end{proof}

\bibliographystyle{ACM-Reference-Format-Journals}
\bibliography{ICCPS2013}



\medskip

\end{document}